\theoremstyle{plain}
\newtheorem{thm}{\protect\theoremname}
  \theoremstyle{definition}
  \newtheorem{defn}[thm]{\protect\definitionname}
  \theoremstyle{remark}
  \newtheorem{rem}[thm]{\protect\remarkname}
  \theoremstyle{plain}
  \newtheorem{lem}[thm]{\protect\lemmaname}
  \theoremstyle{plain}
  \newtheorem{prop}[thm]{\protect\propositionname}
\DeclareMathOperator{\cat}{cat}
  \providecommand{\definitionname}{Definition}
  \providecommand{\lemmaname}{Lemma}
  \providecommand{\propositionname}{Proposition}
  \providecommand{\remarkname}{Remark}
\providecommand{\theoremname}{Theorem}
\begin{document}

\title[Solutions for KGM systems on a Riemannian manifold]{Number and profile of low energy solutions for singularly perturbed
Klein Gordon Maxwell systems on a Riemannian manifold}

\author{Marco Ghimenti}
\address[Marco Ghimenti] {Dipartimento di Matematica,
  Universit\`{a} di Pisa, via F. Buonarroti 1/c, 56127 Pisa, Italy}
\email{marco.ghimenti@dma.unipi.it.}
\author{Anna Maria Micheletti}
\address[Anna Maria Micheletti] {Dipartimento di Matematica,
  Universit\`{a} di Pisa, via F. Buonarroti 1/c, 56127 Pisa, Italy}
\email{a.micheletti@dma.unipi.it.}

\begin{abstract}
Given a 3-dimensional Riemannian manifold $(M,g)$, we investigate the existence of positive solutions of the  
Klein-Gordon-Maxwell system
$$
\left\{ \begin{array}{cc}
-\varepsilon^{2}\Delta_{g}u+au=u^{p-1}+\omega^{2}(qv-1)^{2}u & \text{in }M\\
-\Delta_{g}v+(1+q^{2}u^{2})v=qu^{2} & \text{in }M 
\end{array}\right.
$$
and  Schr\"odinger-Maxwell system 
$$
\left\{ \begin{array}{cc}
-\varepsilon^{2}\Delta_{g}u+u+\omega uv=u^{p-1} & \text{in }M\\
-\Delta_{g}v+v=qu^{2} & \text{in }M
\end{array}\right.
$$
when $p\in(4,6). $  We prove that the number of one peak solutions depends on the topological 
properties of the manifold $M$, by means of the Lusternik Schnirelmann category.
\end{abstract}

\subjclass[2010]{35J60, 35J20, 35B40,58E30,81V10}
\date{\today}
\keywords{Riemannian manifolds, Klein-Gordon-Maxwell systems, Scrh\"odinger-Maxwell systems, 
Lusternik Schnirelmann category, one peak solutions}

\maketitle

\section{Introduction}

Let $(M,g)$ be a smooth, compact, boundaryless, $3$ dimensional
Riemannian manifold.

We consider the following singularly perturbed electrostatic Klein-Gordon-Maxwell
system 

\begin{equation}
\left\{ \begin{array}{cc}
-\varepsilon^{2}\Delta_{g}u+au=|u|^{p-2}u+\omega^{2}(qv-1)^{2}u & \text{ in }M\\
\\
-\Delta_{g}v+(1+q^{2}u^{2})v=qu^{2} & \text{ in }M\\
\\
u,v>0
\end{array}\right.\label{eq:kgms}
\end{equation}
where $\varepsilon>0$, $a>0$, $q>0$, $\omega\in(-\sqrt{a},\sqrt{a})$
and $2<p<6$, and the following Schroedinger Maxwell system.
\begin{equation}
\left\{ \begin{array}{cc}
-\varepsilon^{2}\Delta_{g}u+u+\omega uv=|u|^{p-2}u & \text{ in }M\\
\\
-\Delta_{g}v+v=qu^{2} & \text{ in }M\\
\\
u,v>0
\end{array}\right.\label{eq:sms}
\end{equation}
where $\varepsilon>0$, $q>0$, $\omega>0$.

Schroedinger Maxwell systems recently received considerable attention
from the mathematical community. In the pioneering paper \cite{BF}
Benci and Fortunato studied system (\ref{eq:sms}) when $\varepsilon=1$
and without nonlinearity. Regarding the system in a semiclassical
regime Ruiz \cite{R} and D\textquoteright{}Aprile-Wei \cite{DW1}
showed the existence of a family of radially symmetric solutions respectively
for $\Omega=\mathbb{R}^{3}$ or a ball. D\textquoteright{}Aprile-Wei
\cite{DW2} also proved the existence of clustered solutions in the
case of a bounded domain $\Omega$ in $\mathbb{R}^{3}$. 

Moreover, when $\varepsilon=1$ we have results of existence and nonexistence
of solutions for pure power nonlinearities $f(v)=|v|^{p-2}v$, $2<p<6$
or in presence of a more general nonlinearity \cite{AR,ADP,AP,BJL,DM,IV,K,PS,WZ}.

In particular, Siciliano \cite{S} proves an estimate on the number
of solution for a pure power nonlinearity when $p$ is subcritical
and close to the critical exponent. 

Klein-Gordon-Maxwell systems are widely studied in physics and in
mathematical physics (see for example \cite{CB,D,KM,MN,M}. In this
setting, there are results of existence and non existence of solutions
for subcritical nonlinear terms in a bounded domain $\Omega$ (see
\cite{AP2,BF1,C,DM2,DP,DPS1,DPS2,GM,Mu}). 

As far as we know, the first result concerning the Klein-Gordon systems
on manifolds is due to Druet-Hebey \cite{DH}. They prove uniform
bounds and the existence of a solution for the system (\ref{eq:kgms})
when $\varepsilon=1$, $a$ is positive function and the exponent
$p$ is either subcritical or critical, i.e. $p\in(2,6]$. In particular,
the existence of a solution in the critical case, i.e. $p=6$, is
obtained provided the function $a$ is suitable small with respect
to the scalar curvature of the metric $g$. 

In this paper we show that the topology of the manifold $(M,g)$ has
an effect on the number of positive solutions of the systems (\ref{eq:kgms})
and (\ref{eq:sms}) with low energy. Our results read as follows.
\begin{thm}
\label{thm:1}Let $4\le p<6$. For $\varepsilon$ small enough we
have at least $\cat(M)$  non constant positive solutions of (\ref{eq:kgms})
with low energy. These solutions have a unique maximum point $P_{\varepsilon}$
and $u_{\varepsilon}=W_{\varepsilon,P_{\varepsilon}}+\Psi_{\varepsilon}$
where $W_{\varepsilon,P_{\varepsilon}}$ is defined in (\ref{eq:defweps})
$\|\Psi_{\varepsilon}\|_{L^{\infty}}\rightarrow0$ as $\varepsilon\rightarrow0$.
\end{thm}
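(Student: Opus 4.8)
The strategy I would follow is the Benci--Cerami ``photography'' method, adapted to the coupled system. \emph{First}, I would eliminate the second equation. For fixed $u\in H^{1}_{g}(M)$ the linear problem $-\Delta_{g}v+(1+q^{2}u^{2})v=qu^{2}$ has, by Lax--Milgram, a unique solution $v=\Psi_{\varepsilon}(u)$; the map $u\mapsto\Psi_{\varepsilon}(u)$ is of class $C^{1}$, and one proves the pointwise bounds $0\le q\Psi_{\varepsilon}(u)\le 1$ together with quantitative smallness estimates for $\Psi_{\varepsilon}(u)$ in terms of $u$. Plugging $v=\Psi_{\varepsilon}(u)$ into the first equation, the positive solutions of \eqref{eq:kgms} become the positive critical points of a $C^{1}$ reduced functional $J_{\varepsilon}$ on $H^{1}_{g}(M)$ whose leading part is the usual singularly perturbed functional $\frac12\int_{M}(\varepsilon^{2}|\nabla_{g}u|^{2}+au^{2})-\frac1p\int_{M}(u^{+})^{p}$ plus a coupling term built from $\Psi_{\varepsilon}(u)$; the hypothesis $p\ge4$ is exactly what makes this coupling term a lower-order perturbation at every stage below. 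The system \eqref{eq:sms} is handled by the same reduction, the coupling being even more harmless.

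\emph{Second}, I would set up the limit problem and the two comparison maps. The relevant limit equation on $\mathbb{R}^{3}$ is $-\Delta U+aU=U^{p-1}$, whose unique positive radial ground state $U$ has Nehari/mountain-pass energy $m_{\infty}$. Using geodesic normal coordinates and a cut-off, one defines the bubble $W_{\varepsilon,P}$ centred at $P\in M$ --- this is formula \eqref{eq:defweps} --- and a test map $\Phi_{\varepsilon}\colon M\to\mathcal{N}_{\varepsilon}$, $P\mapsto t_{\varepsilon}(P)W_{\varepsilon,P}$, into the Nehari manifold of $J_{\varepsilon}$; an energy expansion (bubble energy $+$ curvature corrections $+$ error from $\Psi_{\varepsilon}$) gives $\sup_{P\in M}J_{\varepsilon}(\Phi_{\varepsilon}(P))=m_{\infty}+o(1)$ as $\varepsilon\to0$. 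In the other direction I would use the barycentre map $\beta$: after rescaling, and with a ``global compactness'' splitting lemma, one shows that any $u\in\mathcal{N}_{\varepsilon}$ with $J_{\varepsilon}(u)\le m_{\infty}+\delta$ (for $\delta$, then $\varepsilon$, small) is concentrated near a single point, so $\beta(u)$ lies in a fixed neighbourhood $M_{\rho}$ of $M$ and $\beta(\Phi_{\varepsilon}(P))$ is $\varepsilon$-close to $P$. Hence $\beta\circ\Phi_{\varepsilon}$ is homotopic to $\mathrm{id}_{M}$, and the standard Lusternik--Schnirelmann estimate yields $\cat\bigl(J_{\varepsilon}^{m_{\infty}+\delta}\cap\mathcal{N}_{\varepsilon}\bigr)\ge\cat(M)$, i.e.\ at least $\cat(M)$ critical points of $J_{\varepsilon}$ at energy below $m_{\infty}+\delta$, provided the Palais--Smale condition holds there.

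\emph{Third}, I would verify (PS) for $J_{\varepsilon}$ in the energy range $(0,m_{\infty}+\delta)$ by the usual concentration-compactness argument on the manifold: a bounded (PS) sequence either converges or splits off a nontrivial bubble of energy $\ge m_{\infty}$, which at these levels is excluded; the coupling term passes to the limit by the $C^{1}$ dependence and the bounds on $\Psi_{\varepsilon}$. Positivity follows from the maximum principle (one works with $u^{+}$ from the start) and non-constancy from the concentration. Finally, the profile statement $u_{\varepsilon}=W_{\varepsilon,P_{\varepsilon}}+\Psi_{\varepsilon}$ with $\|\Psi_{\varepsilon}\|_{L^{\infty}}\to0$ is read off from the same splitting lemma: a solution at energy $m_{\infty}+o(1)$ is, up to rescaling, a single bubble plus a remainder small in $H^{1}$, hence (by elliptic regularity and uniform $L^{\infty}$ bounds) small in $L^{\infty}$, and $P_{\varepsilon}$ is its unique maximum point for $\varepsilon$ small.

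\emph{Main obstacle.} The genuinely delicate points are (i) the uniform energy expansion $\sup_{P}J_{\varepsilon}(\Phi_{\varepsilon}(P))=m_{\infty}+o(1)$ together with the symmetric compactness lemma characterising low-energy functions on $\mathcal{N}_{\varepsilon}$ as single bubbles --- both of which require controlling the coupling $\Psi_{\varepsilon}(u)$ uniformly in $P$ and along sequences, and this is precisely where the restriction $p\ge4$ is used --- and (ii) making the barycentre map and the homotopy $\beta\circ\Phi_{\varepsilon}\simeq\mathrm{id}_{M}$ work on the curved compact manifold $M$ rather than on a Euclidean domain. The remaining estimates, while lengthy, are by now standard.
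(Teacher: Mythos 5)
Your overall scheme --- the Benci--Fortunato reduction $v=\psi(u)$, the Nehari manifold, the photography method with a test map $\Phi_{\varepsilon}$ and a barycentre map $\beta$, the Lusternik--Schnirelmann estimate, and a blow-up analysis for the profile --- is exactly the route the paper takes. But there is one concrete error: you misidentify the limit problem. You treat the whole coupling term as a lower-order perturbation and take $-\Delta U+aU=U^{p-1}$ as the limit equation on $\mathbb{R}^{3}$. In the blow-up limit one has $\psi(u)\to 0$, hence $(q\psi(u)-1)^{2}\to 1$, so the term $\omega^{2}(q\psi(u)-1)^{2}u$ in the first equation does \emph{not} disappear: it converges to $\omega^{2}u$ and must be absorbed into the linear part. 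The correct limit equation is $-\Delta U+(a-\omega^{2})U=U^{p-1}$, which is why the paper assumes $\omega\in(-\sqrt{a},\sqrt{a})$ and builds $U$, $m_{\infty}$, the norm $\|\cdot\|_{\varepsilon}$ and the bubbles $W_{\varepsilon,\xi}$ with the weight $a-\omega^{2}$; only the genuinely quadratic-in-$\psi$ contribution $\frac{\omega^{2}q}{\varepsilon^{3}}\int_{M}u^{2}\psi(u)\,d\mu_{g}$ is lower order (it is $O(\varepsilon^{2})$ on bubbles). With your choice of $U$ the two halves of the argument do not match: the energy expansion $\sup_{P}J_{\varepsilon}(\Phi_{\varepsilon}(P))=m_{\infty}+o(1)$ and the lower bound in the concentration lemma would be computed with respect to different limit equations, and the comparison of levels that drives the category estimate would not close.

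A secondary point: for fixed $\varepsilon>0$ no concentration-compactness or splitting lemma is needed for Palais--Smale. Since $p<6$ is subcritical and $M$ is compact, the embedding $H_{g}^{1}(M)\hookrightarrow L_{g}^{p}(M)$ is compact and PS holds at every level; the paper proves this directly, and $p\ge 4$ enters only through the sign of the coupling term, to get boundedness of PS sequences and to make the Nehari set a natural $C^{2}$ constraint. As stated, your splitting argument is also logically incomplete, since a single bubble at level $m_{\infty}\le m_{\infty}+\delta$ is not excluded by the energy restriction alone. Finally, the uniqueness of the maximum point is not simply ``read off'' a splitting lemma: the paper runs a two-bump exclusion argument (local $C^{2}$ convergence to $U$ near each maximum point, the lower bound $u_{\varepsilon}(P_{\varepsilon})^{p-2}>a-\omega^{2}$ from the equation, and the bound $I_{\varepsilon}(u_{\varepsilon})<2m_{\infty}$, first ruling out maxima at distance $O(\varepsilon)$ and then at larger distance), and your sketch would need that step spelled out.
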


\begin{thm}
Let $4<p<6$. For $\varepsilon$ small enough we have at least 
$\cat(M)$ non constant positive solutions of (\ref{eq:sms}) with low energy.
These solutions have a unique maximum point $P_{\varepsilon}$ and
$u_{\varepsilon}=W_{\varepsilon,P_{\varepsilon}}+\Psi_{\varepsilon}$
where $W_{\varepsilon,P_{\varepsilon}}$ is defined in (\ref{eq:defweps})
$\|\Psi_{\varepsilon}\|_{L^{\infty}}\rightarrow0$ as $\varepsilon\rightarrow0$.
\end{thm}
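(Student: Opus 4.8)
The plan is to follow the variational scheme based on the Lusternik--Schnirelmann category (the ``photography'' method), in the spirit of Siciliano~\cite{S} for the flat case, the novelty being the uniform control of the non-local coupling as $\varepsilon\to0$. \emph{Step~1 (reduction to a single equation).} For fixed $u\in H^1_g(M)$ the operator $-\Delta_g+1$ is an isomorphism of $H^1_g(M)$, so the second equation in (\ref{eq:sms}) has a unique solution $v=v_u$; the map $u\mapsto v_u$ is of class $C^1$, $0\le v_u$, $\|v_u\|_{H^1_g}\le C\|u\|_{H^1_g}^2$ with $C$ independent of $\varepsilon$, and $v_{tu}=t^2v_u$. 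Substituting, solutions of (\ref{eq:sms}) are exactly the critical points of
\[
I_\varepsilon(u)=\tfrac12\int_M\bigl(\varepsilon^2|\nabla_g u|^2+u^2\bigr)\,d\mu_g+\tfrac{\omega}{2}\int_M v_u u^2\,d\mu_g-\tfrac1p\int_M (u^+)^p\,d\mu_g ,
\]
which we study, after multiplying by $\varepsilon^{-3}$, on the Nehari manifold $\mathcal N_\varepsilon=\{u\in H^1_g(M)\setminus\{0\}:\ I_\varepsilon'(u)[u]=0\}$. This is where $p>4$ enters: since $t\mapsto I_\varepsilon(tu)=\tfrac{t^2}{2}\|u\|_\varepsilon^2+\tfrac{\omega t^4}{2}\int v_u u^2-\tfrac{t^p}{p}\int(u^+)^p$ and $p>4$, the last term always dominates the degree-$4$ coupling term, so $I_\varepsilon(tu)\to-\infty$, $\mathcal N_\varepsilon$ is a $C^1$ natural constraint, and on it $I_\varepsilon\ge(\tfrac12-\tfrac1p)\|u\|_\varepsilon^2$; for $p=4$ this mountain--pass geometry can be destroyed by the coupling when $\omega$ is large, which is why the Schr\"odinger--Maxwell statement requires the strict inequality, whereas for Klein--Gordon part of the analogous coupling is absorbed into the mass term and $p=4$ is admissible.

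\emph{Step~2 (model problem, test maps).} Let $U$ be the unique positive radial solution of $-\Delta U+U=U^{p-1}$ on $\mathbb R^3$, with mountain--pass level $m_\infty$, and let $W_{\varepsilon,\xi}$ ($\xi\in M$) be the test functions of (\ref{eq:defweps}), built from $U$ via normal coordinates, rescaling by $\varepsilon$, and a cut--off. Expanding $I_\varepsilon$ on $W_{\varepsilon,\xi}$ and using the bound on $v_{W_{\varepsilon,\xi}}$ together with $p>4$ to see that the coupling contributes only $o(\varepsilon^3)$, we project onto $\mathcal N_\varepsilon$ and obtain a continuous map $\Phi_\varepsilon:M\to\mathcal N_\varepsilon$ with $\varepsilon^{-3}I_\varepsilon(\Phi_\varepsilon(\xi))\to m_\infty$ uniformly in $\xi\in M$.

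\emph{Step~3 (barycentre and homotopy).} Embed $M$ isometrically in some $\mathbb R^N$ and fix a small tubular neighbourhood $M_\rho$ with retraction onto $M$. Using a concentration--compactness/splitting argument one shows that, for $\varepsilon$ small, every $u\in\mathcal N_\varepsilon$ with $\varepsilon^{-3}I_\varepsilon(u)\le m_\infty+\delta$ concentrates near a single point, so that a barycentre map $\beta:\{\varepsilon^{-3}I_\varepsilon\le m_\infty+\delta\}\cap\mathcal N_\varepsilon\to M_\rho$ is well defined and continuous, and $\beta\circ\Phi_\varepsilon\simeq\mathrm{id}_M$. This is the heart of the matter: the concentration has to be quantitative and uniform on the whole sublevel set, and one must check that the non-local term, although it contributes to $I_\varepsilon'$ and so modifies $\mathcal N_\varepsilon$, does not alter the limiting one--bump profile $U$.

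\emph{Step~4 (conclusion).} On the compact manifold the only loss of compactness is bubbling, excluded below $2m_\infty$, so $I_\varepsilon|_{\mathcal N_\varepsilon}$ satisfies the Palais--Smale condition in the range $(-\infty,m_\infty+\delta]$; the abstract Lusternik--Schnirelmann estimate applied to $M\xrightarrow{\Phi_\varepsilon}\{\varepsilon^{-3}I_\varepsilon\le m_\infty+\delta\}\cap\mathcal N_\varepsilon\xrightarrow{\beta}M_\rho$ with $\beta\circ\Phi_\varepsilon\simeq\mathrm{id}$ then gives at least $\cat(M)$ critical points of $I_\varepsilon|_{\mathcal N_\varepsilon}$ in that sublevel. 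Each such $u_\varepsilon$ is a free critical point, so $(u_\varepsilon,v_{u_\varepsilon})$ solves (\ref{eq:sms}); $u_\varepsilon>0$ by testing the first equation with $u_\varepsilon^-$ and the strong maximum principle, and $v_{u_\varepsilon}>0$ from the second equation. Since $I_\varepsilon(u_\varepsilon)\le(m_\infty+\delta)\varepsilon^3\to0$ while the unique nonzero constant solution has a fixed positive energy, the $u_\varepsilon$ are non-constant; finally, the concentration of Step~3 together with elliptic regularity yields a unique maximum point $P_\varepsilon$ and the decomposition $u_\varepsilon=W_{\varepsilon,P_\varepsilon}+\Psi_\varepsilon$ with $\|\Psi_\varepsilon\|_{L^\infty}\to0$. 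The hard part will be Step~3, i.e.\ making the concentration estimate uniform over the sublevel set while keeping track of the coupling.
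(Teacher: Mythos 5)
Your proposal follows essentially the same scheme as the paper: reduction via the auxiliary map $u\mapsto v_u$, the Nehari manifold (where $p>4$ is exactly what the paper needs for the SM case, since $v_{tu}=t^2v_u$ makes the coupling quartic), the test map $\Phi_\varepsilon$ built from $W_{\varepsilon,\xi}$, the barycentre map $\beta$ with a uniform concentration estimate on the sublevel $m_\infty+\delta$, the Lusternik--Schnirelmann argument, and the profile description via a unique maximum point and $C^2_{\mathrm{loc}}$ convergence to $U$. The only cosmetic difference is your justification of Palais--Smale (you invoke a bubbling threshold $2m_\infty$, whereas on a compact manifold with subcritical $p$ the paper gets PS at all levels directly from compact embeddings), but this does not affect the argument.
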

In \cite{GMP} the authors show that also the geometry of the manifold
$(M,g)$ has an effect on the number of positive solutions. More precisely
the authors prove that any $C^{1}$-stable critical set of the scalar
curvature $S_{g}$ of $(M,g)$ produces a solution for $2\le p<6$.

Moreover, in \cite{GMP} it is proved that generically with respect
to the metric $g$, for $\varepsilon$ small the (\ref{eq:kgms})
and the (\ref{eq:sms}) systems have at least $P_{1}(M)$ solutions
where $P_{t}(M)$ is the Poincaré polynomial of the manifold $M$
in the variable $t$ and $P_{1}(M)$ is the polynomial $P_{t}(M)$
evalued for $t=1$.

Conluding, for any metric $g$ we have at least $\cat M$ positive
low energy solutions of KGM for $4\le p<6$ and SM systems for $4<p<6$,
and generically with respect to the metric $g$, we have at least
$P_{1}(M)\ge\cat M$ positive low energy solutions of (\ref{eq:kgms})
and (\ref{eq:sms}) for $2\le p<6$.

In the following we always assume $4\le p<6$ when dealing with KGM
systems and $4<p<6$ when dealing with SM systems.

\section{Notations and definitions}

In the following we use the following notations. 
\begin{itemize}
\item $B(x,r)$ is the ball in $\mathbb{R}^{3}$ centered in $x$ with radius
$r$.
\item $B_{g}(x,r)$is the geodesic ball in $M$ centered in $x$ with radius
$r$.
\item $d_{g}(\cdot,\cdot)$is the geodesic distance in $M$.
\item The function $U(x)$ is the unique positive spherically symmetric
function in $\mathbb{R}^{3}$ such that
\[
-\Delta U+(a-\omega^{2})U=U^{p-1}\text{ in }\mathbb{R}^{3}
\]
we remark that $U$ and its first derivative decay exponentially at
infinity.
\item Given $\varepsilon>0$ we define $U_{\varepsilon}(x)=U\left(\frac{x}{\varepsilon}\right)$.
\item Let $\chi_{r}:\mathbb{R}^{+}\rightarrow\mathbb{R}^{+}$ be a smooth
cut off function such that $\chi_{r}\equiv1$ on $[0,r/2)$, $\chi_{r}\equiv0$
on $(r,+\infty)$, $|\chi'_{r}|\le2/r$ and $|\chi''_{r}|\le2/r^{2}$,
$r$ being the injectivity radius of $M$. 
\item Fixed $\xi\in M$ and $\varepsilon>0$ we define 
\begin{equation}
W_{\varepsilon,\xi}=\left\{ \begin{array}{ccc}
U_{\varepsilon}\left(\exp_{\xi}^{-1}(x)\right)\chi_{r}\left(\left|\exp_{\xi}^{-1}(x)\right|\right) &  & x\in B_{g}(\xi,r);\\
0 &  & \text{elsewhere}.
\end{array}\right.\label{eq:defweps}
\end{equation}

\item We denote by $\text{supp }\varphi$ the support of the function $\varphi$.
\item We define 
\[
m_{\infty}=\inf_{\int_{\mathbb{R}^{3}}|\nabla v|^{2}+(a-\omega^{2})v^{2}dx=|v|_{L^{p}(\mathbb{R}^{3})}^{p}}\frac{1}{2}\int_{\mathbb{R}^{3}}|\nabla v|^{2}+(a-\omega^{2})v^{2}dx-\frac{1}{p}|v|_{L^{p}(\mathbb{R}^{3})}^{p}
\]

\end{itemize}
We also use the following notation for the different norms for $u\in H_{g}^{1}(M)$:
\begin{eqnarray*}
\|u\|_{\varepsilon}^{2}=\frac{1}{\varepsilon^{3}}\int_{M}\varepsilon^{2}|\nabla_{g}u|^{2}+(a-\omega^{2})u^{2}d\mu_{g} &  & |u|_{\varepsilon,p}^{p}=\frac{1}{\varepsilon^{3}}\int_{M}|u|^{p}d\mu_{g}\\
\|u\|_{H_{g}^{1}}^{2}=\|u\|_{g}^{2}=\int_{M}|\nabla_{g}u|^{2}+u^{2}d\mu_{g} &  & \|u\|_{L_{g}^{p}}^{p}=|u|_{p,g}^{p}=\int_{M}|u|^{p}d\mu_{g}
\end{eqnarray*}
and analogously, for a function $u\in H^{1}(\mathbb{R}^{3})$
\begin{eqnarray*}
\|u\|_{H^{1}}^{2}=\int_{\mathbb{R}^{3}}|\nabla u|^{2}+u^{2}dx &  & \|u\|_{a}^{2}=\int_{\mathbb{R}^{3}}|\nabla u|^{2}+(a-\omega^{2})u^{2}dx
\end{eqnarray*}
\[
\|u\|_{L^{p}}^{p}=\int_{\mathbb{R}^{3}}|u|^{p}dx
\]
and we denote by $H_{\varepsilon}$ the Hilbert space $H_{g}^{1}(M)$
endowed with the $\|\cdot\|_{\varepsilon}$ norm.
\begin{defn}
\label{def:cat}Let $X$ a topological space and consider a closed
subset $A\subset X$. We say that $A$ has category $k$ relative
to $X$ ($\cat_{M}A=k$) if $A$ is covered by $k$ closed sets $A_{j}$,
$j=1,\dots,k$, which are contractible in $X$, and $k$ is the minimum
integer with this property. We simply denote $\cat X=\cat_{X}X$.\end{defn}
\begin{rem}
Let $X_{1}$ and $X_{2}$ be topological spaces. If $g_{1}:X_{1}\rightarrow X_{2}$
and $g_{2}:X_{2}\rightarrow X_{1}$ are continuous operators such
that $g_{2}\circ g_{1}$ is homotopic to the identity on $X_{1}$,
then $\cat X_{1}\leq\cat X_{2}$ . 
\end{rem}
We recall the following classical result 
\begin{thm}
Let $J$ be a $C^{1,1}$ real functional on a complete $C^{1,1}$
manifold $\mathcal{N}$. If $J$ is bounded from below and satisfies
the Palais Smale condition then has at least $\cat(J^{d})$ critical
point in $J^{d}$ where $J^{d}=\{u\in\mathcal{N}\ :\ J(u)<d\}$. Moreover
if $\mathcal{N}$ is contractible and $\cat J^{d}>1$, there exists
at least one critical point $u\not\in J^{d}$
\end{thm}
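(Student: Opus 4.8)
The plan is to run the classical Lusternik--Schnirelmann minimax scheme on the open sublevel set $J^{d}$, using the negative gradient flow of $J$ together with the Palais--Smale condition. Throughout, all categories are taken relative to $J^{d}$, so that $\cat(J^{d})=\cat_{J^{d}}(J^{d})$. Since $J\in C^{1,1}$, I would fix a locally Lipschitz pseudo-gradient vector field $V$ for $J$ and consider the normalized flow
\[
\frac{\partial}{\partial t}\eta(t,u)=-\frac{V(\eta(t,u))}{1+\|V(\eta(t,u))\|},\qquad\eta(0,u)=u,
\]
whose normalization bounds the length of each flow line on $[0,t]$ by $t$; completeness of $\mathcal{N}$ then makes $\eta$ defined for all $t\ge0$, jointly continuous, with $t\mapsto J(\eta(t,u))$ non-increasing. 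Hence every $J^{c}$ is positively invariant and the flow restricts to a flow on $J^{d}$; moreover $J^{d}$, being open in a manifold, is an ANR, so $\cat_{J^{d}}$ is subadditive, monotone, non-increasing under deformations inside $J^{d}$, and every closed subset has a neighbourhood of the same category.

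The key technical tool is the deformation lemma: by (PS) the critical set $K_{c}=\{u\in\mathcal{N}:J(u)=c,\ J'(u)=0\}$ is compact, and for every neighbourhood $\mathcal{O}$ of $K_{c}$ there exist $\varepsilon>0$ and a map $\sigma:J^{d}\to J^{d}$ homotopic to the identity of $J^{d}$ with $\sigma(J^{c+\varepsilon}\setminus\mathcal{O})\subset J^{c-\varepsilon}$; one just follows the flow for a continuously varying time, exploiting that $\|J'\|\ge\delta>0$ on $\{|J-c|\le\varepsilon\}\setminus\mathcal{O}$ by (PS), with $\mathcal{O}=\emptyset$ allowed when $K_{c}=\emptyset$. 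With this in hand, set $m=\cat(J^{d})$ and define, for $i=1,\dots,m$,
\[
c_{i}=\inf\{t\le d:\ \cat_{J^{d}}(J^{t})\ge i\}.
\]
Since $t\mapsto\cat_{J^{d}}(J^{t})$ is non-decreasing, vanishes for $t<\inf_{\mathcal{N}}J>-\infty$, and equals $m$ at $t=d$, each $c_{i}$ is finite with $\inf_{\mathcal{N}}J\le c_{1}\le\dots\le c_{m}\le d$. If some $c_{i}$ were a regular value, the deformation lemma (with $\mathcal{O}=\emptyset$) would give $\cat_{J^{d}}(J^{c_{i}+\varepsilon})\le\cat_{J^{d}}(J^{c_{i}-\varepsilon})\le i-1$, contradicting $\cat_{J^{d}}(J^{c_{i}+\varepsilon})\ge i$; hence every $c_{i}$ is a critical value, and since the set of critical values is closed (by (PS)) and $d$ is taken regular in the applications, $c_{i}<d$ and the associated critical points lie in $J^{d}$.

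For the multiplicity, suppose $c_{i}=\dots=c_{i+k}=c$ with $k\ge1$ and pick $\mathcal{O}\supset K_{c}$ with $\cat_{J^{d}}(\mathcal{O})=\cat_{J^{d}}(K_{c})$; from $J^{c+\varepsilon}\subset(J^{c+\varepsilon}\setminus\mathcal{O})\cup\mathcal{O}$, the deformation of $J^{c+\varepsilon}\setminus\mathcal{O}$ into $J^{c-\varepsilon}$, and subadditivity,
\[
i+k\le\cat_{J^{d}}(J^{c+\varepsilon})\le\cat_{J^{d}}(J^{c-\varepsilon})+\cat_{J^{d}}(K_{c})\le(i-1)+\cat_{J^{d}}(K_{c}),
\]
so $\cat_{J^{d}}(K_{c})\ge k+1$. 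If $J$ had only finitely many critical points in $J^{d}$, each such $K_{c}$ would be finite with $\#K_{c}\ge\cat_{J^{d}}(K_{c})$; summing over the distinct values among $c_{1},\dots,c_{m}$ with their multiplicities (which add up to $m$) then produces at least $m$ critical points, all lying in $J^{d}$. If instead $J$ has infinitely many critical points in $J^{d}$ there is nothing to prove. This gives the first assertion.

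Finally, for the second assertion, assume $\mathcal{N}$ is contractible, $\cat(J^{d})>1$, and---for contradiction---that $J$ has no critical point outside $J^{d}$. Being bounded below and satisfying (PS), $J$ attains its minimum, so its set of critical values is a non-empty compact subset of $(-\infty,d)$; fix $e$ strictly between its maximum and $d$. As there are no critical values in $[e,+\infty)$, the same argument as in the deformation lemma shows the normalized gradient flow carries every point of $\mathcal{N}$ into $\{J\le e\}$ in finite, continuously varying time, so $\mathcal{N}$ deformation retracts onto $\{J\le e\}$; the same flow, restricted to the positively invariant set $J^{d}$, deformation retracts $J^{d}$ onto $\{J\le e\}\subset J^{d}$. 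Being a deformation retract of the contractible space $\mathcal{N}$, $\{J\le e\}$ is contractible, hence $\cat_{J^{d}}(\{J\le e\})=1$ and therefore $\cat(J^{d})=\cat_{J^{d}}(J^{d})=\cat_{J^{d}}(\{J\le e\})=1$, contradicting $\cat(J^{d})>1$. So a critical point outside $J^{d}$ exists. The main obstacle in this scheme is the quantitative deformation lemma---excising near $K_{c}$ a set of exactly the right category and checking that the time-stopped flow is a genuine self-deformation of $J^{d}$; the remainder is formal bookkeeping with the category, and the boundary cases ($d$ a critical value, or $c_{i}=d$) are bypassed by taking $d$ regular.
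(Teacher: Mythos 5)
The paper itself offers no proof of this statement: it is ``recalled'' as a classical Lusternik--Schnirelmann theorem, so your proposal can only be compared with the canonical argument, which is indeed the one you reproduce (pseudo-gradient flow, deformation lemma, minimax levels $c_{i}$ defined through the relative category of sublevels, counting via $\cat_{J^{d}}(K_{c})$). Most of this is correct. There is, however, one genuine gap, and you flag it yourself: you dismiss the possibility $c_{i}=d$ (equivalently, the interference of critical points at level exactly $d$, which do not belong to $J^{d}$ and destroy any uniform bound $\|J'\|\ge\delta$ on strips $\{c\le J<d\}$) by ``taking $d$ regular in the applications''. The theorem as stated has no such hypothesis, and in this paper it is applied with $d=m_{\infty}+\delta$ for small $\delta$, where regularity of $d$ cannot be guaranteed (Sard's theorem is not available in this infinite dimensional setting). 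As written, your argument only yields critical points at levels $c_{i}<d$ and leaves open the case in which the top minimax levels stick to $d$.

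The gap is repairable by a reduction you already invoke, but only at the counting stage and too late: assume from the outset that $J$ has finitely many critical points in $J^{d}$ (otherwise there is nothing to prove), let $\bar{c}<d$ be the largest of their critical values, and show that $J^{d}$ can be deformed, inside itself, into $\{J\le\bar{c}+\varepsilon\}$. Since no uniform gradient bound holds up to level $d$, argue along flow lines: a trajectory starting in $J^{d}$ stays in $J^{d}$ because $J$ decreases, and if it never entered $\{J\le\bar{c}+\varepsilon\}$ it would generate a Palais--Smale sequence at some level in $[\bar{c}+\varepsilon,d)$, hence (by PS) a critical point lying in $J^{d}$ with value in that range, a contradiction; entry times into the open set $\{J<\bar{c}+2\varepsilon\}$ are then locally bounded, so a continuous stopping time exists and produces the deformation. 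Consequently every minimax level satisfies $c_{i}\le\bar{c}+\varepsilon<d$, the regularity assumption on $d$ becomes unnecessary, and all the critical points you construct lie in $J^{d}$. The same flow-line device is what actually justifies, in your second part, sending $\mathcal{N}$ and $J^{d}$ into $\{J\le e\}$; note that this gives a homotopy domination rather than a deformation retraction, but composing the flow homotopy on $\{J\le e\}$ with the contraction of $\mathcal{N}$ pushed forward by the stopping map shows that $\{J\le e\}$ is contractible in itself, which is all your category argument needs.
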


\section{Key estimates}

In order to overcome the problems given by the competition between
$u$ and $v$, using an idea of Benci and Fortunato \cite{BF1}, we
introduce the map $\psi:H_{g}^{1}(M)\rightarrow H_{g}^{1}(M)$ defined
by the equation
\begin{equation}
-\Delta_{g}\psi(u)+(1+q^{2}u^{2})\psi(u)=qu^{2}\text{ in case of KGM systems }\label{eq:psikgm}
\end{equation}
\begin{equation}
-\Delta_{g}\psi(u)+\psi(u)=qu^{2}\text{ in case of SM systems }\label{eq:psiSMS}
\end{equation}
The map $\psi$ is of class $C^{2}$. Its first derivative
\[
h\rightarrow\psi'(u)[h]=V_{u}(h)
\]
 is the map defined by the equation
\begin{equation}
-\Delta_{g}V_{u}(h)+(1+q^{2}u^{2})V_{u}(h)=2qu(1-q\psi(u))h\text{ in case of KGM systems}\label{eq:VhKGM}
\end{equation}
\begin{equation}
-\Delta_{g}V_{u}(h)+V_{u}(h)=2quh\text{ in case of SM systems}\label{eq:VhSMS}
\end{equation}
its second derivative $(h,k)\rightarrow\psi''(u)[h,k]=T_{u}(h,k)$
is the map defined by the equation
\begin{multline*}
-\Delta_{g}T_{u}(h,k)+(1+q^{2}u^{2})T_{u}(h,k)=-2q^{2}u(kV_{u}(h)+hV_{u}(k))+2q(1-q\psi(u))hk\\
\text{ in case of KGM systems}
\end{multline*}
\[
-\Delta_{g}T_{u}(h,k)+T_{u}(h,k)=2qhk\text{ in case of SM systems}
\]
Moreover in case of KGM systems by the maximum principle we have that
$0<\psi(u)\le1/q$, while in case of SM systems we have $\psi(u)>0$.
\begin{rem}
\label{rem:Vh}We have that $\|V_{u}(h)\|_{H_{g}^{1}}\le c|h|_{3,g}|u|_{3,g}$.
For SM systems is straightforward. In the case of KGM systems we have,
by (\ref{eq:VhKGM})
\begin{eqnarray*}
\|V_{u}(h)\|_{H_{g}^{1}}^{2} & \le & \|V_{u}(h)\|_{H_{g}^{1}}^{2}+\int_{M}q^{2}u^{2}V_{u}^{2}(h)d\mu_{g}\le\\
 & \le & \int_{M}2qu(1-q\psi(u))hV_{u}(h)d\mu_{g}\le c\|V_{u}(h)\|_{H_{g}^{1}}|h|_{3,g}|u|_{3,g}.
\end{eqnarray*}

Furthermore, for KGM systems, it holds $0\le V_{u}(u)\le2/q$ for
any $u$. (see \cite{DH})\end{rem}
\begin{lem}
\label{lem:Heb}The map $\Theta:H_{g}^{1}(M)\rightarrow\mathbb{R}$
given by 

\[
\Theta(u)=\frac{1}{2}\int_{M}(1-q\psi(u))u^{2}d\mu_{g}
\]
is $C^{1}$ and for any $u,h\in H_{g}^{1}(M)$ 
\[
\Theta'(u)[h]=\int_{M}(1-q\psi(u))^{2}uhd\mu_{g}
\]

\end{lem}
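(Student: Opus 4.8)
The plan is to split $\Theta(u)=\tfrac12|u|_{2,g}^{2}-\tfrac q2 G(u)$, where $G(u)=\int_{M}\psi(u)u^{2}d\mu_{g}$, differentiate the two pieces separately, and then exploit a test-function symmetry between equations (\ref{eq:psikgm}) and (\ref{eq:VhKGM}) to recognize the derivative of $G$ in closed form. The quadratic term $\tfrac12|u|_{2,g}^{2}$ is smooth with derivative $h\mapsto\int_{M}uh\,d\mu_{g}$, so all the work is in $G$.

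First I would show that $G$ is G\^ateaux differentiable with
\[
G'(u)[h]=\int_{M}V_{u}(h)u^{2}d\mu_{g}+2\int_{M}\psi(u)uh\,d\mu_{g}.
\]
This is the product/chain rule: $\psi$ is of class $C^{2}$ with $\psi'(u)=V_{u}(\cdot)$, while $u\mapsto u^{2}$ is smooth from $H_{g}^{1}(M)$ into $L^{3}(M)$ by the three-dimensional embedding $H_{g}^{1}(M)\hookrightarrow L^{6}(M)$; combining these with H\"older's inequality (all of $\psi(u),V_{u}(h),u,h$ lie in $L^{6}$) makes $u\mapsto\psi(u)u^{2}$ differentiable into $L^{1}(M)$, and in particular every integral written above is finite.

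The heart of the matter is the identity
\[
\int_{M}V_{u}(h)u^{2}d\mu_{g}=2\int_{M}\psi(u)(1-q\psi(u))uh\,d\mu_{g}.
\]
To prove it I would test the weak form of (\ref{eq:psikgm}) against $V_{u}(h)\in H_{g}^{1}(M)$ and the weak form of (\ref{eq:VhKGM}) against $\psi(u)\in H_{g}^{1}(M)$: both produce the same symmetric bilinear quantity $\int_{M}\nabla_{g}\psi(u)\cdot\nabla_{g}V_{u}(h)+(1+q^{2}u^{2})\psi(u)V_{u}(h)\,d\mu_{g}$, so the two right-hand sides coincide, that is, $q\int_{M}u^{2}V_{u}(h)\,d\mu_{g}=2q\int_{M}u(1-q\psi(u))h\,\psi(u)\,d\mu_{g}$, and dividing by $q$ gives the identity. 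Substituting this into the expression for $G'(u)[h]$ and expanding $(1-q\psi(u))^{2}=1-2q\psi(u)+q^{2}\psi(u)^{2}$ yields
\[
\Theta'(u)[h]=\int_{M}uh\,d\mu_{g}-\frac q2 G'(u)[h]=\int_{M}(1-q\psi(u))^{2}uh\,d\mu_{g},
\]
which is the asserted formula. (For the SM systems the same computation goes through with $1+q^{2}u^{2}$ replaced by $1$.)

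Finally, to upgrade G\^ateaux differentiability to $C^{1}$ I would verify that $u\mapsto\Theta'(u)$ is continuous from $H_{g}^{1}(M)$ into its dual. Since $\psi\colon H_{g}^{1}(M)\to H_{g}^{1}(M)$ is continuous and products of $L^{6}$-convergent sequences converge in $L^{3}(M)$, the Nemytskii-type map $u\mapsto(1-q\psi(u))^{2}u$ is continuous from $H_{g}^{1}(M)$ into $L^{2}(M)\hookrightarrow L^{6/5}(M)=(L^{6}(M))'\hookrightarrow(H_{g}^{1}(M))'$, and continuity of the G\^ateaux derivative then gives Fr\'echet differentiability of class $C^{1}$. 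I expect the only mildly delicate point to be the core identity, and there the delicacy is purely bookkeeping --- one must be sure the two equations are paired with admissible test functions (immediate, since $\psi(u),V_{u}(h)\in H_{g}^{1}(M)$) and that the cancellation of the bilinear term is exact; everything else is routine three-dimensional Sobolev estimates.
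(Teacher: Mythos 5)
Your proof is correct. Note that the paper does not actually prove this lemma: it simply refers to Druet--Hebey \cite{DH}, so your argument supplies the details that the authors outsource. What you wrote is the standard (and essentially the expected) argument: split $\Theta$ into the quadratic part and $G(u)=\int_{M}\psi(u)u^{2}d\mu_{g}$, use that $\psi$ is $C^{1}$ with $\psi'(u)=V_{u}(\cdot)$ (a fact the paper also takes as given), and then obtain the key identity
\[
\int_{M}V_{u}(h)\,u^{2}\,d\mu_{g}=2\int_{M}\psi(u)\bigl(1-q\psi(u)\bigr)uh\,d\mu_{g}
\]
by testing (\ref{eq:psikgm}) with $V_{u}(h)$ and (\ref{eq:VhKGM}) with $\psi(u)$ and using the symmetry of the bilinear form associated with $-\Delta_{g}+(1+q^{2}u^{2})$; expanding $(1-q\psi(u))^{2}$ then gives the stated formula, and your expression for $G'$ is also consistent with the paper's formula (\ref{eq:gprimo}). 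The Sobolev bookkeeping you indicate ($H_{g}^{1}\hookrightarrow L^{6}$ in dimension three, H\"older, and for KGM the bound $0<\psi(u)\le 1/q$) is indeed enough both for the finiteness of all integrals and for the continuity of $u\mapsto\Theta'(u)$ into the dual, so the upgrade from G\^ateaux to $C^{1}$ is fine.
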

For the proof of this result we refer to \cite{DH}
\begin{lem}
\label{lem:w-psi}Let $u_{n}\rightharpoonup u$ in $H_{g}^{1}(M)$.
Then, up to subsequence, $\psi(u_{n})\rightharpoonup\psi(u)$ in $H_{g}^{1}(M)$.\end{lem}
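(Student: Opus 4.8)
The plan is to combine weak-to-strong continuity of the solution operator $\psi$ with the uniqueness coming from the maximum principle. First I would write down the defining equation for $\psi(u_n)$, namely \eqref{eq:psikgm} (or \eqref{eq:psiSMS} in the SM case), test it against $\psi(u_n)$ itself, and use H\"older and Sobolev embeddings to get a uniform bound $\|\psi(u_n)\|_{H_g^1}\le C\,|u_n|_{12/5,g}^2\le C'$ since $u_n\rightharpoonup u$ is bounded in $H_g^1(M)$ and $H_g^1(M)\hookrightarrow L^{12/5}(M)$ is continuous. (In the KGM case the extra term $q^2 u_n^2\psi(u_n)^2\ge 0$ only helps, and $0<\psi(u_n)\le 1/q$ is available too.) Hence, up to a subsequence, $\psi(u_n)\rightharpoonup w$ in $H_g^1(M)$ for some $w$, and by Rellich $\psi(u_n)\to w$ strongly in $L^s(M)$ for every $s<6$ and a.e.\ on $M$.

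Next I would pass to the limit in the weak formulation. For any fixed test function $\varphi\in C^\infty(M)$ (or $H_g^1(M)$), the identity
\[
\int_M \nabla_g\psi(u_n)\cdot\nabla_g\varphi\,d\mu_g+\int_M(1+q^2u_n^2)\psi(u_n)\varphi\,d\mu_g=\int_M qu_n^2\varphi\,d\mu_g
\]
must be carried to the limit term by term. The gradient term converges because $\psi(u_n)\rightharpoonup w$ weakly in $H_g^1(M)$. The right-hand side converges because $u_n\to u$ strongly in $L^3(M)$ (indeed in every $L^s$, $s<6$), so $u_n^2\to u^2$ in $L^3(M)$ and $\varphi\in L^{3/2}(M)$. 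For the zeroth-order term on the left I would split it: $\int\psi(u_n)\varphi\to\int w\varphi$ by weak convergence, while $\int q^2u_n^2\psi(u_n)\varphi\to\int q^2u^2 w\varphi$ using that $u_n^2\to u^2$ in $L^3$, $\psi(u_n)\to w$ in $L^3$ and $\varphi\in L^3$ (a product of three sequences, two convergent strongly in $L^3$ and one fixed; dominated convergence via the a.e.\ convergence plus the uniform bound $\psi(u_n)\le 1/q$ also does the job cleanly in the KGM case). Therefore $w$ satisfies exactly \eqref{eq:psikgm} with datum $u$, i.e.\ $w$ is a weak solution of the same linear-type elliptic problem that defines $\psi(u)$.

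Finally I would invoke uniqueness: the operator $-\Delta_g+(1+q^2u^2)$ (resp.\ $-\Delta_g+1$) is coercive on $H_g^1(M)$, so the equation defining $\psi(u)$ has a unique weak solution; hence $w=\psi(u)$. Since every subsequence of $\psi(u_n)$ has a further subsequence converging weakly to the same limit $\psi(u)$, the whole sequence satisfies $\psi(u_n)\rightharpoonup\psi(u)$ in $H_g^1(M)$, which is the claim. (In fact the argument gives a little more — strong $L^s$ convergence for $s<6$ — but only the stated weak convergence is needed.)

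The only mildly delicate point is the passage to the limit in the nonlinear-looking term $\int q^2 u_n^2\psi(u_n)\varphi$ in the KGM case, since it is a triple product; but the uniform $L^\infty$ bound $\psi(u_n)\le 1/q$ together with $u_n^2\to u^2$ in $L^3$ (from Rellich) makes it routine, so there is no real obstacle here — the proof is essentially a compactness-and-uniqueness argument.
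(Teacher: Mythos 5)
Your argument is correct and follows essentially the same route as the paper: a uniform $H_g^1$ bound obtained by testing the defining equation against $\psi(u_n)$, extraction of a weakly convergent subsequence, passage to the limit in the weak formulation, and identification of the limit with $\psi(u)$ by uniqueness/coercivity (the paper simply states the limit passage without the term-by-term detail you supply). One small slip: weak convergence in $H_g^1$ gives $u_n^2\to u^2$ strongly only in $L^t(M)$ for $t<3$, not in $L^3$ (the embedding $H_g^1\hookrightarrow L^6$ is not compact), but this is harmless since, e.g., the bound $\int|u_n^2-u^2||\psi_n||\varphi|\le\|u_n^2-u^2\|_{L^2}\|\psi_n\|_{L^4}\|\varphi\|_{L^4}$ together with $\|u^2(\psi_n-w)\varphi\|_{L^1}\le\|u^2\|_{L^2}\|\psi_n-w\|_{L^4}\|\varphi\|_{L^4}$, or your dominated-convergence alternative using $0<\psi_n\le 1/q$, still carries every term to the limit.
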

\begin{proof}
We set $\psi_{n}:=\psi(u_{n})$. By (\ref{eq:psikgm}), it holds 
\begin{eqnarray*}
\|\psi_{n}\|_{H_{g}^{1}}^{2} & \le & \|\psi_{n}\|_{H_{g}^{1}}^{2}+\int_{M}q^{2}u_{n}^{2}\psi_{n}^{2}d\mu_{g}=q\int_{M}u_{n}^{2}\psi_{n}d\mu_{g}\le c|u_{n}|_{4,g}^{2}\|\psi_{n}\|_{H_{g}^{1}}
\end{eqnarray*}
then $\|\psi_{n}\|_{H_{g}^{1}}\le c|u_{n}|_{4,g}^{2}$, thus $\|\psi_{n}\|_{H_{g}^{1}}$
is bounded and, up to subsequence, $\psi_{n}\rightharpoonup\bar{\psi}$
in $H_{g}^{1}(M)$. We recall that $\psi_{n}$ solves (\ref{eq:psikgm}),
thus passing to the limit we have that $\bar{\psi}$ is a solution
of the equation 
\[
-\Delta_{g}\bar{\psi}+(1+q^{2}u^{2})\bar{\psi}=qu^{2}.
\]
 By the uniqueness of the solution of (\ref{eq:psikgm}) we have $\bar{\psi}=\psi(u)$.

The claim can be proved for SM systems in a similar way.\end{proof}
\begin{rem}
\label{w-1}Let $W_{\varepsilon,\xi}$ defined in (\ref{eq:defweps}).
The following limits hold uniformly with respect to $\xi\in M$.
\begin{eqnarray*}
\|W_{\varepsilon,\xi}\|_{\varepsilon}^{2} & \rightarrow & \int_{\mathbb{R}^{3}}|\nabla U|^{2}+(a-\omega^{2})U^{2}dx\\
|W_{\varepsilon,\xi}|_{\varepsilon,t} & \rightarrow & \|U\|_{L^{t}(\mathbb{R}^{3})}\text{ for all }2\le t\le6
\end{eqnarray*}
Furthermore, by definition of the function $U$, 
\[
\int_{\mathbb{R}^{3}}|\nabla U|^{2}+(a-\omega^{2})U^{2}dx=\int_{\mathbb{R}^{3}}|U|^{p}dx
\]

\end{rem}

\section{Setting of the problem}

Hereafter we limit to consider the KGM system. The case of SM system
is straightforward, the unique difference is in the proof that the
Nehari set is a regular manifold. There we have to exclude the case
$p=4$.

We consider the following functional $I_{\varepsilon}\in C^{2}(H_{g}^{1}(M),\mathbb{R})$.
\begin{equation}
I_{\varepsilon}(u)=\frac{1}{2}\|u\|_{\varepsilon}^{2}+\frac{\omega}{2}^{2}G_{\varepsilon}(u)-\frac{1}{p}|u^{+}|_{\varepsilon,p}^{p}\label{eq:ieps}
\end{equation}
where 
\[
G_{\varepsilon}(u)=\frac{1}{\varepsilon^{3}}q\int_{M}u^{2}\psi(u)d\mu_{g}
\]
and, by Lemma \ref{lem:Heb}, 
\begin{equation}
G'_{\varepsilon}(u)[\varphi]=\frac{2}{\varepsilon^{3}}\int_{M}\left(2q\psi(u)-q^{2}\psi^{2}(u)\right)u\varphi d\mu_{g}\label{eq:gprimo}
\end{equation}

The function $I_{\varepsilon}$ is of class $C^{2}$ because the map
$\psi(u)$ is of class $C^{2}$. By (\ref{eq:gprimo}) we have that
\[
I'_{\varepsilon}(u)\varphi=\frac{1}{\varepsilon^{3}}\int_{M}\varepsilon^{2}\nabla_{g}u\nabla_{g}\varphi+au\varphi-(u^{+})^{p-1}\varphi-\omega^{2}(1-q\psi(u))^{2}u\varphi d\mu_{g}
\]
thus a critical points $u_{\varepsilon}$ of the functional $I_{\varepsilon}$
is positive and it is such that the pair $(u_{\varepsilon},\psi(u_{\varepsilon}))$
is a solution of (\ref{eq:kgms}).

\section{Nehari Manifold}

We define the following Nehari set
\[
{\mathcal N}_{\varepsilon}=\left\{ u\in H_{g}^{1}(M)\smallsetminus0\ :\ N_{\varepsilon}(u):=I'_{\varepsilon}(u)[u]=0\right\} 
\]

\begin{lem}
\label{lem:nehari}${\mathcal N}_{\varepsilon}$ is a $C^{2}$ manifold
and $\inf_{{\mathcal N}_{\varepsilon}}\|u\|_{\varepsilon}>0$.\end{lem}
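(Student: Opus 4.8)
The plan is to show the two standard facts about the Nehari set: first that $\mathcal{N}_\varepsilon$ is a $C^2$ manifold by checking that $0$ is a regular value of $N_\varepsilon$, and second that the $\|\cdot\|_\varepsilon$-norm is bounded away from zero on $\mathcal{N}_\varepsilon$. Both follow from a careful sign analysis of the terms in $N_\varepsilon(u)$ and $N_\varepsilon'(u)[u]$, using the key structural information from Remark \ref{rem:Vh} and the fact that $0 \le \psi(u) \le 1/q$.

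For the manifold statement, I would compute, for $u \in \mathcal{N}_\varepsilon$,
\[
N_\varepsilon(u)=\|u\|_\varepsilon^2+\omega^2 G_\varepsilon(u)-|u^+|_{\varepsilon,p}^p,
\]
noting that since $\psi(u),V_u(u)\ge 0$ the term $\omega^2 G_\varepsilon(u)\ge 0$ and, by Lemma \ref{lem:Heb},
\[
G_\varepsilon'(u)[u]=\frac{2}{\varepsilon^3}\int_M (2q\psi(u)-q^2\psi^2(u))u^2 d\mu_g\ge 0
\]
because $0\le q\psi(u)\le 1$ forces $2q\psi(u)-q^2\psi^2(u)=q\psi(u)(2-q\psi(u))\ge 0$. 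Differentiating $N_\varepsilon$ and evaluating at $u$ gives
\[
N_\varepsilon'(u)[u]=2\|u\|_\varepsilon^2+\omega^2 G_\varepsilon'(u)[u]+\omega^2 G_\varepsilon''(u)[u,u]-p|u^+|_{\varepsilon,p}^p.
\]
Since on $\mathcal{N}_\varepsilon$ we have $|u^+|_{\varepsilon,p}^p=\|u\|_\varepsilon^2+\omega^2 G_\varepsilon(u)$, I substitute to obtain, after collecting terms,
\[
N_\varepsilon'(u)[u]=(2-p)\|u\|_\varepsilon^2+\omega^2\left(G_\varepsilon'(u)[u]-pG_\varepsilon(u)\right)+\omega^2 G_\varepsilon''(u)[u,u].
\]
The crucial point is to control the $G_\varepsilon''$ term: using the equation \eqref{eq:VhKGM} for $V_u(u)=\psi'(u)[u]$ one shows $G_\varepsilon''(u)[u,u]$ can be bounded so that the whole right-hand side is strictly negative, the dominant term being $(2-p)\|u\|_\varepsilon^2<0$ since $p>2$. (This is exactly where the hypothesis $p>4$, rather than $p\ge 4$, enters for SM systems, as flagged in the ``Setting of the problem'' section; for KGM the extra positivity from $0\le\psi(u)\le 1/q$ lets $p=4$ through.) Hence $N_\varepsilon'(u)[u]\ne 0$ on $\mathcal{N}_\varepsilon$, so $0$ is a regular value and $\mathcal{N}_\varepsilon$ is a $C^2$ submanifold of $H_g^1(M)$.

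For the lower bound, given $u\in\mathcal{N}_\varepsilon$ I use $\|u\|_\varepsilon^2\le \|u\|_\varepsilon^2+\omega^2 G_\varepsilon(u)=|u^+|_{\varepsilon,p}^p\le |u|_{\varepsilon,p}^p$, and then the Sobolev embedding $|u|_{\varepsilon,p}\le C\|u\|_\varepsilon$ (with a constant uniform in $\varepsilon$, which holds because of the scaling built into the $\|\cdot\|_\varepsilon$ and $|\cdot|_{\varepsilon,p}$ norms), to get $\|u\|_\varepsilon^2\le C^p\|u\|_\varepsilon^p$, i.e. $\|u\|_\varepsilon^{p-2}\ge C^{-p}$. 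Since $p>2$ this gives $\inf_{\mathcal{N}_\varepsilon}\|u\|_\varepsilon>0$. The main obstacle is the first part: bounding $\omega^2 G_\varepsilon''(u)[u,u]$ and the combination $G_\varepsilon'(u)[u]-pG_\varepsilon(u)$ so that $N_\varepsilon'(u)[u]$ stays strictly negative — this requires testing the defining PDE for $V_u(u)$ against appropriate functions and exploiting the pointwise bounds $0\le\psi(u)\le 1/q$ and $0\le V_u(u)\le 2/q$ from Remark \ref{rem:Vh}, together with the sign $p>4$ (resp. $p\ge 4$) to absorb the $G''$ contribution.
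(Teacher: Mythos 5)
Your overall strategy (show $0$ is a regular value by proving $N_{\varepsilon}'(u)[u]\neq 0$ on ${\mathcal N}_{\varepsilon}$, then get the lower bound from the constraint plus the $\varepsilon$-uniform Sobolev inequality $|u|_{\varepsilon,p}\le C\|u\|_{\varepsilon}$) is the paper's strategy, and your second half is essentially identical to the paper's (which phrases it as a contradiction with a sequence $\|u_n\|_{\varepsilon}\to 0$). The problem is the first half, which is asserted rather than proved, and where your bookkeeping is wrong. First, $N_{\varepsilon}(u)=I_{\varepsilon}'(u)[u]=\|u\|_{\varepsilon}^{2}+\frac{\omega^{2}}{2}G_{\varepsilon}'(u)[u]-|u^{+}|_{\varepsilon,p}^{p}$, not $\|u\|_{\varepsilon}^{2}+\omega^{2}G_{\varepsilon}(u)-|u^{+}|_{\varepsilon,p}^{p}$: by \eqref{eq:gprimo} one has $\frac{1}{2}G_{\varepsilon}'(u)[u]=\frac{q}{\varepsilon^{3}}\int_{M}(2-q\psi(u))\psi(u)u^{2}d\mu_{g}\neq G_{\varepsilon}(u)$ in general, and your displayed formula for $N_{\varepsilon}'(u)[u]$ does not even follow from your own $N_{\varepsilon}$ (and has spurious factors). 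This matters because the whole content of the lemma is the sign computation after substituting the constraint.

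Second, and more seriously, the heuristic ``the dominant term is $(2-p)\|u\|_{\varepsilon}^{2}<0$ since $p>2$, and the $G''$ contribution can be bounded'' does not work: after substituting $|u^{+}|_{\varepsilon,p}^{p}=\|u\|_{\varepsilon}^{2}+\frac{q\omega^{2}}{\varepsilon^{3}}\int_{M}(2-q\psi(u))\psi(u)u^{2}d\mu_{g}$, the contribution coming from $\psi'(u)[u]$ is a \emph{positive} term of the same order as the others, and it is not dominated by $(2-p)\|u\|_{\varepsilon}^{2}$ in any uniform way; ``$p>2$'' is not sufficient for this argument. The paper removes it structurally: using \eqref{eq:gprimo} (Lemma \ref{lem:Heb}), equivalently the identity $\int_{M}u^{2}\psi'(u)[u]d\mu_{g}=2\int_{M}(1-q\psi(u))\psi(u)u^{2}d\mu_{g}$ obtained by testing \eqref{eq:psikgm} and \eqref{eq:VhKGM} against each other, one arrives at
\begin{multline*}
N'_{\varepsilon}(u)[u]=(2-p)\|u\|_{\varepsilon}^{2}+\frac{q\omega^{2}}{\varepsilon^{3}}\int_{M}\left[4-p-2q\psi(u)\right]\psi(u)u^{2}d\mu_{g}\\
+\frac{q\omega^{2}}{\varepsilon^{3}}\int_{M}\left[2-\tfrac{p}{2}-2q\psi(u)\right]\psi'(u)[u]u^{2}d\mu_{g},
\end{multline*}
and every bracket is $\le 0$ precisely because $p\ge 4$, $0\le q\psi(u)\le 1$ and $\psi'(u)[u]=V_{u}(u)\ge 0$ (Remark \ref{rem:Vh}); hence $N'_{\varepsilon}(u)[u]<0$. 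So the role of $p\ge 4$ is to make these explicit coefficients non-positive, not to let $(2-p)\|u\|_{\varepsilon}^{2}$ absorb anything. Your proposal names the right ingredients ($V_{u}(u)\ge 0$, $0\le q\psi\le 1$, the threshold $p=4$) but never carries out this computation, which is the actual proof; as written, the claimed negativity of $N_{\varepsilon}'(u)[u]$ is a gap.
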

\begin{proof}
If $u\in{\mathcal N}_{\varepsilon}$, we have
\begin{eqnarray*}
0=N_{\varepsilon}(u) & = & \|u\|_{\varepsilon}^{2}-|u^{+}|_{\varepsilon,p}^{p}
+\frac{q\omega^{2}}{\varepsilon^{3}}\int_{M}\left(2-q\psi(u)\right)\psi(u)u^{2}d\mu_{g}\\
 & = & \|u\|_{\varepsilon}^{2}-|u^{+}|_{\varepsilon,p}^{p}+\frac{q\omega^{2}}{2\varepsilon^{3}}\int_{M}\left(2\psi(u)
+\psi'(u)[u]\right)u^{2}d\mu_{g}.
\end{eqnarray*}
The functional $N_{\varepsilon}$ is of class $C^{2}$ because $\psi$
is of class $C^{2}$. In particular we have $N'_{\varepsilon}(u)[u]<0$
for $u\in{\mathcal N}_{\varepsilon}$ and $4\le p<6$. In fact we have,
by (\ref{eq:gprimo}),

\begin{eqnarray}
N'_{\varepsilon}(u)[u] & = & 2\|u\|_{\varepsilon}^{2}-p|u^{+}|_{\varepsilon,p}^{p}+\frac{q\omega^{2}}{\varepsilon^{3}}\int_{M}\left(2-q\psi(u)\right)\psi'(u)[u]u^{2}d\mu_{g}\nonumber \\
 &  & +\frac{2q\omega^{2}}{\varepsilon^{3}}\int_{M}\left(2-q\psi(u)\right)\psi(u)u^{2}d\mu_{g}-\frac{q^{2}\omega^{2}}{\varepsilon^{3}}\int_{M}\psi'(u)[u]\psi(u)u^{2}d\mu_{g}=\nonumber \\
 & = & (2-p)\|u\|_{\varepsilon}^{2}+\frac{q\omega^{2}}{\varepsilon^{3}}\int_{M}[4-p-2q\psi(u)]\psi(u)u^{2}d\mu_{g}\nonumber \\
 &  & +\frac{q\omega^{2}}{\varepsilon^{3}}\int_{M}\left[2-\frac{p}{2}-2q\psi(u)\right]\psi'(u)[u]u^{2}d\mu_{g}<0\label{eq:nehari}
\end{eqnarray}
so ${\mathcal N}_{\varepsilon}$ is a $C^{2}$ manifold.

We prove the second claim by contradiction. Take a sequence $\left\{ u_{n}\right\} _{n}\in{\mathcal N}_{\varepsilon}$
with $\|u_{n}\|_{\varepsilon}\rightarrow0$ while $n\rightarrow+\infty$.
Thus, using that $N_{\varepsilon}(u)=0$,
\[
\|u_{n}\|_{\varepsilon}^{2}+\frac{q\omega^{2}}{\varepsilon^{3}}\int_{M}[2-q\psi(u_{n})]u_{n}^{2}\psi(u_{n})d\mu_{g}=|u_{n}^{+}|_{p,\varepsilon}^{p}\le C\|u_{n}\|_{\varepsilon}^{p},
\]
so, because $0<\psi(u_{n})<1/q$, 
\[
1\le1+\frac{q\omega^{2}}{\varepsilon^{3}\|u_{n}\|_{\varepsilon}^{2}}\int_{M}[2-q\psi(u_{n})]u_{n}^{2}\psi(u_{n})d\mu_{g}\le C\|u_{n}\|_{\varepsilon}^{p-2}\rightarrow0
\]
and this is a contradiction.\end{proof}
\begin{rem}
\label{rem:nehari}If $u\in{\mathcal N}_{\varepsilon}$, then 
\begin{align*}
I_{\varepsilon}(u) & =\left(\frac{1}{2}-\frac{1}{p}\right)\|u\|_{\varepsilon}^{2}+\left(\frac{1}{2}-\frac{2}{p}\right)\frac{\omega^{2}q}{\varepsilon^{3}}\int_{M}u^{2}\psi(u)d\mu_{g}+\frac{\omega^{2}q^{2}}{\varepsilon^{3}p}\int_{M}u^{2}\psi^{2}(u)d\mu_{g}\\
 & =\left(\frac{1}{2}-\frac{1}{p}\right)|u^{+}|_{p,\varepsilon}^{p}+\frac{1}{2}\frac{\omega^{2}q^{2}}{\varepsilon^{3}}\int_{M}u^{2}\psi^{2}(u)d\mu_{g}-\frac{1}{2}\frac{\omega^{2}q}{\varepsilon^{3}}\int_{M}u^{2}\psi(u)d\mu_{g}
\end{align*}
\end{rem}
\begin{lem}
It holds Palais-Smale condition for the functional $I_{\varepsilon}$
on the space $H_{\varepsilon}$.\end{lem}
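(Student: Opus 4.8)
The plan is to show that any sequence $\{u_n\}\subset H_\varepsilon$ with $I_\varepsilon(u_n)\to c$ and $I'_\varepsilon(u_n)\to 0$ in $(H_\varepsilon)'$ admits a strongly convergent subsequence. First I would establish boundedness: combining $I_\varepsilon(u_n)-\frac1p I'_\varepsilon(u_n)[u_n]$ and using that $0<\psi(u_n)\le 1/q$ (so the terms $\int_M u_n^2\psi(u_n)\,d\mu_g$ and $\int_M u_n^2\psi^2(u_n)\,d\mu_g$ have favorable signs when weighted as in Remark~\ref{rem:nehari}), one gets $\left(\frac12-\frac1p\right)\|u_n\|_\varepsilon^2\le c+1+o(\|u_n\|_\varepsilon)$, hence $\|u_n\|_\varepsilon$ is bounded since $p>2$. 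Care is needed because of the $u^+$ in the nonlinearity, but the Maxwell term only helps: the quadratic-form contributions of $\psi(u_n)$ are nonnegative, so the estimate goes through exactly as in the scalar case.

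Next, up to a subsequence, $u_n\rightharpoonup u$ weakly in $H_g^1(M)$, strongly in $L^t_g(M)$ for $2\le t<6$ (compact Sobolev embedding on the compact manifold $M$), and a.e.\ on $M$. By Lemma~\ref{lem:w-psi}, $\psi(u_n)\rightharpoonup\psi(u)$ in $H_g^1(M)$, hence also $\psi(u_n)\to\psi(u)$ strongly in $L^t_g(M)$ for $2\le t<6$ and a.e. The standard device is to test $I'_\varepsilon(u_n)[u_n-u]\to 0$ and $I'_\varepsilon(u_n)[u_n-u]-I'_\varepsilon(u)[u_n-u]\to 0$ (the latter since $u_n\rightharpoonup u$). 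Subtracting, the principal part yields $\|u_n-u\|_\varepsilon^2$ plus error terms of the form $\frac{1}{\varepsilon^3}\int_M\big((u_n^+)^{p-1}-(u^+)^{p-1}\big)(u_n-u)\,d\mu_g$ and $\frac{\omega^2}{\varepsilon^3}\int_M\big((1-q\psi(u_n))^2u_n-(1-q\psi(u))^2u\big)(u_n-u)\,d\mu_g$. The first error term tends to zero by the strong $L^p$ convergence (here $p<6$ is used) together with Hölder. For the second, one expands $(1-q\psi(u_n))^2u_n-(1-q\psi(u))^2 u$ and uses that $\psi(u_n)$ is bounded in $L^\infty$ by $1/q$, $u_n$ is bounded in $L^6_g$, and both $\psi(u_n)\to\psi(u)$ and $u_n\to u$ strongly in $L^3_g$; a routine Hölder splitting shows this term is $o(1)+o(\|u_n-u\|_\varepsilon)$. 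Hence $\|u_n-u\|_\varepsilon^2\to 0$.

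The main obstacle is controlling the Maxwell coupling term $\int_M(1-q\psi(u_n))^2 u_n(u_n-u)\,d\mu_g$: one must carefully exploit the $L^\infty$ bound $0<\psi(u_n)\le 1/q$ (valid for KGM by the maximum principle, and the analogous positivity for SM) to keep the integrand products integrable, and then pass to the limit using the strong convergences in the lower $L^t$ norms rather than weak convergence. Once that term is handled, everything else is the classical Brezis--Lieb/Sobolev argument on a compact manifold, and strong convergence $u_n\to u$ in $H_\varepsilon$ follows, completing the verification of the Palais--Smale condition.
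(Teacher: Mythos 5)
Your proposal is correct and, in substance, follows the same strategy as the paper: boundedness comes from the combination $I_{\varepsilon}(u_{n})-\tfrac1p I'_{\varepsilon}(u_{n})[u_{n}]$ (the paper writes $pI_{\varepsilon}(u_{n})-I'_{\varepsilon}(u_{n})[u_{n}]$, divides by $\|u_{n}\|_{\varepsilon}$ and argues by contradiction), and the compactness step rests on exactly the same ingredients, namely the Rellich embedding on the compact manifold and Lemma \ref{lem:w-psi}, which gives $\psi(u_{n})\rightharpoonup\psi(u)$ in $H^{1}_{g}$ and hence $\psi(u_{n})\rightarrow\psi(u)$ in $L^{t}_{g}$ for $t<6$. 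The only real difference is in how strong convergence is concluded: you test with $u_{n}-u$ and dispose of the coupling term $\frac{\omega^{2}}{\varepsilon^{3}}\int_{M}\left((1-q\psi(u_{n}))^{2}u_{n}-(1-q\psi(u))^{2}u\right)(u_{n}-u)\,d\mu_{g}$ by a H\"older splitting using $0<q\psi(u_{n})\le1$ and the strong $L^{2}_{g}$, $L^{3}_{g}$ convergences, whereas the paper uses the representation $u_{n}-i_{\varepsilon}^{*}[(u_{n}^{+})^{p-1}]-\omega^{2}q\,i_{\varepsilon}^{*}[(q\psi(u_{n})^{2}-2\psi(u_{n}))u_{n}]\rightarrow0$ and proves $L^{p'}_{g}$ convergence of $(q\psi(u_{n})^{2}-2\psi(u_{n}))u_{n}$; the two devices are equivalent and your estimates go through. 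One caveat on the boundedness step: the favorable sign of the weighted Maxwell terms is not a consequence of $p>2$ alone. In the decomposition of Remark \ref{rem:nehari} the coefficient $\tfrac12-\tfrac2p$ of $\frac{\omega^{2}q}{\varepsilon^{3}}\int_{M}u_{n}^{2}\psi(u_{n})\,d\mu_{g}$ is nonnegative only when $p\ge4$, which is exactly where the paper invokes its standing assumption $4\le p<6$ (and $p>4$ for SM); for $2<p<4$ your inequality as stated would need an additional argument. Note also that the $L^{\infty}$ bound $\psi\le1/q$ is specific to the KGM system; for the SM system one should replace it by the $H^{1}_{g}$ (hence $L^{6}_{g}$) bound on $\psi(u_{n})$, which still suffices for the same H\"older splitting.
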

\begin{proof}
. Let $\left\{ u_{n}\right\} _{n}\in H_{\varepsilon}$ such that 
\begin{eqnarray*}
I_{\varepsilon}(u_{n})\rightarrow c &  & \left|I'_{\varepsilon}(u_{n})[\varphi]\right|\le\sigma_{n}\|\varphi\|_{\varepsilon}\text{ where }\sigma_{n}\rightarrow0
\end{eqnarray*}
We prove that $\|u_{n}\|_{\varepsilon}$ is bounded. By contradiction,
suppose $\|u_{n}\|_{\varepsilon}\rightarrow\infty$. Then, by PS hypothesis
\begin{multline*}
\frac{pI_{\varepsilon}(u_{n})-I'_{\varepsilon}(u_{n})[u_{n}]}{\|u_{n}\|_{\varepsilon}}=\\
\left(\frac{p}{2}-1\right)\|u_{n}\|_{\varepsilon}+\frac{q\omega^{2}}{\varepsilon^{3}}\int_{M}\left[\frac{p}{2}-2+q\psi(u_{n})\right]\frac{u_{n}^{2}\psi(u_{n})}{\|u_{n}\|_{\varepsilon}}d\mu_{g}\rightarrow0
\end{multline*}
Since $p\ge4$ and $\psi(u_{n})>0$ this leads to a contradiction.

At this point, up to subsequence $u_{n}\rightarrow u$ weakly in $H_{\varepsilon}$
and strongly in $L_{g}^{t}(M)$ for each $2\le t<6$, then by Lemma
\ref{lem:w-psi} we have, up to subsequence, $\psi(u_{n}):=\psi_{n}\rightharpoonup\bar{\psi}=\psi(u)$.

We have that 
\[
u_{n}-i_{\varepsilon}^{*}[(u_{n}^{+})^{p-1}]
-\omega^{2}qi_{\varepsilon}^{*}\left[\left(q\psi_{n}^{2}-2\psi_{n}\right)u_{n}\right]\rightarrow0
\]
where the operator $i_{\varepsilon}^{*}:L_{g}^{p'},|\cdot|_{\varepsilon,p'}\rightarrow H_{\varepsilon}$
is the adjoint operator of the immersion operator $i_{\varepsilon}:H_{\varepsilon}\rightarrow L_{g}^{p},|\cdot|_{\varepsilon,p}$.
Since $i_{\varepsilon}^{*}[(u_{n}^{+})^{p-1}]$ converges to $i_{\varepsilon}^{*}[(u^{+})^{p-1}]$
it is sufficient to show that the sequence $i_{\varepsilon}^{*}\left[\left(q\psi_{n}^{2}-2\psi_{n}\right)u_{n}\right]\rightarrow i_{\varepsilon}^{*}\left[\left(q\bar{\psi}^{2}-2\bar{\psi}\right)u\right]$
in $H_{g}^{1}$ to obtain that $u_{n}\rightarrow u$ in $H_{g}^{1}$.
We will show that $\left(q\psi_{n}^{2}-2\psi_{n}\right)u_{n}\rightarrow\left(q\bar{\psi}^{2}-2\bar{\psi}\right)u$
in $L_{g}^{p'}$. We have 
\begin{equation}
|\psi_{n}u_{n}-\bar{\psi}u|_{p',g}\le|(\psi_{n}-\bar{\psi})u|_{p',g}+|\psi_{n}(u_{n}-u)|_{p',g}.\label{eq:1}
\end{equation}
and
\begin{equation}
|\psi_{n}^{2}u_{n}-\bar{\psi}^{2}u|_{p',g}\le|(\psi_{n}^{2}-\bar{\psi}^{2})u|_{p',g}+|\psi_{n}^{2}(u_{n}-u)|_{p',g}.\label{eq:2}
\end{equation}
For the first term of (\ref{eq:1}) we have

\[
\int_{M}|\psi_{n}-\bar{\psi}|^{\frac{p}{p-1}}|u|^{\frac{p}{p-1}}\le\left(\int_{M}|\psi_{n}-\bar{\psi}|^{p}\right)^{\frac{1}{p-1}}\left(\int_{M}|u|^{\frac{p}{p-2}}\right)^{\frac{p-2}{p-1}}\rightarrow0,
\]
and for the other terms we proceed in the same way. This concludes
the proof.\end{proof}
\begin{lem}
\label{lem:freePS}If $\left\{ u_{n}\right\} _{n}\in{\mathcal N}_{\varepsilon}$
is a Palais-Smale sequence for the functional $I_{\varepsilon}$ constrained
on ${\mathcal N}_{\varepsilon}$, then $\left\{ u_{n}\right\} _{n}$ is
a is a Palais-Smale sequence for the free functional $I_{\varepsilon}$
on $H_{\varepsilon}$ \end{lem}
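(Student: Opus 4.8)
The plan is to verify the standard fact that a Palais--Smale sequence for $I_\varepsilon$ constrained to the Nehari manifold $\mathcal{N}_\varepsilon$ is automatically a free Palais--Smale sequence, exploiting that $\mathcal{N}_\varepsilon$ is a $C^2$ manifold (Lemma \ref{lem:nehari}) with $N'_\varepsilon(u)[u]<0$ there (see \eqref{eq:nehari}). First I would recall that, by the Lagrange multiplier rule, if $u_n\in\mathcal{N}_\varepsilon$ is a constrained PS sequence, then there exist real numbers $\lambda_n$ such that
\[
I'_\varepsilon(u_n) = \lambda_n N'_\varepsilon(u_n) + o(1) \quad\text{in } H_\varepsilon',
\]
where $o(1)\to 0$ in the dual norm. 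The goal is to show $\lambda_n\to 0$, after which $I'_\varepsilon(u_n)\to 0$ in $H_\varepsilon'$ follows provided $\|N'_\varepsilon(u_n)\|$ stays bounded.

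Next I would test the above identity against $u_n$ itself. Since $u_n\in\mathcal{N}_\varepsilon$ we have $I'_\varepsilon(u_n)[u_n]=N_\varepsilon(u_n)=0$, hence
\[
0 = \lambda_n N'_\varepsilon(u_n)[u_n] + o(1)\,\|u_n\|_\varepsilon .
\]
Now two quantitative inputs are needed: a uniform upper bound $\|u_n\|_\varepsilon\le C$ and a uniform negative upper bound $N'_\varepsilon(u_n)[u_n]\le -\delta<0$. The upper bound on $\|u_n\|_\varepsilon$ comes from the coercivity argument already used in the proof of the Palais--Smale condition (the combination $pI_\varepsilon(u_n)-I'_\varepsilon(u_n)[u_n]$ controls $\|u_n\|_\varepsilon$ since $p\ge 4$ and $\psi(u_n)>0$), together with the boundedness of $I_\varepsilon(u_n)$. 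For the strict negativity, I would revisit formula \eqref{eq:nehari}: the leading term is $(2-p)\|u_n\|_\varepsilon^2$, and since $\inf_{\mathcal{N}_\varepsilon}\|u\|_\varepsilon>0$ by Lemma \ref{lem:nehari}, while the remaining terms involving $\psi(u_n)\in(0,1/q)$ are of the right sign (using $0\le V_{u}(u)\le 2/q$ from Remark \ref{rem:Vh} to control $\psi'(u_n)[u_n]u_n^2$), one gets $N'_\varepsilon(u_n)[u_n]\le (2-p)\inf_{\mathcal N_\varepsilon}\|u\|_\varepsilon^2<0$ uniformly. Dividing, $\lambda_n = o(1)/N'_\varepsilon(u_n)[u_n]\to 0$.

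Finally I would close the argument by noting that $\|N'_\varepsilon(u_n)\|_{H_\varepsilon'}$ is bounded: this is immediate from the explicit formula for $N'_\varepsilon$ together with the Sobolev embeddings, the bound $\|u_n\|_\varepsilon\le C$, and the estimates $0<\psi(u_n)<1/q$, $\|V_{u_n}(h)\|_{H^1_g}\le c|h|_{3,g}|u_n|_{3,g}$ of Remark \ref{rem:Vh}. Hence $\lambda_n N'_\varepsilon(u_n)\to 0$ in $H_\varepsilon'$, so $I'_\varepsilon(u_n)\to 0$ in $H_\varepsilon'$, which is exactly the assertion that $\{u_n\}$ is a free Palais--Smale sequence. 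The main obstacle is the uniform strict negativity of $N'_\varepsilon(u_n)[u_n]$ away from zero; everything else is a routine combination of the coercivity estimate and the a priori bounds on $\psi$ and $V_u$ already recorded above. For the SM system the same scheme applies verbatim once $p>4$ is assumed, which is precisely what guarantees the analogue of \eqref{eq:nehari}.
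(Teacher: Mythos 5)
Your proposal is correct and follows essentially the same route as the paper: the Lagrange multiplier $\lambda_n$ is killed by combining the sign estimate (\ref{eq:nehari}) (with $\psi(u_n)>0$ and $V_{u_n}(u_n)\ge 0$) with $\inf_{{\mathcal N}_\varepsilon}\|u\|_\varepsilon>0$ from Lemma \ref{lem:nehari}, and then the boundedness of $\|u_n\|_\varepsilon$ and the dual bound $|N'_\varepsilon(u_n)[\varphi]|\le c\|\varphi\|_\varepsilon$ via Remark \ref{rem:Vh} yield the free Palais--Smale property. The only difference is cosmetic: you bound $\|u_n\|_\varepsilon$ first and use the uniform negativity of $N'_\varepsilon(u_n)[u_n]$ directly, whereas the paper tests with $u_n/\|u_n\|_\varepsilon$ and derives a contradiction with Lemma \ref{lem:nehari} if $\inf|\lambda_n|\ne 0$.
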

\begin{proof}
Let $\left\{ u_{n}\right\} _{n}\in{\mathcal N}_{\varepsilon}$ such that
\[
\begin{array}{cc}
I_{\varepsilon}(u_{n})\rightarrow c\\
\left|I'_{\varepsilon}(u_{n})[\varphi]-\lambda_{n}N'(u_{n})[\varphi]\right|\le\sigma_{n}\|\varphi\|_{\varepsilon} & \text{ with }\sigma_{n}\rightarrow0
\end{array}
\]
In particular $I'_{\varepsilon}(u_{n})\left[\frac{u_{n}}{\|u_{n}\|_{\varepsilon}}\right]-\lambda_{n}N'(u_{n})\left[\frac{u_{n}}{\|u_{n}\|_{\varepsilon}}\right]\rightarrow0$.
Then 

\[
\lambda_{n}N'(u_{n})\left[\frac{u_{n}}{\|u_{n}\|_{\varepsilon}}\right]\rightarrow0.
\]
 By (\ref{eq:nehari}), if $\inf|\lambda_{n}|\ne0$, we have that
$\|u_{n}\|_{\varepsilon}\rightarrow0$ and this contradicts Lemma
\ref{lem:nehari}. 

Thus $\lambda_{n}\rightarrow0$. Since 
\[
I_{\varepsilon}(u_{n})=\left(\frac{1}{2}-\frac{1}{p}\right)\|u_{n}\|_{\varepsilon}^{2}+\left(\frac{1}{2}-\frac{2}{p}\right)\frac{\omega^{2}q}{\varepsilon^{3}}\int_{M}u_{n}^{2}\psi_{n}d\mu_{g}+\frac{\omega^{2}q^{2}}{\varepsilon^{3}p}\int_{M}u_{n}^{2}\psi_{n}^{2}d\mu_{g}\rightarrow c
\]
we have that $\|u_{n}\|_{\varepsilon}$ is bounded. By the expression
(\ref{eq:nehari}) of $N'(u_{n})$ and by Remark \ref{rem:Vh} we
have that $|N'(u_{n})[\varphi]|\le c\|\varphi\|_{\varepsilon}$. Thus
we obtain that $\left\{ u_{n}\right\} _{n}$ is a PS sequence for
the free functional $I_{\varepsilon}$, and we get the claim.\end{proof}
\begin{lem}
\label{lem:teps}For all $u\in H_{g}^{1}(M)$ such that $|u^{+}|_{\varepsilon,p}=1$
there exists a unique positive number $t_{\varepsilon}=t_{\varepsilon}(u)$
such that $t_{\varepsilon}(u)u\in{\mathcal N}_{\varepsilon}$. Moreover
$t_{\varepsilon}(u)$ depends continuosly on $u$, provided that $u^{+}\not\equiv0$.
Finally it holds
\[
\lim_{\varepsilon\rightarrow0}t_{\varepsilon}(W_{\varepsilon,\xi})=1\text{ uniformly with respect to }\xi\in M.
\]
\end{lem}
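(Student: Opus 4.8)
The plan is to fix $u\in H_{g}^{1}(M)$ with $|u^{+}|_{\varepsilon,p}=1$ and to study the continuous one-variable function $h(t):=N_{\varepsilon}(tu)$ on $t>0$, whose zeros are exactly the values of $t$ with $tu\in{\mathcal N}_{\varepsilon}$. For existence I would use the intermediate value theorem. Writing $\psi_{t}:=\psi(tu)$, the estimate from the proof of Lemma~\ref{lem:w-psi} gives $\|\psi_{t}\|_{H_{g}^{1}}\le c|tu|_{4,g}^{2}=ct^{2}|u|_{4,g}^{2}$, so by H\"older ($L^{6}$ against $L^{12/5}$) the coupling term $\frac{q\omega^{2}}{\varepsilon^{3}}\int_{M}(2-q\psi_{t})\psi_{t}(tu)^{2}d\mu_{g}$ is $O(t^{4})$ as $t\to0$; hence $h(t)=t^{2}\|u\|_{\varepsilon}^{2}-t^{p}+O(t^{4})>0$ for $t$ small, since $p\ge4$ makes the quadratic term dominant. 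On the other hand, because $0\le\psi_{t}\le1/q$ we have $(2-q\psi_{t})\psi_{t}\le1/q$, so the coupling term is bounded by $\omega^{2}t^{2}|u|_{\varepsilon,2}^{2}$ and $h(t)\le Ct^{2}-t^{p}\to-\infty$ as $t\to\infty$ because $p>2$. Thus $h$ vanishes somewhere on $(0,\infty)$.

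For uniqueness I would reuse the sign computation already made for Lemma~\ref{lem:nehari}: at any $t_{0}$ with $h(t_{0})=0$, i.e. $t_{0}u\in{\mathcal N}_{\varepsilon}$, one has $h'(t_{0})=N'_{\varepsilon}(t_{0}u)[u]=\tfrac{1}{t_{0}}N'_{\varepsilon}(t_{0}u)[t_{0}u]<0$ by (\ref{eq:nehari}). Hence every zero of $h$ is a strict downward crossing; since $h>0$ for small $t$, a second zero would force $h$ to return to $0$ from negative values, contradicting the downward-crossing property there. So the zero $t_{\varepsilon}(u)$ is unique. For a general $u$ with $u^{+}\not\equiv0$ I set $t_{\varepsilon}(u):=t_{\varepsilon}\big(u/|u^{+}|_{\varepsilon,p}\big)\big/|u^{+}|_{\varepsilon,p}$, the unique positive $t$ with $tu\in{\mathcal N}_{\varepsilon}$. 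Continuity then follows from the implicit function theorem applied to the $C^{1}$ map $(t,u)\mapsto N_{\varepsilon}(tu)$, whose $t$-derivative at $(t_{\varepsilon}(u),u)$ equals $\tfrac{1}{t_{\varepsilon}(u)}N'_{\varepsilon}(t_{\varepsilon}(u)u)[t_{\varepsilon}(u)u]<0\ne0$, combined with the uniqueness just established (and with continuity of $u\mapsto u/|u^{+}|_{\varepsilon,p}$ on $\{u^{+}\not\equiv0\}$).

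For the limit, put $t=t_{\varepsilon}(W_{\varepsilon,\xi})$ and divide $N_{\varepsilon}(tW_{\varepsilon,\xi})=0$ by $t^{2}$ to get $\|W_{\varepsilon,\xi}\|_{\varepsilon}^{2}-t^{p-2}|W_{\varepsilon,\xi}|_{\varepsilon,p}^{p}+R_{\varepsilon}(t)/t^{2}=0$, where $R_{\varepsilon}(t):=\frac{q\omega^{2}}{\varepsilon^{3}}\int_{M}(2-q\psi(tW_{\varepsilon,\xi}))\psi(tW_{\varepsilon,\xi})(tW_{\varepsilon,\xi})^{2}d\mu_{g}\ge0$. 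The crucial point is a decay estimate for $R_{\varepsilon}$ uniform in $\xi$: arguing as in Remark~\ref{rem:Vh} and Lemma~\ref{lem:w-psi} on the equation for $\psi(tW_{\varepsilon,\xi})$, together with the Sobolev embedding $H_{g}^{1}(M)\hookrightarrow L_{g}^{6}(M)$, gives $\|\psi(tW_{\varepsilon,\xi})\|_{H_{g}^{1}}\le Cqt^{2}|W_{\varepsilon,\xi}|_{12/5,g}^{2}$, while Remark~\ref{w-1} yields $\int_{M}W_{\varepsilon,\xi}^{12/5}d\mu_{g}\le C\varepsilon^{3}$, hence $|W_{\varepsilon,\xi}|_{12/5,g}^{2}\le C\varepsilon^{5/2}$; plugging this back into $R_{\varepsilon}$ via H\"older ($L^{6}$ against $L^{12/5}$) gives $0\le R_{\varepsilon}(t)\le Ct^{4}\varepsilon^{2}$ uniformly in $\xi$. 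From the identity, $t^{p-2}\ge\|W_{\varepsilon,\xi}\|_{\varepsilon}^{2}/|W_{\varepsilon,\xi}|_{\varepsilon,p}^{p}$ is bounded below away from $0$, and $t^{p-2}|W_{\varepsilon,\xi}|_{\varepsilon,p}^{p}\le\|W_{\varepsilon,\xi}\|_{\varepsilon}^{2}+Ct^{2}\varepsilon^{2}$, which — since $p-2\ge2$ — forces $t\le c_{2}$ for $\varepsilon$ small; both bounds are uniform in $\xi$. Consequently $R_{\varepsilon}(t)/t^{2}=O(\varepsilon^{2})\to0$, and since $\|W_{\varepsilon,\xi}\|_{\varepsilon}^{2}\to\|U\|_{a}^{2}$ and $|W_{\varepsilon,\xi}|_{\varepsilon,p}^{p}\to\|U\|_{L^{p}}^{p}$ with $\|U\|_{a}^{2}=\|U\|_{L^{p}}^{p}$ (Remark~\ref{w-1}), uniformly in $\xi$, we conclude $t^{p-2}\to1$, i.e. $t_{\varepsilon}(W_{\varepsilon,\xi})\to1$ uniformly in $\xi$.

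The main obstacle is this last paragraph: because $\psi$ does not scale, one must genuinely estimate $\psi(tW_{\varepsilon,\xi})$ to see that the nonlinear coupling term is $O(\varepsilon^{2})$ uniformly in $\xi$, and one must first secure the two-sided a priori bounds $0<c_{1}\le t_{\varepsilon}(W_{\varepsilon,\xi})\le c_{2}$ (this is where $p\ge4$ enters) before passing to the limit. By contrast, the existence and uniqueness parts are soft consequences of the sign computation (\ref{eq:nehari}) already in hand.
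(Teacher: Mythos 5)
Your proposal is correct and follows essentially the same route as the paper: you study the fibering map $t\mapsto N_{\varepsilon}(tu)=tH'(t)$ (the paper uses $H(t)=I_{\varepsilon}(tu)$), get existence from its sign at $0$ and $\infty$, uniqueness and continuity from the negativity in (\ref{eq:nehari}) (the paper's $H''(t_{\varepsilon})<0$ is the same computation), and for the limit you bound the coupling term exactly as in (\ref{eq:lim1})--(\ref{eq:lim2}) via $\|\psi(tW_{\varepsilon,\xi})\|_{H_{g}^{1}}\le Ct^{2}\varepsilon^{5/2}$ and H\"older against $|W_{\varepsilon,\xi}|_{12/5,g}$, then conclude with Remark \ref{w-1}. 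The only (harmless) difference is that you obtain two-sided uniform bounds on $t_{\varepsilon}(W_{\varepsilon,\xi})$ and pass to the limit directly, where the paper extracts convergent subsequences.
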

\begin{proof}
We define, for $t>0$ 
\[
H(t)=I_{\varepsilon}(tu)=\frac{1}{2}t^{2}\|u\|_{\varepsilon}^{2}+\frac{q\omega^{2}}{2\varepsilon^{3}}t^{2}\int_{M}\psi(tu)u^{2}d\mu_{g}-\frac{t^{p}}{p}.
\]
Thus, by (\ref{eq:gprimo})
\begin{eqnarray}
H'(t) & = & t\left(\|u\|_{\varepsilon}^{2}+\frac{q\omega^{2}}{2\varepsilon^{3}}\int_{M}[2-q\psi(tu)]\psi(tu)u^{2}d\mu_{g}-t^{p-2}\right)\label{eq:Hprimo}\\
 & = & t\left(\|u\|_{\varepsilon}^{2}+\frac{q\omega^{2}}{\varepsilon^{3}}\int_{M}\psi(tu)u^{2}d\mu_{g}+\frac{q\omega^{2}}{2\varepsilon^{3}}t\int_{M}\psi'(tu)[u]u^{2}d\mu_{g}-t^{p-2}\right)\nonumber \\
H''(t) & = & \|u\|_{\varepsilon}^{2}+\frac{q\omega^{2}}{2\varepsilon^{3}}\int_{M}[2-q\psi(tu)]\psi(tu)u^{2}d\mu_{g}\label{eq:Hsec}\\
 &  & +\frac{q\omega^{2}}{\varepsilon^{3}}t\int_{M}[1-q\psi(tu)]\psi'(tu)[u]u^{2}d\mu_{g}-(p-1)t^{p-2}\nonumber 
\end{eqnarray}
By (\ref{eq:Hprimo}) there exists $t_{\varepsilon}>0$ such that
$H'(t_{\varepsilon})=0$, because, for small $t$, $H'(t)>0$ and,
since $p\ge4$, it holds $H'(t)<0$ for $t$ large. For $t_{\varepsilon}$,
by (\ref{eq:Hprimo}) we have
\[
t_{\varepsilon}^{p-2}=\|u\|_{\varepsilon}^{2}+\frac{q\omega^{2}}{\varepsilon^{3}}\int_{M}\psi(t_{\varepsilon}u)u^{2}d\mu_{g}+\frac{q\omega^{2}}{2\varepsilon^{3}}t_{\varepsilon}\int_{M}\psi'(t_{\varepsilon}u)[u]u^{2}d\mu_{g}
\]
 then, by Remark \ref{rem:Vh} 
\begin{eqnarray*}
H''(t_{\varepsilon}) & = & (2-p)\|u\|_{\varepsilon}^{2}+\frac{q\omega^{2}}{\varepsilon^{3}}\int_{M}\left[2-p-\frac{q}{2}\psi(t_{\varepsilon}u)\right]\psi(t_{\varepsilon}u)u^{2}d\mu_{g}\\
 &  & +\frac{q\omega^{2}}{2\varepsilon^{3}}\int_{M}[3-p-2q\psi(t_{\varepsilon}u)]\psi'(t_{\varepsilon}u)[t_{\varepsilon}u]u^{2}d\mu_{g}<0,
\end{eqnarray*}
so $t_{\varepsilon}$ is unique. The continuity of $t_{\varepsilon}$
is standard.

We now prove the last claim. We have 
\begin{multline}
t_{\varepsilon}^{p-2}|W_{\varepsilon,\xi}|_{\varepsilon,p}^{p}=\|W_{\varepsilon,\xi}\|_{\varepsilon}^{2}+\frac{q\omega^{2}}{\varepsilon^{3}}\int_{M}\psi(t_{\varepsilon}W_{\varepsilon,\xi})W_{\varepsilon,\xi}^{2}d\mu_{g}\\
-\frac{q^{2}\omega^{2}}{2\varepsilon^{3}}\int_{M}\psi^{2}(t_{\varepsilon}W_{\varepsilon,\xi})W_{\varepsilon,\xi}^{2}d\mu_{g}\label{eq:teps1-1}
\end{multline}
where $t_{\varepsilon}=t_{\varepsilon}(W_{\varepsilon,\xi})$. It
holds
\begin{eqnarray}
 &  & \lim_{\varepsilon\rightarrow0}\frac{1}{\varepsilon^{3}t_{\varepsilon}^{2}}\int_{M}\psi(t_{\varepsilon}W_{\varepsilon,\xi})W_{\varepsilon,\xi}^{2}d\mu_{g}=0\label{eq:lim1}\\
 &  & \lim_{\varepsilon\rightarrow0}\frac{1}{\varepsilon^{3}t_{\varepsilon}^{4}}\int_{M}\psi^{2}(t_{\varepsilon}W_{\varepsilon,\xi})W_{\varepsilon,\xi}^{2}d\mu_{g}=0\label{eq:lim2}
\end{eqnarray}
In fact, set $\psi(t_{\varepsilon}W_{\varepsilon,\xi}):=\psi_{\varepsilon}$.
We have, by Remark \ref{w-1} 
\begin{eqnarray*}
\|\psi_{\varepsilon}\|_{H_{g}^{1}}^{2} & \le & \int_{M}|\nabla\psi_{\varepsilon}|^{2}+\psi_{\varepsilon}^{2}(1+q^{2}t_{\varepsilon}^{2}W_{\varepsilon,\xi}^{2})d\mu_{g}=t_{\varepsilon}^{2}q\int_{M}W_{\varepsilon,\xi}^{2}\psi_{\varepsilon}d\mu_{g}\le\\
 & \le & ct_{\varepsilon}^{2}|\psi_{\varepsilon}|_{6,g}\left(\int_{M}W_{\varepsilon,\xi}^{12/5}d\mu_{g}\right)^{5/6}\le ct_{\varepsilon}^{2}\|\psi_{\varepsilon}\|_{H_{g}^{1}}\varepsilon^{5/2}.
\end{eqnarray*}
Moreover
\[
\frac{1}{\varepsilon^{3}}\int_{M}\psi_{\varepsilon}W_{\varepsilon,\xi}^{2}d\mu_{g}\le\frac{1}{\varepsilon^{3}}\|\psi_{\varepsilon}\|_{H_{g}^{1}}\left(\int_{M}W_{\varepsilon,\xi}^{12/5}d\mu_{g}\right)^{5/6}\le ct_{\varepsilon}^{2}\frac{1}{\varepsilon^{3}}\varepsilon^{5},
\]
 and 
\[
\frac{1}{\varepsilon^{3}}\int_{M}\psi_{\varepsilon}^{2}W_{\varepsilon,\xi}^{2}d\mu_{g}\le\frac{1}{\varepsilon^{3}}\left(\int\psi_{\varepsilon}^{6}d\mu_{g}\right)^{1/3}\left(\int_{M}W_{\varepsilon,\xi}^{3}d\mu_{g}\right)^{2/3}\le\frac{1}{\varepsilon^{3}}\|\psi_{\varepsilon}\|_{H_{g}^{1}}^{2}\varepsilon^{2}\le t_{\varepsilon}^{4}\varepsilon^{4}
\]
so we proved (\ref{eq:lim1}) and (\ref{eq:lim2}). 

For any sequence $\varepsilon_{n}\rightarrow0$, by (\ref{eq:lim1})
and (\ref{eq:lim2}) and by Remark \ref{w-1} we have that $t_{\varepsilon_{n}}$
is bounded. Then, up to subsequences $t_{\varepsilon_{n}}\rightarrow\bar{t}$.
By (\ref{eq:teps1-1}) we have $\bar{t}^{p-2}|U|_{L^{p}(\mathbb{R}^{3})}^{p}=\int_{\mathbb{R}^{3}}|\nabla U|^{2}+(a-\omega^{2})U^{2}dx$
and by Remark \ref{w-1} we have $\bar{t}=1$.
\end{proof}

\section{Main ingredient of the proof}

We sketch the proof of Theorem \ref{thm:1}. First of all, since the
functional $I_{\varepsilon}\in C^{2}$ is bounded below and satisfies
PS condition on the manifold ${\mathcal N}_{\varepsilon}$, we have, by
well known results, that $I_{\varepsilon}$ has at least $\cat I_{\varepsilon}^{d}$
critical points in the sublevel
\[
I_{\varepsilon}^{d}=\left\{ u\in{\mathcal N}_{\varepsilon}\ :\ I_{\varepsilon}(u)\le d\right\} .
\]
We prove that, for $\varepsilon$ and $\delta$ small enough, it holds
\[
\cat M\le\cat\left({\mathcal N}_{\varepsilon}\cap I_{\varepsilon}^{m_{\infty}+\delta}\right)
\]
where 
\[
m_{\infty}:=\inf_{{\mathcal N}_{\infty}}\frac{1}{2}\int_{\mathbb{R}^{3}}|\nabla v|^{2}+(a-\omega^{2})v^{2}dx-\frac{1}{p}\int_{\mathbb{R}^{3}}|v|^{p}dx
\]
\[
{\mathcal N}_{\infty}=\left\{ v\in H^{1}(\mathbb{R}^{3})\smallsetminus\left\{ 0\right\} \ :\ \int_{\mathbb{R}^{3}}|\nabla v|^{2}+(a-\omega^{2})v^{2}dx=\int_{\mathbb{R}^{3}}|v|^{p}dx\right\} .
\]
To get the inequality $\cat M\le\cat\left({\mathcal N}_{\varepsilon}\cap I_{\varepsilon}^{m_{\infty}+\delta}\right)$
we build two continuous operators
\begin{eqnarray*}
\Phi_{\varepsilon} & : & M\rightarrow{\mathcal N}_{\varepsilon}\cap I_{\varepsilon}^{m_{\infty}+\delta}\\
\beta & : & {\mathcal N}_{\varepsilon}\cap I_{\varepsilon}^{m_{\infty}+\delta}\rightarrow M_{r}
\end{eqnarray*}
where $M_{r}=\left\{ x\in\mathbb{R}^{N}\ :\ d(x,M)<r\right\} $ with
$r$ small enough in order to have $\cat M=\cat M_{r}$. Also, we
will choose $r$ smaller than the injectivity radius of $M$. 

We build these operators $\Phi_{\varepsilon}$ and $\beta$ such that
$\beta\circ\Phi_{\varepsilon}:M\rightarrow M_{r}$ is homotopic to
the immersion $i:M\rightarrow M_{r}$. By the properties of Lusternik
Schinerlmann category we have
\[
\cat M\le\cat\left({\mathcal N}_{\varepsilon}\cap I_{\varepsilon}^{m_{\infty}+\delta}\right)
\]
which gives us the estimates on the number of solutions contained
in Theorem \ref{thm:1}. With respect to the profile description of
any low energy solution $u_{\varepsilon}$, first, we prove that $u_{\varepsilon}$
has a unique local maximum point $P_{\varepsilon}$ (see Lemma \ref{lem:unimax})
then we show that $u_{\varepsilon}=W_{P_{\varepsilon},\varepsilon}+\Psi_{\varepsilon}$
where $\|\Psi_{\varepsilon}\|_{L^{\infty}(M)}\rightarrow0$ for $\varepsilon\rightarrow0$
(see Lemma \ref{lem:stime}).

\section{The function $\Phi_{\varepsilon}$}

We define a map
\begin{eqnarray*}
\Phi_{\varepsilon} & : & M\rightarrow{\mathcal N}_{\varepsilon}\\
\Phi_{\varepsilon}(\xi) & = & t_{\varepsilon}(W_{\xi,\varepsilon})W_{\xi,\varepsilon}
\end{eqnarray*}

\begin{prop}
\label{prop:phieps}For all $\varepsilon>0$ the map $\Phi_{\varepsilon}$
is continuous. Moreover for any $\delta>0$ there exists $\varepsilon_{0}=\varepsilon_{0}(\delta)$
such that, if $\varepsilon<\varepsilon_{0}$ then $I_{\varepsilon}\left(\Phi_{\varepsilon}(\xi)\right)<m_{\infty}+\delta$.\end{prop}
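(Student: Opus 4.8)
The plan is to split the statement into its two assertions: continuity of $\Phi_\varepsilon$, and the energy estimate $I_\varepsilon(\Phi_\varepsilon(\xi))<m_\infty+\delta$ for $\varepsilon$ small, uniformly in $\xi$. Continuity is the easy half: the map $\xi\mapsto W_{\xi,\varepsilon}$ is continuous from $M$ into $H^1_g(M)$ (for fixed $\varepsilon$) because the exponential map, the cutoff $\chi_r$, and $U$ are all smooth, and then Lemma~\ref{lem:teps} gives that $t_\varepsilon(\cdot)$ depends continuously on its argument as long as the positive part is not identically zero (which holds here since $W_{\xi,\varepsilon}>0$ on a neighbourhood of $\xi$). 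Composing, $\Phi_\varepsilon$ is continuous.

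For the energy estimate I would argue by contradiction over sequences: suppose there exist $\delta>0$, a sequence $\varepsilon_n\to 0$, and points $\xi_n\in M$ with $I_{\varepsilon_n}(\Phi_{\varepsilon_n}(\xi_n))\ge m_\infty+\delta$. Write $t_n:=t_{\varepsilon_n}(W_{\xi_n,\varepsilon_n})$. Since $\Phi_{\varepsilon_n}(\xi_n)=t_nW_{\xi_n,\varepsilon_n}\in\mathcal N_{\varepsilon_n}$, I would use the Nehari expression of the energy from Remark~\ref{rem:nehari}, namely
\[
I_{\varepsilon_n}(t_nW_{\xi_n,\varepsilon_n})=\left(\tfrac12-\tfrac1p\right)t_n^p|W_{\xi_n,\varepsilon_n}|_{\varepsilon_n,p}^p+\tfrac12\frac{\omega^2q^2}{\varepsilon_n^3}\int_M (t_nW_{\xi_n,\varepsilon_n})^2\psi^2(t_nW_{\xi_n,\varepsilon_n})d\mu_g-\tfrac12\frac{\omega^2q}{\varepsilon_n^3}\int_M (t_nW_{\xi_n,\varepsilon_n})^2\psi(t_nW_{\xi_n,\varepsilon_n})d\mu_g .
\]
Now Lemma~\ref{lem:teps} gives $t_n\to 1$ uniformly in $\xi$; the two $\psi$-integrals, after dividing and multiplying by the appropriate powers of $t_n$, are exactly the quantities shown to vanish in the limits \eqref{eq:lim1}–\eqref{eq:lim2} (uniformly in $\xi$, since all the estimates there were in terms of powers of $\varepsilon$ and of $t_n$ only, not of $\xi$). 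And by Remark~\ref{w-1}, $|W_{\xi_n,\varepsilon_n}|_{\varepsilon_n,p}^p\to \|U\|_{L^p(\mathbb R^3)}^p=\int_{\mathbb R^3}|\nabla U|^2+(a-\omega^2)U^2\,dx$, again uniformly in $\xi$. Hence
\[
I_{\varepsilon_n}(\Phi_{\varepsilon_n}(\xi_n))\longrightarrow \left(\tfrac12-\tfrac1p\right)\int_{\mathbb R^3}|\nabla U|^2+(a-\omega^2)U^2\,dx .
\]
It remains to identify this limit with $m_\infty$: since $U$ realizes the ground-state level of the limit functional on $\mathcal N_\infty$ (it solves $-\Delta U+(a-\omega^2)U=U^{p-1}$ and is the unique positive radial solution, so it lies on $\mathcal N_\infty$ and attains $m_\infty$), one has $m_\infty=\left(\tfrac12-\tfrac1p\right)\int_{\mathbb R^3}|\nabla U|^2+(a-\omega^2)U^2\,dx$. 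This contradicts $I_{\varepsilon_n}(\Phi_{\varepsilon_n}(\xi_n))\ge m_\infty+\delta$, and proves the existence of $\varepsilon_0(\delta)$ with the desired property.

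The main obstacle — the only delicate point — is making every limit \emph{uniform in $\xi\in M$}. For that I would not appeal to compactness of $M$ after the fact, but rather check that each ingredient is already uniform: Remark~\ref{w-1} states its limits uniformly in $\xi$; the estimates behind \eqref{eq:lim1}–\eqref{eq:lim2} bound the $\psi_\varepsilon$-integrals by explicit powers of $\varepsilon$ times powers of $t_\varepsilon$ with constants coming only from Sobolev embeddings on $(M,g)$ and from $\sup_\xi|W_{\xi,\varepsilon}|_{12/5,g}$ (itself uniform by Remark~\ref{w-1}); and Lemma~\ref{lem:teps} gives $t_\varepsilon(W_{\varepsilon,\xi})\to 1$ uniformly. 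With these uniformities in hand the contradiction argument closes simultaneously for all $\xi_n$, which is exactly what is needed.
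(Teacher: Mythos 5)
Your proposal is correct and follows essentially the same route as the paper: both rest on Lemma~\ref{lem:teps} ($t_{\varepsilon}\to1$ uniformly), Remark~\ref{w-1}, and the vanishing of the $\psi$-integrals in \eqref{eq:lim1}--\eqref{eq:lim2}, the only cosmetic differences being that you plug into the Nehari identity of Remark~\ref{rem:nehari} (hence also need \eqref{eq:lim2}) and phrase the uniform limit as a contradiction over sequences, whereas the paper computes $I_{\varepsilon}(t_{\varepsilon}W_{\varepsilon,\xi})$ directly from its definition and passes to the limit. Both arguments also use the same identification $m_{\infty}=\left(\tfrac12-\tfrac1p\right)\int_{\mathbb{R}^{3}}|\nabla U|^{2}+(a-\omega^{2})U^{2}\,dx$, which the paper leaves implicit and you make explicit.
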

\begin{proof}
It is easy to see that $\Phi_{\varepsilon}$ is continuous because
$t_{\varepsilon}(w)$ depends continously on $w\in H_{g}^{1}$.

Now, we have 
\[
I_{\varepsilon}\left(t_{\varepsilon}(W_{\varepsilon,\xi})W_{\varepsilon,\xi}\right)=\frac{1}{2}t_{\varepsilon}^{2}\|W_{\varepsilon,\xi}\|_{\varepsilon}^{2}-\frac{1}{p}t_{\varepsilon}^{p}|W_{\varepsilon,\xi}|_{\varepsilon,p}^{p}+\frac{1}{\varepsilon^{3}}qt_{\varepsilon}^{2}\int_{M}\psi(t_{\varepsilon}W_{\varepsilon,\xi})W_{\varepsilon,\xi}^{2}d\mu_{g}
\]
By Remark \ref{w-1} and Lemma \ref{lem:teps} and by (\ref{eq:lim1})
we have

\[
I_{\varepsilon}\left(t_{\varepsilon}(W_{\varepsilon,\xi})W_{\varepsilon,\xi}\right)\rightarrow m_{\infty}
\]
uniformly with respect to $\xi$. This concludes the proof.\end{proof}
\begin{rem}
\label{rem:limsup}We set
\[
m_{\varepsilon}=\inf_{{\mathcal N}_{\varepsilon}}I_{\varepsilon.}
\]
By Proposition \ref{prop:phieps} we have that 

\[
\limsup_{\varepsilon\rightarrow0}m_{\varepsilon}\le m_{\infty.}
\]

\end{rem}

\section{The map $\beta$}

For any $u\in{\mathcal N}_{\varepsilon}$ we can define a point $\beta(u)\in\mathbb{R}^{N}$
by 
\[
\beta(u)=\frac{\int_{M}x|u^{+}|^{p}dx}{\int_{M}|u^{+}|^{p}dx}.
\]
The function $\beta$ is well defined in ${\mathcal N}_{\varepsilon}$
because, if $u\in{\mathcal N}_{\varepsilon}$, then $u^{+}\neq0$.

We have to prove that, if $u\in{\mathcal N}_{\varepsilon}\cap I_{\varepsilon}^{m_{\infty}+\delta}$
then $\beta(u)\in M_{R}$.

Let us consider partitions of the compact manifold $M$. For a given
$\varepsilon>0$ we say that a finite partition ${\mathcal P}_{\varepsilon}=\left\{ P_{j}^{\varepsilon}\right\} _{j\in\Lambda_{\varepsilon}}$
of the manifold $M$ is a \textquotedblleft{}good\textquotedblright{}
partition if: for any $j\in\Lambda_{\varepsilon}$ the set $P_{j}^{\varepsilon}$
is closed; $P_{i}^{\varepsilon}\cap P_{j}^{\varepsilon}\subset\partial P_{i}^{\varepsilon}\cap\partial P_{j}^{\varepsilon}$
for any $i\ne j$; there exist $r_{1}(\varepsilon),r_{2}(\varepsilon)>0$
such that there are points $q_{j}^{\varepsilon}\in P_{j}^{\varepsilon}$
for which  $B_{g}(q_{j}^{\varepsilon},\varepsilon)\subset P_{j}^{\varepsilon}\subset B_{g}(q_{j}^{\varepsilon},r_{2}(\varepsilon))\subset B_{g}(q_{j}^{\varepsilon},r_{1}(\varepsilon))$,
with $r_{1}(\varepsilon)\ge r_{2}(\varepsilon)\ge C\varepsilon$ for
some positive constant $C$; lastly, there exists a finite number
$\nu(M)\in\mathbb{N}$ such that every $\xi\in M$ is contained in
at most $\nu(M)$ balls $B_{g}(q_{j}^{\varepsilon},r_{1}(\varepsilon))$,
where $\nu(M)$ does not depends on $\varepsilon$.
\begin{lem}
\label{lem:gamma}There exists a constant $\gamma>0$ such that, for
any $\delta>0$ and for any $\varepsilon<\varepsilon_{0}(\delta)$
as in Proposition \ref{prop:phieps}, given any ``good'' partition
${\mathcal P}_{\varepsilon}=\left\{ P_{j}^{\varepsilon}\right\} _{j}$
of the manifold $M$ and for any function $u\in{\mathcal N}_{\varepsilon}\cap I_{\varepsilon}^{m_{\infty}+\delta}$
there exists, for an index $\bar{j}$ a set $P_{\bar{j}}^{\varepsilon}$
such that 
\[
\frac{1}{\varepsilon^{3}}\int_{P_{\bar{j}}^{\varepsilon}}|u^{+}|^{p}dx\ge\gamma.
\]
\end{lem}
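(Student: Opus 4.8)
The plan is to reduce the statement to a single scale-invariant Sobolev inequality on each piece of the partition, combined with two elementary a priori bounds valid on $\mathcal{N}_\varepsilon$ at low energy. First, these bounds. If $u\in\mathcal{N}_\varepsilon\cap I_\varepsilon^{m_\infty+\delta}$, then by Remark \ref{rem:nehari} and $p\ge4$ every term containing $\psi(u)$ in $I_\varepsilon(u)$ is nonnegative, so $I_\varepsilon(u)\ge\left(\frac12-\frac1p\right)\|u\|_\varepsilon^2$ and hence $\|u\|_\varepsilon^2\le\frac{2p}{p-2}(m_\infty+\delta)=:K$. On the other hand $N_\varepsilon(u)=0$ together with $0\le\psi(u)\le1/q$ gives $|u^+|_{\varepsilon,p}^p\ge\|u\|_\varepsilon^2$, while the estimate $|u^+|_{\varepsilon,p}^p\le C\|u\|_\varepsilon^p$ from the proof of Lemma \ref{lem:nehari} (with $C$ the constant of the embedding $H_\varepsilon\hookrightarrow L^p$, which is uniformly bounded as $\varepsilon\to0$) yields $\|u\|_\varepsilon\ge c_0:=C^{-1/(p-2)}>0$. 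Thus $c_0^2\le|u^+|_{\varepsilon,p}^p$ and $\|u\|_\varepsilon^2\le K$, with $c_0$ and $K$ independent of $\varepsilon$ (small), of $u$, and of the chosen partition.

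The heart of the argument is the following local estimate. Fix a good partition and, for each $j$, put $\Omega_j:=B_g(q_j^\varepsilon,r_1(\varepsilon))\supset P_j^\varepsilon$ and
\[
s_j:=\frac{1}{\varepsilon^3}\int_{P_j^\varepsilon}|u^+|^p\,d\mu_g,\qquad t_j:=\frac{1}{\varepsilon^3}\int_{\Omega_j}\left(\varepsilon^2|\nabla_g u|^2+(a-\omega^2)u^2\right)d\mu_g.
\]
We show there is a constant $C>0$, independent of $\varepsilon,j,u$ and the partition, such that $s_j\le C\,t_j^{p/2}$. Since $2<p<6$, Hölder's inequality (interpolating $L^p$ between $L^2$ and $L^6$, then enlarging $P_j^\varepsilon$ to $\Omega_j$) gives
\[
\int_{P_j^\varepsilon}|u^+|^p\,d\mu_g\le\left(\int_{\Omega_j}|u^+|^6\,d\mu_g\right)^{\frac{p-2}{4}}\left(\int_{\Omega_j}|u^+|^2\,d\mu_g\right)^{\frac{6-p}{4}}.
\]
For the first factor one applies the scale-invariant Sobolev inequality on the small geodesic ball $\Omega_j$ — valid with a constant uniform in $\varepsilon$ and in the center $q_j^\varepsilon$, by working in normal coordinates and using compactness of $M$ — together with $r_1(\varepsilon)\ge C'\varepsilon$:
\[
\left(\int_{\Omega_j}|u^+|^6\,d\mu_g\right)^{1/3}\le c\int_{\Omega_j}\left(|\nabla_g u|^2+r_1(\varepsilon)^{-2}u^2\right)d\mu_g\le\frac{c}{\varepsilon^2}\int_{\Omega_j}\left(\varepsilon^2|\nabla_g u|^2+(a-\omega^2)u^2\right)d\mu_g=c\,\varepsilon\,t_j,
\]
and for the second factor one bounds $\int_{\Omega_j}|u^+|^2\,d\mu_g\le(a-\omega^2)^{-1}\int_{\Omega_j}(a-\omega^2)u^2\,d\mu_g\le(a-\omega^2)^{-1}\varepsilon^3\,t_j$. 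Dividing the Hölder inequality by $\varepsilon^3=\varepsilon^{3\frac{p-2}{4}}\cdot\varepsilon^{3\frac{6-p}{4}}$ and inserting these two estimates, the powers of $t_j$ add up to $\frac{3(p-2)}{4}+\frac{6-p}{4}=\frac{p}{2}$, which proves $s_j\le C\,t_j^{p/2}$.

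To conclude, since $p\ge4$ we have $p/2>1$, hence $s_j=s_j^{1-2/p}s_j^{2/p}\le C^{2/p}\left(\max_k s_k\right)^{1-2/p}t_j$. Summing over $j$, using that the $P_j^\varepsilon$ tile $M$ (so $\sum_j s_j=|u^+|_{\varepsilon,p}^p$) and that the balls $\Omega_j$ overlap with multiplicity at most $\nu(M)$ (so $\sum_j t_j\le\nu(M)\|u\|_\varepsilon^2$), we get
\[
c_0^2\le|u^+|_{\varepsilon,p}^p\le C^{2/p}\,\nu(M)\,\|u\|_\varepsilon^2\left(\max_k s_k\right)^{1-2/p}\le C^{2/p}\,\nu(M)\,K\left(\max_k s_k\right)^{1-2/p},
\]
whence $\max_k s_k\ge\left(c_0^2/(C^{2/p}\nu(M)K)\right)^{p/(p-2)}=:\gamma>0$, and any index $\bar j$ realizing this maximum gives the assertion. (To have $\gamma$ literally independent of $\delta$ one restricts, as is done throughout the paper, to $\delta$ in a fixed bounded range, so that $K$ is a fixed constant.) The SM system is treated identically, with the corresponding a priori bounds on $\mathcal{N}_\varepsilon$. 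The only genuinely delicate point is the scale-invariant Sobolev inequality on the shrinking balls $\Omega_j$ carrying the $\varepsilon$-weighted gradient, uniformly in $\varepsilon$ and in the center, together with the check that the powers of $\varepsilon$ in the Hölder–Sobolev chain cancel exactly; everything else is bookkeeping with the two a priori bounds and the combinatorial features ($\nu(M)$ and the tiling property) of a good partition.
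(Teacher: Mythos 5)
Your proof follows essentially the same route as the paper: split $|u^{+}|_{\varepsilon,p}^{p}$ over the good partition, pull out the largest piece, and control the sum of the local quantities by $\nu(M)\|u\|_{\varepsilon}^{2}$ via a rescaled Sobolev inequality on the balls $B_{g}(q_{j}^{\varepsilon},r_{1}(\varepsilon))$ with finite overlap; your local estimate $s_{j}\le C\,t_{j}^{p/2}$, equivalently $|u_{j}^{+}|_{\varepsilon,p}^{2}\le C^{2/p}t_{j}$, is precisely the content of the Lemma 5.3 of Benci--Bonanno--Micheletti that the paper quotes (here proved in-line, with the $\varepsilon$-scaling checked correctly). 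The one point where you deviate from the statement is the final bookkeeping: comparing $c_{0}^{2}\le|u^{+}|_{\varepsilon,p}^{p}$ with $\|u\|_{\varepsilon}^{2}\le K=\frac{2p}{p-2}(m_{\infty}+\delta)$ produces a $\gamma$ depending on $\delta$ through $K$, forcing your parenthetical restriction to a bounded range of $\delta$, whereas the lemma asserts one $\gamma$ for every $\delta>0$. This is avoidable, and the paper's argument avoids it: since $u\in{\mathcal N}_{\varepsilon}$ and $0\le\psi(u)\le1/q$ give $\|u\|_{\varepsilon}^{2}\le|u^{+}|_{\varepsilon,p}^{p}=\sum_{j}s_{j}\le\bigl(\max_{k}s_{k}\bigr)^{(p-2)/p}\sum_{j}s_{j}^{2/p}\le\bigl(\max_{k}s_{k}\bigr)^{(p-2)/p}C^{2/p}\nu(M)\|u\|_{\varepsilon}^{2}$, dividing by $\|u\|_{\varepsilon}^{2}>0$ (Lemma \ref{lem:nehari}) yields $\max_{k}s_{k}\ge\bigl(C^{2/p}\nu(M)\bigr)^{-p/(p-2)}$, a constant independent of $\delta$, with no need for the a priori bounds $c_{0}$ and $K$ at all.
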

\begin{proof}
Taking in account that $N_{\varepsilon}(u)=I'(u)[u]=0$ we have
\begin{eqnarray*}
\|u\|_{\varepsilon}^{2} & = & |u^{+}|_{\varepsilon,p}^{p}-\frac{q\omega^{2}}{\varepsilon^{3}}\int_{M}\left(2-q\psi(u)\right)\psi(u)u^{2}d\mu_{g}\le|u^{+}|_{\varepsilon,p}^{p}\\
 & = & \sum_{j}\frac{1}{\varepsilon^{3}}\int_{P_{j}}|u^{+}|^{p}d\mu_{g}=\sum_{j}|u_{j}^{+}|_{\varepsilon,p}^{p}\\
 & = & \sum_{j}|u_{j}^{+}|_{\varepsilon,p}^{p-2}|u_{j}^{+}|_{\varepsilon,p}^{2}\le\max_{j}\left\{ |u_{j}^{+}|_{\varepsilon,p}^{p-2}\right\} \sum_{j}|u_{j}^{+}|_{\varepsilon,p}^{2}
\end{eqnarray*}
where $u_{j}^{+}$ is the restriction of the function $u^{+}$on the
set $P_{j}$. 

At this point, arguing as in Lemma 5.3 of \cite{BBM}, we prove that
there exists a constant $C>0$ such that 
\[
\sum_{j}|u_{j}^{+}|_{\varepsilon,p}^{2}\le C\nu(M)\|u^{+}\|_{\varepsilon}^{2},
\]
thus 
\[
\max_{j}\left\{ |u_{j}^{+}|_{\varepsilon,p}^{p-2}\right\} \ge\frac{1}{C\nu(M)}
\]
 that conludes the proof.\end{proof}
\begin{prop}
\label{prop:baricentro}For any $\eta\in(0,1)$ there exists $\delta_{0}<m_{\infty}$
such that for any $\delta\in(0,\delta_{0})$ and any $\varepsilon\in(0,\varepsilon_{0}(\delta))$
as in Proposition \ref{prop:phieps}, for any function $u\in{\mathcal N}_{\varepsilon}\cap I_{\varepsilon}^{m_{\infty}+\delta}$
we can find a point $q=q(u)\in M$ such that 
\[
\frac{1}{\varepsilon^{3}}\int_{B(q,r/2)}(u^{+})^{p}>\left(1-\eta\right)\frac{2p}{p-2}m_{\infty}.
\]
\end{prop}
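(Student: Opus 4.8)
The plan is to argue by contradiction and concentration-compactness. Suppose the conclusion fails: then there are $\eta\in(0,1)$, sequences $\delta_n\to 0$, $\varepsilon_n\in(0,\varepsilon_0(\delta_n))$ and $u_n\in{\mathcal N}_{\varepsilon_n}\cap I_{\varepsilon_n}^{m_\infty+\delta_n}$ such that for every $q\in M$
\[
\frac{1}{\varepsilon_n^3}\int_{B(q,r/2)}(u_n^+)^p\,d\mu_g\le (1-\eta)\frac{2p}{p-2}m_\infty .
\]
The first step is to record, from Remark \ref{rem:nehari} and $0<\psi(u_n)\le 1/q$, that $I_{\varepsilon_n}(u_n)\ge\bigl(\frac12-\frac1p\bigr)|u_n^+|_{\varepsilon_n,p}^p - \tfrac12\frac{\omega^2 q}{\varepsilon_n^3}\int_M u_n^2\psi(u_n)\,d\mu_g$, and that the last term is $o(1)$: indeed $\|u_n\|_{\varepsilon_n}$ is bounded (use $N_{\varepsilon_n}(u_n)=0$ and $m_\infty+\delta_n\ge I_{\varepsilon_n}(u_n)\ge c\|u_n\|_{\varepsilon_n}^2$, as in Lemma \ref{lem:nehari}), hence $|u_n|_{\varepsilon_n,p}$ and $|u_n|_{\varepsilon_n,t}$ are bounded for $2\le t\le 6$, and the Maxwell-term estimate (same computation as in \eqref{eq:lim1}, relying on the $\varepsilon_n^3$-rescaling and the scaling of the $L^{12/5}$-norm) shows $\frac{1}{\varepsilon_n^3}\int_M u_n^2\psi(u_n)\,d\mu_g\to 0$. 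Consequently
\[
\Bigl(\tfrac12-\tfrac1p\Bigr)|u_n^+|_{\varepsilon_n,p}^p \le m_\infty + o(1),\qquad\text{i.e.}\qquad \frac{1}{\varepsilon_n^3}\int_M (u_n^+)^p\,d\mu_g \le \frac{2p}{p-2}m_\infty + o(1).
\]

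Next I would run a partition-and-concentration argument. Fix a ``good'' partition ${\mathcal P}_{\varepsilon_n}$ as introduced before Lemma \ref{lem:gamma}. By that lemma there is $\bar\jmath_n$ and $\gamma>0$ with $\frac{1}{\varepsilon_n^3}\int_{P_{\bar\jmath_n}^{\varepsilon_n}}(u_n^+)^p\ge\gamma$; since $P_{\bar\jmath_n}^{\varepsilon_n}\subset B_g(q_{\bar\jmath_n}^{\varepsilon_n},r_1(\varepsilon_n))$ and $r_1(\varepsilon_n)\to 0$, this produces a point $q_n:=q_{\bar\jmath_n}^{\varepsilon_n}\in M$ around which a fixed positive fraction of the $L^p$-mass sits in balls of radius $\to 0$. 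The heart of the argument is to upgrade this to: a full fraction arbitrarily close to the total mass concentrates near a single point. To do this I would rescale: set $\tilde u_n(y) = u_n(\exp_{q_n}(\varepsilon_n y))$ on a ball of radius $r/\varepsilon_n\to\infty$ in $T_{q_n}M\cong\mathbb R^3$, extended by the cut-off as in \eqref{eq:defweps}. The rescaled functions are bounded in $H^1$ of expanding balls (the metric $g$ in these coordinates converges to the Euclidean one), so after extracting a subsequence and using the non-vanishing guaranteed by Lemma \ref{lem:gamma}, $\tilde u_n\rightharpoonup \tilde u\not\equiv 0$ in $H^1_{loc}(\mathbb R^3)$. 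Passing to the limit in the Nehari identity and in the energy (again the Maxwell terms drop out in the limit by the $o(1)$ estimate above), one finds $\tilde u^+$ is a nonnegative function with
\[
\Bigl(\tfrac12-\tfrac1p\Bigr)\int_{\mathbb R^3}(\tilde u^+)^p\,dx \le m_\infty ,
\]
while the (limit of the) Nehari constraint forces $\int_{\mathbb R^3}|\nabla \tilde u|^2+(a-\omega^2)\tilde u^2 \le \int_{\mathbb R^3}(\tilde u^+)^p$, i.e. after a suitable dilation $t\tilde u^+$ lies on ${\mathcal N}_\infty$ with $I_\infty(t\tilde u^+)\le m_\infty$. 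By definition of $m_\infty$ this is an equality, so there is no loss of mass in the weak convergence: $\int_{\mathbb R^3}(\tilde u^+)^p = \frac{2p}{p-2}m_\infty = \lim \frac{1}{\varepsilon_n^3}\int_M(u_n^+)^p$. Hence for every fixed $\rho>0$,
\[
\frac{1}{\varepsilon_n^3}\int_{B_g(q_n,\rho\varepsilon_n)}(u_n^+)^p\,d\mu_g \;=\; \int_{B(0,\rho)}(\tilde u_n^+)^p\bigl(1+o(1)\bigr)\,dx \;\longrightarrow\; \int_{B(0,\rho)}(\tilde u^+)^p\,dx,
\]
and choosing $\rho$ large this lower bound exceeds $(1-\eta)\frac{2p}{p-2}m_\infty$; since $\rho\varepsilon_n < r/2$ eventually, taking $q=q_n$ contradicts the standing assumption.

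The main obstacle, and the step deserving the most care, is the no-vanishing-of-mass claim: controlling the weak limit $\tilde u$ so that $\int(\tilde u^+)^p$ recovers the \emph{entire} limiting mass $\frac{2p}{p-2}m_\infty$ rather than a proper fraction. This is where the Nehari constraint together with the sharp characterization of $m_\infty$ (any function on ${\mathcal N}_\infty$ has energy $\ge m_\infty$, so a weakly convergent Palais--Smale-type sequence at level $m_\infty$ cannot split) is essential: a dichotomy into two or more bubbles would produce total limiting energy $\ge 2m_\infty > m_\infty$, contradicting $I_{\varepsilon_n}(u_n)\le m_\infty+\delta_n$. One must also check uniformity of the coordinate expansions $\exp_{q_n}$ in $q_n\in M$ (compactness of $M$ handles this) and that the cut-off $\chi_r$ does not interfere, since $\rho\varepsilon_n\ll r$. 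Once these points are in place the contradiction closes the proof; I would then remark that $\delta_0<m_\infty$ can be imposed from the start so that the sublevel $I_{\varepsilon}^{m_\infty+\delta}$ is the relevant one.
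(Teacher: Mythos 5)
Your overall skeleton (argue by contradiction, use Lemma \ref{lem:gamma} to locate a point $q_n$ carrying mass $\gamma$, rescale, extract a nontrivial weak limit, prove no loss of mass, contradict the assumed bound on balls) is the same as the paper's, and your preliminary estimates are fine: boundedness of $\|u_n\|_{\varepsilon_n}$ from Remark \ref{rem:nehari}, and the bound $\frac{1}{\varepsilon^3}\int_M u^2\psi(u)\,d\mu_g\le C\varepsilon^2\|u\|_\varepsilon^4$ do hold for general $u$ in the sublevel. The genuine gap is at the crucial step: you claim that ``the (limit of the) Nehari constraint forces'' $\int_{\mathbb{R}^3}|\nabla\tilde u|^2+(a-\omega^2)\tilde u^2\le\int_{\mathbb{R}^3}(\tilde u^+)^p$. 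The Nehari identity for $u_n$ is global on $M$, and after rescaling, weak lower semicontinuity only gives $\|\tilde u\|_a^2\le\liminf\|u_n\|_{\varepsilon_n}^2\le\liminf|u_n^+|_{\varepsilon_n,p}^p$, whereas $\int_{\mathbb{R}^3}(\tilde u^+)^p$ is only the portion of the limiting $L^p$ mass seen near $q_n$ and may be strictly smaller than $\liminf|u_n^+|_{\varepsilon_n,p}^p$; no comparison between $\|\tilde u\|_a^2$ and this localized mass follows from weak convergence alone. Your fallback remark that a dichotomy into two bubbles would cost energy $\ge 2m_\infty$ presupposes that each piece can be projected onto ${\mathcal N}_\infty$ with energy at least $m_\infty$, which needs precisely the missing inequality (or a full Lions-type splitting argument that you do not carry out); as written the argument is circular at its crux.

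The paper closes this gap with a different ingredient: since it first works in the sublevel $I_\varepsilon^{m_\varepsilon+2\delta}$, i.e. near the infimum of $I_\varepsilon$ on ${\mathcal N}_\varepsilon$, the Ekeland principle together with Lemma \ref{lem:freePS} allows one to assume $|I'_{\varepsilon_k}(u_k)[\varphi]|\le\sigma_k\|\varphi\|_{\varepsilon_k}$ with $\sigma_k\to0$. Testing with rescaled test functions and showing that the rescaled Maxwell potential $\tilde\psi_k$ tends to $0$ (its weak limit solves $-\Delta\bar\psi=0$, hence vanishes), the weak limit $w$ is then an actual solution of $-\Delta w+(a-\omega^2)w=w^{p-1}$, so $w\in{\mathcal N}_\infty$ automatically; combined with $w\not\equiv0$ (from Lemma \ref{lem:gamma}), $\|w\|_a^2\le\liminf\|w_k\|_a^2\le\frac{2p}{p-2}m_\infty$ and the minimality of $m_\infty$, this forces $\|w\|_a^2=|w|_p^p=\frac{2p}{p-2}m_\infty$ and strong $H^1(\mathbb{R}^3)$ convergence, which is the no-loss-of-mass you need. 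This is also why the paper has a two-step structure --- first the statement for $I_\varepsilon^{m_\varepsilon+2\delta}$, then the identification $m_{\varepsilon_k}\to m_\infty$ to pass to $I_\varepsilon^{m_\infty+\delta}$ --- a reduction your proposal would also have to incorporate if you repair it along the Ekeland/Palais--Smale route.
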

\begin{proof}
We first prove the proposition for $u\in{\mathcal N}_{\varepsilon}\cap I_{\varepsilon}^{m_{\varepsilon}+2\delta}$. 

By contradiction, we assume that there exists $\eta\in(0,1)$ such
that we can find two sequences of vanishing real number $\delta_{k}$
and $\varepsilon_{k}$ and a sequence of functions $\left\{ u_{k}\right\} _{k}$
such that $u_{k}\in{\mathcal N}_{\varepsilon_{k}}$, 
\begin{multline}
m_{\varepsilon_{k}}\le I_{\varepsilon_{k}}(u_{k})=\left(\frac{1}{2}-\frac{1}{p}\right)\|u_{k}\|_{\varepsilon_{k}}^{2}+\\
+\left(\frac{1}{2}-\frac{2}{p}\right)\frac{\omega^{2}q}{\varepsilon_{k}^{3}}\int_{M}u_{k}^{2}\psi(u_{k})d\mu_{g}+\frac{\omega^{2}q^{2}}{\varepsilon_{k}^{3}p}\int_{M}u_{k}^{2}\psi^{2}(u_{k})d\mu_{g}\\
\le m_{\varepsilon_{k}}+2\delta_{k}\le m_{\infty}+3\delta_{k}\label{eq:mepsk}
\end{multline}
 for $k$ large enough (see Remark \ref{rem:limsup}), and, for any
$q\in M$, 
\begin{equation}
\frac{1}{\varepsilon_{k}^{3}}\int_{B_{g}(q,r/2)}(u_{k}^{+})^{p}d\mu_{g}\le\left(1-\eta\right)\frac{2p}{p-2}m_{\infty}.\label{eq:contradiction}
\end{equation}
By Ekeland principle and by Lemma \ref{lem:freePS} we can assume
\begin{equation}
\left|I'_{\varepsilon_{k}}(u_{k})[\varphi]\right|\le\sigma_{k}\|\varphi\|_{\varepsilon_{k}}\text{ where }\sigma_{k}\rightarrow0.\label{eq:ps}
\end{equation}

By Lemma \ref{lem:gamma} there exists a set $P_{k}^{\varepsilon_{k}}\in{\mathcal P}_{\varepsilon_{k}}$
such that 
\[
\frac{1}{\varepsilon_{k}^{3}}\int_{P_{k}^{\varepsilon_{k}}}|u_{k}^{+}|^{p}d\mu_{g}\ge\gamma.
\]
 and we choose a point $q_{k}\in P_{k}^{\varepsilon_{k}}$. We define
\[
u_{k}(x)\chi_{r}\left(\left|\exp_{q_{k}}^{-1}(x)\right|\right)=u_{k}(\exp_{q_{k}}(y))\chi_{r}(|y|)=u_{k}(\exp_{q_{k}}(\varepsilon_{k}z))\chi_{r}(\varepsilon_{k}|z|):=w_{k}(z)
\]
where $x\in B_{g}(q_{k},r)$ and $z\in B(0,r/\varepsilon_{k})\subset\mathbb{R}^{3}$. 

We have that $w_{k}\in H_{0}^{1}(B(0,r/\varepsilon_{k}))\subset H^{1}(\mathbb{R}^{3})$.
By equation (\ref{eq:mepsk}) we have 
\[
\|w_{k}\|_{H^{1}(\mathbb{R}^{3})}^{2}\le C\|u_{k}\|_{\varepsilon_{k}}^{2}\le C.
\]
So there exists a $w\in H^{1}(\mathbb{R}^{3})$ such that, up to subsequences,
$w_{k}\rightarrow w$ weakly in $H^{1}(\mathbb{R}^{3})$ and strongy
in $L_{\text{loc}}^{t}(\mathbb{R}^{3})$. 

We claim that $w\ge0$ is a weak solution of 
\begin{equation}
-\Delta w+(a-\omega^{2})w=w^{p-1}\label{eq:weq}
\end{equation}
We notice that for every $f\in C_{0}^{\infty}(\mathbb{R}^{3})$, there
exists $k$ such that $\text{supp}f\subset B(0,r/2\varepsilon_{k})$.
Thus, considered 
\[
f_{k}(x):=f\left(\frac{1}{\varepsilon_{k}}\exp_{q_{k}}^{-1}(x)\right)=f(z)\text{ where }x=\exp_{q_{k}}(\varepsilon_{k}z),
\]
we have that $\text{supp}f_{k}\subset B_{g}(q_{k},r/2)$. 

Moreover, we have $\|f_{k}\|_{\varepsilon_{k}}\le C\|f\|_{H^{1}(\mathbb{R}^{3})}$,
thus, by Ekeland principle we have 
\begin{equation}
|I'_{\varepsilon_{k}}(u_{k})[f_{k}]|\le\sigma_{k}\|f_{k}\|_{\varepsilon_{k}}\rightarrow0\text{ while }k\rightarrow\infty.\label{eq:stella}
\end{equation}
On the other hand we have 
\begin{multline}
I'_{\varepsilon}(u_{k})[f_{k}]=\frac{1}{\varepsilon_{k}^{3}}\int_{M}\varepsilon_{k}^{2}\nabla_{g}u_{k}\nabla_{g}f_{k}+au_{k}f_{k}-(u_{k}^{+})^{p-1}f_{k}-\omega^{2}(1-q\psi(u_{k}))^{2}u_{k}f_{k}d\mu_{g}\\
=\left\langle u_{k},f_{k}\right\rangle _{\varepsilon_{k}}-\frac{1}{\varepsilon_{k}^{3}}\int_{M}(u_{k}^{+})^{p-1}f_{k}d\mu_{g}+\frac{q\omega^{2}}{\varepsilon_{k}^{3}}\int_{M}\left(2-q\psi(u_{k})\right)\psi(u_{k})u_{k}f_{k}d\mu_{g}\\
=\int_{T_{k}}\left[\sum_{ij}g_{q_{k}}^{ij}(\varepsilon_{k}z)\partial_{z_{i}}w_{k}(z)\partial_{z_{j}}f(z)+(a-\omega^{2})w_{k}(z)f(z)\right]|g_{q_{k}}(\varepsilon z)|^{1/2}dz\\
-\int_{T_{k}}(w_{k}^{+}(z))^{p-1}f(z)|g_{q_{k}}(\varepsilon z)|^{1/2}dz\\
+q\omega^{2}\int_{T_{k}}\left(2-q\tilde{\psi}_{k}(z)\right)\tilde{\psi}_{k}(z)w_{k}(z)f(z)|g_{q_{k}}(\varepsilon z)|^{1/2}dz\label{eq:Iprimouk-1}
\end{multline}

Here $T_{k}=B(0,r/2\varepsilon_{k})\cap\text{supp}f$ and $\psi(u_{k})(x):=\psi_{k}(x)=\psi_{k}(\exp_{q_{k}}(\varepsilon_{k}z)):=\tilde{\psi}_{k}(z)$
where $x\in B_{g}(q_{k},r)$ and $z\in B(0,r/\varepsilon_{k})$. Since
$\text{supp}f_{k}\subset B_{g}(q_{k},r/2)$, by definition of $\chi_{r}$
and by (\ref{eq:psikgm}) we have
\begin{eqnarray*}
0 & = & \int_{M}\nabla_{g}\psi(u_{k})\nabla_{g}f_{k}+(1+q^{2}u_{k})\psi(u_{k})f_{k}-qu_{k}^{2}f_{k}d\mu_{g}\\
 & = & \frac{\varepsilon_{k}^{3}}{\varepsilon_{k}^{2}}\int_{T_{k}}\sum_{ij}g_{q_{k}}^{ij}(\varepsilon_{k}z)\partial_{z_{i}}\tilde{\psi}_{k}(z)\partial_{z_{j}}f(z)|g_{q_{k}}(\varepsilon z)|^{1/2}dz\\
 &  & +\varepsilon_{k}^{3}\int_{T_{k}}(1+q^{2}w_{k}(z))\tilde{\psi}_{k}(z)f(z)|g_{q_{k}}(\varepsilon z)|^{1/2}dz\\
 &  & -\varepsilon_{k}^{3}\int_{T_{k}}qw_{k}^{2}(z)f(z)|g_{q_{k}}(\varepsilon z)|^{1/2}dz,
\end{eqnarray*}
so
\begin{multline}
\int_{T_{k}}\sum_{ij}g_{q_{k}}^{ij}(\varepsilon_{k}z)\partial_{z_{i}}\tilde{\psi}_{k}(z)\partial_{z_{j}}f(z)|g_{q_{k}}(\varepsilon z)|^{1/2}dz=\\
=\varepsilon_{k}^{2}\int_{T_{k}}\left((1+q^{2}w_{k}(z))\tilde{\psi}_{k}(z)+qw_{k}^{2}(z)\right)f(z)|g_{q_{k}}(\varepsilon z)|^{1/2}dz\label{eq:psitildek}
\end{multline}

Arguing as in Lemma \ref{lem:w-psi} we have that 
\begin{eqnarray*}
c\int_{B(0,r/\varepsilon_{k})}|\nabla\tilde{\psi}_{k}(z)|^{2}dz & \le & \frac{1}{\varepsilon_{k}}\int_{M}|\nabla_{g}\psi_{k}|^{2}d\mu_{g}\le\frac{1}{\varepsilon_{k}}q\int_{M}u_{k}^{2}\psi_{k}\\
 & \le & \frac{1}{\varepsilon_{k}^{3}}\int u_{k}^{2}\le\|u_{k}\|_{\varepsilon_{k}}^{2}\le C
\end{eqnarray*}
where $c,C>0$ are suitable constants. Moreover, by Lemma \ref{lem:w-psi}
\begin{eqnarray*}
c_{1}\int_{B(0,r/\varepsilon_{k})}|\tilde{\psi}_{k}(z)|^{2}dz & \le & \frac{1}{\varepsilon_{k}^{3}}\int_{M}\psi_{k}^{2}d\mu_{g}\le\|\psi_{k}\|_{H_{g}^{1}}^{2}\le c_{2}\frac{1}{\varepsilon_{k}^{3}}|u_{k}|_{4,g}^{4}\\
 & \le & c_{2}|u_{k}|_{4,\varepsilon}^{4}\le C
\end{eqnarray*}
where $c_{1},c_{2},C>0$ are suitable constants. Conlcuding, we have
that $\|\tilde{\psi}_{k}\|_{H^{1}(B(0,r/\varepsilon_{k}))}$ is bounded,
and then also $\|\chi_{r/\varepsilon_{k}}(z)\tilde{\psi}_{k}(z)\|_{H^{1}(\mathbb{R}^{3})}^{2}$
is bounded. So, there exists a $\bar{\psi}\in H^{1}(\mathbb{R}^{3})$
such that $\bar{\psi}_{k}(z):=\chi_{r/\varepsilon_{k}}(z)\tilde{\psi}_{k}(z)\rightarrow\bar{\psi}$
weakly in $H^{1}(\mathbb{R}^{3})$ and strongly in $L_{\text{loc}}^{p}(\mathbb{R}^{3})$
for any $2\le p<6$. 

By (\ref{eq:psitildek}) we have
\begin{multline*}
\int_{\mathbb{R}^{3}}\sum_{ij}g_{q_{k}}^{ij}(\varepsilon_{k}z)\partial_{z_{i}}\bar{\psi}_{k}(z)\partial_{z_{j}}f(z)|g_{q_{k}}(\varepsilon z)|^{1/2}dz=\\
=\varepsilon_{k}^{2}\int_{\mathbb{R}^{3}}\left((1+q^{2}w_{k}(z))\bar{\psi}_{k}(z)+qw_{k}^{2}(z)\right)f(z)|g_{q_{k}}(\varepsilon z)|^{1/2}dz
\end{multline*}
and, using that $g_{k}^{ij}(\varepsilon z)=\delta_{ij}+O(\varepsilon_{k}^{2}|z|)$
and that $|g_{q}(\varepsilon z)|^{1/2}=1+O(\varepsilon_{k}^{2}|z|)$
we get 
\begin{multline*}
\int_{\mathbb{R}^{3}}\nabla\bar{\psi}_{k}(z)\nabla f(z)dz=\varepsilon_{k}^{2}\int_{\mathbb{R}^{3}}\left((1+q^{2}w_{k}(z))\bar{\psi}_{k}(z)+qw_{k}^{2}(z)\right)f(z)dz+O(\varepsilon_{k}^{2}).
\end{multline*}
Thus, the function $\bar{\psi}\in H^{1}(\mathbb{R}^{3})$ is a weak
solution of $-\Delta\bar{\psi}=0$, so $\bar{\psi}=0$.

At this point, arguing as above we have
\begin{multline}
\frac{1}{\varepsilon_{k}^{3}}\int_{M}\left(2-q\psi(u_{k})\right)\psi(u_{k})u_{k}f_{k}d\mu_{g}=\frac{1}{\varepsilon_{k}^{3}}\int_{B_{g}(q_{k},r/2)}\left(2-q\psi(u_{k})\right)\psi(u_{k})u_{k}f_{k}d\mu_{g}=\\
=\int_{\text{supp}f}\left(2-q\bar{\psi}_{k}\right)\bar{\psi}_{k}w_{k}f|g_{q_{k}}(\varepsilon z)|^{1/2}dz\rightarrow0\label{eq:conv0}
\end{multline}
while $k\rightarrow\infty$ because $\bar{\psi}_{k}\rightarrow0$
strongly in $L_{\text{loc}}^{p}(\mathbb{R}^{3})$ for any $2\le p<6$.
Thus, by (\ref{eq:conv0}), (\ref{eq:stella}) and (\ref{eq:Iprimouk-1})
and because $w_{k}\rightharpoonup w$ in $H^{1}$ we deduce that,
for any $f\in C_{0}^{\infty}(\mathbb{R}^{3})$, it holds 
\[
\int_{\mathbb{R}^{3}}\nabla w\nabla f+(a-\omega^{2})wf-(w^{+})^{p-1}f=0.
\]
Thus, $w$ is a weak solution of $-\Delta w+(a-\omega^{2})w=w^{p-1}$.
Moreover, by (\ref{eq:mepsk}) we have 
\begin{equation}
\|w\|_{a}^{2}\le\liminf_{k}\|w_{k}\|_{a}^{2}\le\frac{2p}{p-2}m_{\infty}\label{eq:normaa}
\end{equation}

Set 
\[
\mathcal{N}_{\infty}=\left\{ v\in H^{1}(\mathbb{R}^{3})\smallsetminus\left\{ 0\right\} \ :\ \|w\|_{a}^{2}=|w|_{p}^{p}\right\} ,
\]
we have that $w\in\mathcal{N}_{\infty}\cup\left\{ 0\right\} $. We
want to prove now that $w\not\equiv0$. In fact, by the definition
of the partition ${\mathcal P}_{\varepsilon}$ we can choose a $T>0$
such that $P_{k}^{\varepsilon}\subset B_{g}(q_{k},\varepsilon_{k}T)$.
Now, by Lemma \ref{lem:gamma} we have, for $k$ large
\begin{eqnarray}
\int_{B(0,T)}\left(w_{k}^{+}\right)^{p}dz & = & \int_{B(0,T)}\left(u_{k}^{+}(\exp_{q_{k}}(\varepsilon_{k}z))\chi_{r}(\varepsilon_{k}|z|)\right)^{p}dz\nonumber \\
 & \ge & \frac{C}{\varepsilon_{k}^{3}}\int_{B(0,\varepsilon_{k}T)}\left(u_{k}^{+}(\exp_{q_{k}}(y))\right)^{p}|g_{q_{k}}(y)|^{1/2}dy\nonumber \\
 & \ge & C\frac{1}{\varepsilon_{k}^{3}}\int_{P_{k}^{\varepsilon_{k}}}|u_{k}^{+}|^{p}d\mu_{g}\ge\gamma.\label{eq:contogamma}
\end{eqnarray}
Thus, $w\in\mathcal{N}_{\infty}$ and, we have, in light of (\ref{eq:normaa})
that $\|w\|_{a}^{2}=|w|_{p}^{p}=\frac{2p}{p-2}m_{\infty}$ and that
that $ $$w_{k}\rightarrow w$ strongly in $H^{1}(\mathbb{R}^{3})$.

Arguing as in (\ref{eq:contogamma}), and remembering that $|g_{q}(\varepsilon_{k}z)|^{1/2}=1+O(\varepsilon_{k}^{2}|z|)$,
fixed $T$, by (\ref{eq:contradiction}) we get, for $k$ large

\[
\int_{B(0,T)}\left(w_{k}^{+}\right)^{p}dz\le\left(1-\frac{\eta}{2}\right)\frac{2p}{p-2}m_{\infty}.
\]
Moreover, there exists a $T>0$ such that $\int_{B(0,T)}w^{p}dz>\left(1-\frac{\eta}{8}\right)\frac{2p}{p-2}m_{\infty}$
and, for $w_{k}\rightarrow w$ strongly in $L_{\text{loc}}^{p}(\mathbb{R}^{n})$,
$\int_{B(0,T)}\left(w_{k}^{+}\right)^{p}dz>\left(1-\frac{\eta}{4}\right)\frac{2p}{p-2}m_{\infty}$.
This gives us the contradiction. At this point we have proved the
claim for $u\in{\mathcal N}_{\varepsilon}\cap I_{\varepsilon}^{m_{\varepsilon}+2\delta}$.
By the thesis for $2u\in{\mathcal N}_{\varepsilon}\cap I_{\varepsilon}^{m_{\varepsilon}+\delta}$
we can prove the claim in the general case. Indeed, for $u_{k}$ it
holds

\begin{eqnarray*}
I_{\varepsilon_{k}}(u_{k}) & = & \left(\frac{1}{2}-\frac{1}{p}\right)|u_{k}^{+}|_{p,\varepsilon_{k}}^{p}+\frac{1}{2}\frac{\omega^{2}q^{2}}{\varepsilon_{k}^{3}}\int_{M}u_{k}^{2}\psi_{k}^{2}(u_{k})d\mu_{g}-\frac{1}{2}\frac{\omega^{2}q}{\varepsilon_{k}^{3}}\int_{M}u_{k}^{2}\psi(u_{k})d\mu_{g}\\
 & \ge & (1-\eta)m_{\infty}-\frac{1}{2}\frac{\omega^{2}q}{\varepsilon_{k}^{3}}\int_{M}u_{k}^{2}\psi(u_{k})d\mu_{g}
\end{eqnarray*}
By compactness of $M$ there exists $q_{1},\dots,q_{l}$ such that
\[
\frac{1}{\varepsilon_{k}^{3}}\int_{M}u_{k}^{2}\psi(u_{k})d\mu_{g}\le\sum_{i=1}^{l}\frac{1}{\varepsilon_{k}^{3}}\int_{B_{g}(q_{i},r)}u_{k}^{2}\psi(u_{k})d\mu_{g}
\]
For any $q_{i}$, arguing as above, we can introduce two sequences
of functions $w_{k}^{i}$ and $\bar{\psi}_{k}$ such that $w_{k}^{i}\rightarrow w^{i}$,
strongly in $H^{1}(\mathbb{R}^{3})$, $w^{i}$ solution of (\ref{eq:weq}),
and that $\bar{\psi}_{k}^{i}\rightarrow0$ strongly in $L_{\text{loc}}^{p}(\mathbb{R}^{3})$
for any $2\le p<6$. We thus have that, for any $q^{i}$ 
\[
\frac{1}{\varepsilon_{k}^{3}}\int_{B_{g}(q^{i},r)}u_{k}^{2}\psi(u_{k})d\mu_{g}\le\int_{\mathbb{R}^{3}}\left(w_{k}^{i}\right)^{2}\bar{\psi}_{k}^{i}dx\rightarrow0.
\]
At this point we have that $\limsup_{k}m_{\varepsilon_{k}}\ge m_{\infty}$,
and, in light of Remark \ref{rem:limsup}, that $\lim_{k}m_{\varepsilon_{k}}=m_{\infty}$.
Hence, when $\varepsilon,\delta$ are small enough, ${\mathcal N}_{\varepsilon}\cap I_{\varepsilon}^{m_{\infty}+\delta}\subset{\mathcal N}_{\varepsilon}\cap I_{\varepsilon}^{m_{\varepsilon}+2\delta}$
and the general claim follows.\end{proof}
\begin{prop}
There exists $\delta_{0}\in(0,m_{\infty})$ such that for any $\delta\in(0,\delta_{0})$
and any $\varepsilon\in(0,\varepsilon(\delta_{0}))$ (see Proposition
\ref{prop:phieps}), for every function $u\in{\mathcal N}_{\varepsilon}\cap I_{\varepsilon}^{m_{\infty}+\delta}$
it holds $\beta(u)\in M_{r}$. Moreover the composition 
\[
\beta\circ\Phi_{\varepsilon}:M\rightarrow M_{r}
\]
 is s homotopic to the immersion $i:M\rightarrow M_{r}$ \end{prop}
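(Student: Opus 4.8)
The plan is to obtain both assertions from the concentration estimate of Proposition~\ref{prop:baricentro}, combined with an upper bound on the total $L^p$ mass of low‑energy functions on $\mathcal N_\varepsilon$. First I would show that, for $u\in\mathcal N_\varepsilon\cap I_\varepsilon^{m_\infty+\delta}$, one has $|u^+|_{\varepsilon,p}^p\le\frac{2p}{p-2}(m_\infty+\delta)+o(1)$ as $\varepsilon\to0$. This comes from Remark~\ref{rem:nehari}: in its first expression all the $\psi$‑terms are nonnegative, because $0<\psi(u)\le1/q$ and $p\ge4$, so $\|u\|_\varepsilon^2\le\frac{2p}{p-2}(m_\infty+\delta)$ is bounded; then, arguing as in the proof of Lemma~\ref{lem:w-psi} and rescaling as in~(\ref{eq:lim1})--(\ref{eq:lim2}), the quantity $\frac1{\varepsilon^3}\int_M u^2\psi(u)\,d\mu_g$ is $O(\varepsilon^2\|u\|_\varepsilon^4)=o(1)$, and substituting this into the second expression of Remark~\ref{rem:nehari} (again using $q\psi(u)\le1$) yields the stated bound on $|u^+|_{\varepsilon,p}^p$.

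Next I would prove $\beta(u)\in M_r$. Fix $\eta\in(0,1)$; for $u\in\mathcal N_\varepsilon\cap I_\varepsilon^{m_\infty+\delta}$ Proposition~\ref{prop:baricentro} gives a point $q=q(u)\in M$ with $\frac1{\varepsilon^3}\int_{B_g(q,r/2)}(u^+)^p>(1-\eta)\frac{2p}{p-2}m_\infty$. Combined with the mass bound of the previous step, the fraction of the $L^p$‑mass of $u^+$ carried by $M\setminus B_g(q,r/2)$ is at most $\rho(\eta,\delta,\varepsilon):=\frac{\delta+\eta m_\infty+o(1)}{(1-\eta)m_\infty}$, which tends to $0$ as $\eta,\delta,\varepsilon\to0$. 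Since $M$ is compact it is bounded in $\mathbb R^N$, say $|x|\le R_0$ on $M$; writing $\beta(u)-q$ as the $|u^+|^p$‑weighted average of $x-q$ over $M$ and splitting the integral into $B_g(q,r/2)$ and its complement gives $|\beta(u)-q|\le\frac r2+2R_0\,\rho(\eta,\delta,\varepsilon)$ (here I use that $r$ is below the injectivity radius, so after possibly shrinking $r$ the points of $B_g(q,r/2)$ lie within Euclidean distance $\le r/2$ of $q$). Choosing first $\eta$, then $\delta_0$, then $\varepsilon_0(\delta)$ small enough that $2R_0\,\rho<r/2$, we get $d(\beta(u),M)\le|\beta(u)-q|<r$, i.e.\ $\beta(u)\in M_r$.

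For the homotopy, since $W_{\varepsilon,\xi}\ge0$ we have $\beta(\Phi_\varepsilon(\xi))=\big(\int_M x\,W_{\varepsilon,\xi}^p\,d\mu_g\big)\big/\big(\int_M W_{\varepsilon,\xi}^p\,d\mu_g\big)$. Passing to normal coordinates $x=\exp_\xi(\varepsilon z)$ and rescaling, the numerator of $\beta(\Phi_\varepsilon(\xi))-\xi$ becomes $\varepsilon^3\int_{B(0,r/\varepsilon)}(\exp_\xi(\varepsilon z)-\xi)\,U(z)^p\chi_r(\varepsilon|z|)^p|g_\xi(\varepsilon z)|^{1/2}\,dz$; since $\exp_\xi(\varepsilon z)-\xi=O(\varepsilon|z|)$ uniformly in $\xi$ and $\int_{\mathbb R^3}|z|\,U(z)^p\,dz<\infty$ by the exponential decay of $U$, while $\frac1{\varepsilon^3}\int_M W_{\varepsilon,\xi}^p\,d\mu_g\to\|U\|_{L^p(\mathbb R^3)}^p>0$ by Remark~\ref{w-1}, I obtain $|\beta(\Phi_\varepsilon(\xi))-\xi|\le C\varepsilon\to0$ uniformly in $\xi$. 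Hence for $\varepsilon$ small the straight‑line homotopy $H(t,\xi)=(1-t)\xi+t\,\beta(\Phi_\varepsilon(\xi))$ stays inside $B(\xi,r)\subset M_r$ for every $t\in[0,1]$, which exhibits the required homotopy in $M_r$ between the immersion $i$ and $\beta\circ\Phi_\varepsilon$.

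The main obstacle is the second paragraph: one has to keep track of the geometric constants ($R_0$, the injectivity radius, the metric distortion of $\exp_\xi$) precisely enough to guarantee that the tail of the $L^p$‑mass is negligible relative to $r$, so that the barycenter truly lands in $M_r$; the remaining steps are rescaling computations of the same type already carried out in Lemmas~\ref{lem:teps} and~\ref{lem:w-psi}.
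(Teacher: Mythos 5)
Your argument is correct and follows essentially the same route as the paper: Proposition \ref{prop:baricentro} gives the concentration point, the Nehari identity of Remark \ref{rem:nehari} together with the bound $\frac{1}{\varepsilon^{3}}\int_{M}u^{2}\psi(u)\,d\mu_{g}\le C\varepsilon^{2}\|u\|_{\varepsilon}^{4}$ controls the total $L^{p}$ mass, and the same splitting of the barycenter integral yields $\beta(u)\in M_{r}$. The only difference is that you spell out the homotopy step (showing $|\beta(\Phi_{\varepsilon}(\xi))-\xi|\le C\varepsilon$ uniformly and using the straight-line homotopy), which the paper dismisses as standard; that part is also correct.
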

\begin{proof}
By Proposition \ref{prop:baricentro}, for any function $u\in{\mathcal N}_{\varepsilon}\cap I_{\varepsilon}^{m_{\infty}+\delta}$,
for any $\eta\in(0,1)$ and for $\varepsilon,\delta$ small enough,
we can find a point $q=q(u)\in M$ such that 
\[
\frac{1}{\varepsilon^{3}}\int_{B(q,r/2)}(u^{+})^{p}>\left(1-\eta\right)\frac{2p}{p-2}m_{\infty}.
\]
Moreover, since $u\in{\mathcal N}_{\varepsilon}\cap I_{\varepsilon}^{m_{\infty}+\delta}$
we have 
\begin{align*}
m_{\infty}+\delta & \ge I_{\varepsilon}(u)=\left(\frac{p-2}{2p}\right)|u^{+}|_{p,\varepsilon}^{p}+\frac{\omega^{2}q^{2}}{2\varepsilon^{3}}\int_{M}u^{2}\psi^{2}(u)d\mu_{g}-\frac{\omega^{2}q}{2\varepsilon^{3}}\int_{M}u^{2}\psi(u)d\mu_{g}\ge\\
 & \ge\left(\frac{p-2}{2p}\right)|u^{+}|_{p,\varepsilon}^{p}-\frac{\omega^{2}q}{2\varepsilon^{3}}\int_{M}u^{2}\psi(u)d\mu_{g}
\end{align*}

Now, arguing as in Lemma \ref{lem:w-psi} we have that $\|\psi(u)\|_{H^{1}(M)}\le\left(\int_{M}u^{12/5}\right)^{5/6}$,
then 
\begin{eqnarray*}
\frac{1}{\varepsilon^{3}}\int_{M}\psi(u)u^{2} & \le & \frac{1}{\varepsilon^{3}}\|\psi\|_{H^{1}(M)}\left(\int_{M}u^{12/5}\right)^{5/6}\le C\frac{1}{\varepsilon^{3}}\left(\int_{M}u^{12/5}\right)^{5/3}\\
 & \le & C\varepsilon^{2}|u|_{12/5,\varepsilon}^{4}\le C\varepsilon^{2}\|u\|_{\varepsilon}^{4}\le C\varepsilon^{2}
\end{eqnarray*}
because $\|u\|_{\varepsilon}$ is bounded since $u\in{\mathcal N}_{\varepsilon}\cap I_{\varepsilon}^{m_{\infty}+\delta}$.

Hence, provided we choose $\varepsilon(\delta_{0})$ small enough,
we have 
\[
\left(\frac{p-2}{2p}\right)|u^{+}|_{p,\varepsilon}^{p}\le m_{\infty}+2\delta_{0}.
\]
 So, 
\[
\frac{\frac{1}{\varepsilon^{3}}\int_{B(q,r/2)}(u^{+})^{p}}{|u^{+}|_{p,\varepsilon}^{p}}>\frac{1-\eta}{1+2\delta_{0}/m_{\infty}}
\]
Finally, 
\begin{eqnarray*}
|\beta(u)-q| & \le & \frac{\left|\frac{1}{\varepsilon^{3}}\int_{M}(x-q)(u^{+})^{p}\right|}{|u^{+}|_{p,\varepsilon}^{p}}\\
 & \le & \frac{\left|\frac{1}{\varepsilon^{3}}\int_{B(q,r/2)}(x-q)(u^{+})^{p}\right|}{|u^{+}|_{p,\varepsilon}^{p}}+\frac{\left|\frac{1}{\varepsilon^{3}}\int_{M\smallsetminus B(q,r/2)}(x-q)(u^{+})^{p}\right|}{|u^{+}|_{p,\varepsilon}^{p}}\\
 & \le & \frac{r}{2}+2\text{diam}(M)\left(1-\frac{1-\eta}{1+2\delta_{0}/m_{\infty}}\right),
\end{eqnarray*}
where $\text{diam}(M)$ is the diameter of the manifold as a subset
of $\mathbb{R}^{N}$. Thus, choosing $\eta$, $\delta_{0}$ and $\varepsilon(\delta_{0})$
small enough we proved the first claim. The second claim is standard.
\end{proof}

\section{\label{sec:Profile-description}Profile description}

Let $u_{\varepsilon}$ a low energy solution. By standard regularity
theory we can prove that $u_{\varepsilon}\in C^{2}(M)$. So there
exists at least one maximum point of $u_{\varepsilon}$ on $M$. We
can prove that, for $\varepsilon$ small, $u_{\varepsilon}$ has a
unique local maximum point $P_{\varepsilon}$ and we can describe
the profile of $u_{\varepsilon}$.
\begin{lem}
Let $(u_{\varepsilon},\psi(u_{\varepsilon}))$ be solution of (\ref{eq:kgms})
such that $I_{\varepsilon}(u_{\varepsilon})\le m_{\infty}+\delta<2m_{\infty}$.
Then, for $\varepsilon$ small, $u_{\varepsilon}$ is not constant
on $M$.\end{lem}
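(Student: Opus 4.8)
The plan is to compare the energy of a putative constant solution with the low-energy threshold $m_\infty+\delta<2m_\infty$, and show that a constant solution is forced to sit at energy level at least $2m_\infty$, a contradiction. First I would observe that if $u_\varepsilon\equiv c$ is a nonzero constant solution of \eqref{eq:kgms}, then $\psi(u_\varepsilon)$ is also constant, say $\psi(u_\varepsilon)\equiv d$, because the constant function solving $-\Delta_g\psi+(1+q^2c^2)\psi=qc^2$ is $d=qc^2/(1+q^2c^2)$, and uniqueness of the solution of \eqref{eq:psikgm} forces $\psi(u_\varepsilon)$ to equal this constant. Plugging back into the first equation of \eqref{eq:kgms} gives an algebraic relation $ac=c^{p-1}+\omega^2(qd-1)^2c$, i.e. $c^{p-2}=a-\omega^2(qd-1)^2$; in particular $c$ is bounded away from $0$ uniformly in $\varepsilon$ (it does not depend on $\varepsilon$ at all).

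The key step is the energy lower bound. Since $u_\varepsilon\equiv c$, we have $|u_\varepsilon^+|_{\varepsilon,p}^p=c^p\,\mathrm{vol}_g(M)/\varepsilon^3$, $\|u_\varepsilon\|_\varepsilon^2=(a-\omega^2)c^2\,\mathrm{vol}_g(M)/\varepsilon^3$, and the term $\frac{\omega^2q}{\varepsilon^3}\int_M u_\varepsilon^2\psi(u_\varepsilon)\,d\mu_g = \omega^2 q c^2 d\,\mathrm{vol}_g(M)/\varepsilon^3\ge 0$. Using the Nehari identity (Remark \ref{rem:nehari}) one has
\[
I_\varepsilon(u_\varepsilon)=\left(\frac12-\frac1p\right)|u_\varepsilon^+|_{p,\varepsilon}^p+\frac12\frac{\omega^2q^2}{\varepsilon^3}\int_M u_\varepsilon^2\psi^2(u_\varepsilon)\,d\mu_g-\frac12\frac{\omega^2q}{\varepsilon^3}\int_M u_\varepsilon^2\psi(u_\varepsilon)\,d\mu_g.
\]
Rather than fight with the sign of the last two terms, I would use the first expression in Remark \ref{rem:nehari}, where all three terms are manifestly nonnegative (recall $0<q\psi(u_\varepsilon)\le 1$, so $\frac12-\frac2p<\frac12$ but the combined coefficient of $\int u^2\psi$, together with the $\psi^2$ term, is controlled): more cleanly, since $0\le q\psi(u_\varepsilon)\le1$ we have $\omega^2(1-q\psi(u_\varepsilon))^2u_\varepsilon^2\ge0$, so from $I_\varepsilon(u_\varepsilon)=\frac12\|u_\varepsilon\|_\varepsilon^2+\frac{\omega^2}{2}G_\varepsilon(u_\varepsilon)-\frac1p|u_\varepsilon^+|_{\varepsilon,p}^p$ and $N_\varepsilon(u_\varepsilon)=0$ one gets $I_\varepsilon(u_\varepsilon)\ge(\frac12-\frac1p)\|u_\varepsilon\|_\varepsilon^2 = \frac{p-2}{2p}(a-\omega^2)c^2\,\mathrm{vol}_g(M)\,\varepsilon^{-3}$. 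As $\varepsilon\to0$ this blows up, which already contradicts $I_\varepsilon(u_\varepsilon)\le m_\infty+\delta$ for $\varepsilon$ small.

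The main obstacle, and the reason the statement is phrased with the precise bound $2m_\infty$ rather than just ``for $\varepsilon$ small'', is that one wants the conclusion to hold for \emph{every} $\varepsilon$ for which a low-energy solution exists, not merely asymptotically; so I would instead argue: suppose $u_\varepsilon\equiv c\neq0$. Testing $N_\varepsilon(u_\varepsilon)=0$ gives $c^{p-2}=(a-\omega^2)+\omega^2 q(2-qd)d\ge a-\omega^2$, whence $c\ge(a-\omega^2)^{1/(p-2)}$, a bound independent of $\varepsilon$. Therefore $I_\varepsilon(u_\varepsilon)\ge\frac{p-2}{2p}(a-\omega^2)c^2\,\mathrm{vol}_g(M)\,\varepsilon^{-3}\to+\infty$, so for $\varepsilon$ small enough $I_\varepsilon(u_\varepsilon)>m_\infty+\delta$, contradicting the hypothesis. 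Hence $u_\varepsilon$ cannot be constant. The only delicate point is making sure that a constant critical point of $I_\varepsilon$ is really excluded even for moderate $\varepsilon$; but since the hypothesis already fixes the sublevel $I_\varepsilon(u_\varepsilon)\le m_\infty+\delta$, the divergence of the constant's energy as $\varepsilon\to0$ is exactly what is needed, and no uniform-in-$\varepsilon$ statement beyond ``$\varepsilon$ small'' is actually claimed in the lemma.
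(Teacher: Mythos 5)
Your proposal is correct and follows essentially the same route as the paper: a constant solution of (\ref{eq:kgms}) forces $\psi(u_\varepsilon)$ constant as well, the constants are determined by $a,\omega,q,p$ alone (and in particular are bounded below independently of $\varepsilon$), so the Nehari-type identity of Remark \ref{rem:nehari} makes $I_\varepsilon(u_\varepsilon)$ scale like $\varepsilon^{-3}$ and blow up, contradicting $I_\varepsilon(u_\varepsilon)\le m_\infty+\delta$ for $\varepsilon$ small. Your explicit lower bound $c\ge(a-\omega^2)^{1/(p-2)}$ is a harmless refinement of the paper's observation that the constant does not depend on $\varepsilon$; just note that your inequality $I_\varepsilon(u_\varepsilon)\ge\left(\frac12-\frac1p\right)\|u_\varepsilon\|_\varepsilon^2$ uses $p\ge4$ (so that $\frac12-\frac2p\ge0$), not the sign of $\omega^2(1-q\psi)^2u^2$.
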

\begin{proof}
At first we notice that if $u_{\varepsilon}$ is constant, also $\psi(u_{\varepsilon})$
is constant. Moreover, by (\ref{eq:kgms}) the values of $u_{\varepsilon}$
and $\psi(u_{\varepsilon})$ depend only on $a,\omega,q$ and $p$.
Let $u_{\varepsilon}=u_{0}$ and $\psi(u_{\varepsilon})=\psi_{0}$.
Immediatly we have
\begin{multline*}
I_{\varepsilon}(u_{\varepsilon})=\left(\frac{1}{2}-\frac{1}{p}\right)\frac{1}{\varepsilon^{3}}\int_{M}(a-\omega^{2})u_{0}^{2}d\mu_{g}\\
+\left(\frac{1}{2}-\frac{2}{p}\right)\frac{\omega^{2}q}{\varepsilon^{3}}\int_{M}u_{0}^{2}\psi_{0}d\mu_{g}+\frac{\omega^{2}q^{2}}{\varepsilon^{3}p}\int_{M}u_{0}^{2}\psi_{0}^{2}d\mu_{g}\rightarrow+\infty
\end{multline*}
which leads us to a contradiction.\end{proof}
\begin{lem}
Let $x_{0}\in M$ be a maximum point for $u_{\varepsilon}$ solution
of (\ref{eq:kgms}). Then 
\begin{equation}
\left(u_{\varepsilon}(x_{0})\right)^{p-2}>a-\omega^{2}\label{eq:maxval}
\end{equation}
\end{lem}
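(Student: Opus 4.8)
The plan is to argue pointwise at $x_{0}$, using only the first equation of (\ref{eq:kgms}) together with the maximum--principle bound $0<\psi(u_{\varepsilon})\le 1/q$ recalled earlier. Since $u_{\varepsilon}\in C^{2}(M)$ and $M$ is compact and boundaryless, $x_{0}$ is an interior maximum point: in geodesic normal coordinates centred at $x_{0}$ the Hessian $D^{2}u_{\varepsilon}(x_{0})$ is negative semidefinite and the metric coincides with the Euclidean one at $x_{0}$, so $\Delta_{g}u_{\varepsilon}(x_{0})\le 0$ (recall the convention $\Delta_{g}=\mathrm{div}_{g}\nabla_{g}$ used in this paper, for which $\Delta_{g}$ is negative at an interior maximum). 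Hence $\varepsilon^{2}\Delta_{g}u_{\varepsilon}(x_{0})\le 0$.

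Next I would evaluate the equation $-\varepsilon^{2}\Delta_{g}u_{\varepsilon}+au_{\varepsilon}=u_{\varepsilon}^{p-1}+\omega^{2}(q\psi(u_{\varepsilon})-1)^{2}u_{\varepsilon}$ at $x_{0}$, which gives
\[
a\,u_{\varepsilon}(x_{0})=\varepsilon^{2}\Delta_{g}u_{\varepsilon}(x_{0})+u_{\varepsilon}(x_{0})^{p-1}+\omega^{2}\bigl(q\psi(u_{\varepsilon})(x_{0})-1\bigr)^{2}u_{\varepsilon}(x_{0}).
\]
Using $\varepsilon^{2}\Delta_{g}u_{\varepsilon}(x_{0})\le 0$ and then dividing by $u_{\varepsilon}(x_{0})>0$ one obtains
\[
a\le u_{\varepsilon}(x_{0})^{p-2}+\omega^{2}\bigl(q\psi(u_{\varepsilon})(x_{0})-1\bigr)^{2}.
\]

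To finish, recall that by the maximum principle for (\ref{eq:psikgm}) we have $0<\psi(u_{\varepsilon})\le 1/q$ on $M$, so $0\le 1-q\psi(u_{\varepsilon})(x_{0})<1$ and hence $\bigl(q\psi(u_{\varepsilon})(x_{0})-1\bigr)^{2}<1$; therefore $\omega^{2}\bigl(q\psi(u_{\varepsilon})(x_{0})-1\bigr)^{2}<\omega^{2}$. Combining with the previous inequality yields $u_{\varepsilon}(x_{0})^{p-2}\ge a-\omega^{2}\bigl(q\psi(u_{\varepsilon})(x_{0})-1\bigr)^{2}>a-\omega^{2}$, which is (\ref{eq:maxval}).

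The argument is essentially routine; the only two points deserving care are the sign of $\Delta_{g}u_{\varepsilon}$ at an interior maximum -- a second-derivative test on the manifold, immediate in normal coordinates -- and the strictness of the last inequality, which needs $\psi(u_{\varepsilon})(x_{0})$ to be strictly positive, not merely nonnegative. This strict positivity is guaranteed by the strong maximum principle applied to (\ref{eq:psikgm}), whose right-hand side $qu_{\varepsilon}^{2}$ is strictly positive since $u_{\varepsilon}>0$. (If $\omega=0$ the estimate only gives $u_{\varepsilon}(x_{0})^{p-2}\ge a$, and strictness then fails exactly for the constant solution $u_{\varepsilon}\equiv a^{1/(p-2)}$, which can be seen by comparing $u_{\varepsilon}$ with that constant near $x_{0}$.)
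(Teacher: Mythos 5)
Your proposal is correct and follows essentially the same route as the paper: evaluate the first equation at the interior maximum, use $\varepsilon^{2}\Delta_{g}u_{\varepsilon}(x_{0})\le 0$, and conclude via $0<\psi(u_{\varepsilon})\le 1/q$, which gives $\left(q\psi(u_{\varepsilon})(x_{0})-1\right)^{2}<1$. Your extra remarks on the strict positivity of $\psi(u_{\varepsilon})$ and on the degenerate case $\omega=0$ (where only the non-strict inequality survives, which is in fact all that is used later) are a careful refinement of the same argument, not a different one.
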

\begin{proof}
Since $x_{0}$ is a maximum point, $\Delta_{g}u_{\varepsilon}(x_{0})\le0$.
Thus 
\[
0\ge\varepsilon^{2}\Delta_{g}u_{\varepsilon}(x_{0})=u_{\varepsilon}(x_{0})\left[a-\left(u_{\varepsilon}(x_{0})\right)^{p-2}-\omega^{2}\left(q\psi(u_{\varepsilon})(x_{0})-1\right)^{2}\right]
\]
and, recalling that $|q\psi(u_{\varepsilon})-1|<1$,
\[
a\le\left(u_{\varepsilon}(x_{0})\right)^{p-2}+\omega^{2}\left(q\psi(u_{\varepsilon})(x_{0})-1\right)^{2}\le\left(u_{\varepsilon}(x_{0})\right)^{p-2}+\omega^{2}
\]
that concludes the proof.\end{proof}
\begin{lem}
\label{lem:2max}Let $u_{\varepsilon}$ be a solution of (\ref{eq:kgms})
such that $I_{\varepsilon}(u_{\varepsilon})\le m_{\infty}+\delta<2m_{\infty}$.
Suppose that $u_{\varepsilon}$ for every $\varepsilon$ has two maximum
points $P_{\varepsilon}^{1},P_{\varepsilon}^{2}\in M$. Then, when
$\varepsilon$ is sufficiently small, 
\[
d_{g}(P_{\varepsilon}^{1},P_{\varepsilon}^{2})\rightarrow0\text{ as }\varepsilon\rightarrow0.
\]
\end{lem}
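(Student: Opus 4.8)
The plan is to argue by contradiction. Suppose there exist a sequence $\varepsilon_{k}\to0$ and a number $d_{0}>0$ with $d_{g}(P^{1}_{\varepsilon_{k}},P^{2}_{\varepsilon_{k}})\ge 2d_{0}$ for every $k$; write $u_{k}:=u_{\varepsilon_{k}}$. Since $u_{k}\in{\mathcal N}_{\varepsilon_{k}}$ and $I_{\varepsilon_{k}}(u_{k})\le m_{\infty}+\delta$, the first identity in Remark \ref{rem:nehari} together with $p\ge4$ (so that both correction terms are nonnegative) gives $\|u_{k}\|_{\varepsilon_{k}}^{2}\le\frac{2p}{p-2}(m_{\infty}+\delta)$, hence $\|u_{k}\|_{\varepsilon_{k}}$ is bounded. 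Arguing as in Lemma \ref{lem:w-psi} (i.e. as in the estimate of $\frac{1}{\varepsilon^{3}}\int_{M}\psi(u)u^{2}$ used after Proposition \ref{prop:baricentro}), boundedness of $\|u_{k}\|_{\varepsilon_{k}}$ yields
\[
\frac{1}{\varepsilon_{k}^{3}}\int_{M}u_{k}^{2}\psi(u_{k})\,d\mu_{g}=O(\varepsilon_{k}^{2}),\qquad\frac{1}{\varepsilon_{k}^{3}}\int_{M}u_{k}^{2}\psi^{2}(u_{k})\,d\mu_{g}=O(\varepsilon_{k}^{4}),
\]
so all the $\psi$-terms entering the energy are negligible as $k\to\infty$.

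Next I would blow up $u_{k}$ around each maximum point: for $i=1,2$ set $w_{k}^{i}(z):=u_{k}\bigl(\exp_{P^{i}_{\varepsilon_{k}}}(\varepsilon_{k}z)\bigr)\chi_{r}(\varepsilon_{k}|z|)$ for $z\in B(0,r/\varepsilon_{k})$, extended by $0$ elsewhere. From $\|w_{k}^{i}\|_{H^{1}(\mathbb{R}^{3})}\le C\|u_{k}\|_{\varepsilon_{k}}\le C$ we get, up to a subsequence, $w_{k}^{i}\rightharpoonup w^{i}$ in $H^{1}(\mathbb{R}^{3})$ and strongly in $L^{t}_{\mathrm{loc}}(\mathbb{R}^{3})$, $2\le t<6$. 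Repeating verbatim the computation in the proof of Proposition \ref{prop:baricentro} — the rescaled potential $\widetilde{\psi}_{k}$ converges weakly to a solution of $-\Delta\bar\psi=0$, hence to $0$, and here $I'_{\varepsilon_{k}}(u_{k})=0$ exactly, so the term playing the role of (\ref{eq:stella}) simply vanishes — one finds that $w^{i}\ge0$ is a weak solution of (\ref{eq:weq}), i.e. $-\Delta w^{i}+(a-\omega^{2})w^{i}=(w^{i})^{p-1}$ on $\mathbb{R}^{3}$; testing with $w^{i}$ gives $\|w^{i}\|_{a}^{2}=\|w^{i}\|_{p}^{p}$, so either $w^{i}\equiv0$ or $w^{i}\in{\mathcal N}_{\infty}$.

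The crucial point — and the main obstacle — is to show $w^{i}\not\equiv0$. Here I would exploit the lower bound (\ref{eq:maxval}) at a maximum point: $w_{k}^{i}(0)=u_{k}(P^{i}_{\varepsilon_{k}})\ge(a-\omega^{2})^{1/(p-2)}$. The functions $w_{k}^{i}$ solve an elliptic equation whose coefficients $g^{lm}_{P^{i}_{\varepsilon_{k}}}(\varepsilon_{k}z)=\delta_{lm}+O(\varepsilon_{k}^{2}|z|)$ are locally uniformly bounded and uniformly elliptic and whose right-hand side is bounded in $L^{6/(p-1)}_{\mathrm{loc}}(\mathbb{R}^{3})$; a standard bootstrap through the elliptic $L^{q}$-estimates then provides a uniform $C^{0,\alpha}_{\mathrm{loc}}(\mathbb{R}^{3})$ bound on $w_{k}^{i}$, so the convergence $w_{k}^{i}\to w^{i}$ is locally uniform and $w^{i}(0)\ge(a-\omega^{2})^{1/(p-2)}>0$. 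Hence $w^{i}\in{\mathcal N}_{\infty}$ and, by the definition of $m_{\infty}$, $\|w^{i}\|_{p}^{p}=\|w^{i}\|_{a}^{2}\ge\frac{2p}{p-2}m_{\infty}$ for $i=1,2$.

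Finally I would add up the energy carried near the two points. Fix $\eta\in(0,1)$ small; using $w^{i}\in{\mathcal N}_{\infty}$ choose $T>0$ with $\int_{B(0,T)}(w^{i})^{p}\,dz>(1-\eta)\frac{2p}{p-2}m_{\infty}$ for $i=1,2$, and note that by strong $L^{p}_{\mathrm{loc}}$ convergence the same holds with $w_{k}^{i}$ in place of $w^{i}$ for $k$ large. Changing variables $x=\exp_{P^{i}_{\varepsilon_{k}}}(\varepsilon_{k}z)$ and using $|g_{P^{i}_{\varepsilon_{k}}}(\varepsilon_{k}z)|^{1/2}=1+O(\varepsilon_{k}^{2}|z|)$ (and $\chi_{r}(\varepsilon_{k}|z|)=1$ for $|z|<T$, $k$ large),
\[
\frac{1}{\varepsilon_{k}^{3}}\int_{B_{g}(P^{i}_{\varepsilon_{k}},\varepsilon_{k}T)}(u_{k}^{+})^{p}\,d\mu_{g}\ge(1-2\eta)\frac{2p}{p-2}m_{\infty},\qquad i=1,2,
\]
for $k$ large. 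Since $\varepsilon_{k}T\to0<2d_{0}$, for $k$ large the balls $B_{g}(P^{1}_{\varepsilon_{k}},\varepsilon_{k}T)$ and $B_{g}(P^{2}_{\varepsilon_{k}},\varepsilon_{k}T)$ are disjoint, so from the second identity in Remark \ref{rem:nehari}, discarding the nonnegative $\psi^{2}$-term and bounding the remaining $\psi$-term by $O(\varepsilon_{k}^{2})$,
\[
I_{\varepsilon_{k}}(u_{k})\ge\frac{p-2}{2p}\Bigl(\frac{1}{\varepsilon_{k}^{3}}\int_{B_{g}(P^{1}_{\varepsilon_{k}},\varepsilon_{k}T)}(u_{k}^{+})^{p}\,d\mu_{g}+\frac{1}{\varepsilon_{k}^{3}}\int_{B_{g}(P^{2}_{\varepsilon_{k}},\varepsilon_{k}T)}(u_{k}^{+})^{p}\,d\mu_{g}\Bigr)-O(\varepsilon_{k}^{2})\ge2(1-2\eta)m_{\infty}-o(1).
\]
Choosing $\eta$ so small that $2(1-2\eta)m_{\infty}>m_{\infty}+\delta$ — possible because $\delta<m_{\infty}$ — contradicts $I_{\varepsilon_{k}}(u_{k})\le m_{\infty}+\delta$ for $k$ large, and the lemma follows. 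Everything except the non-vanishing $w^{i}\not\equiv0$ is a repetition of the concentration–compactness machinery of Proposition \ref{prop:baricentro}; that non-vanishing step, which crucially uses (\ref{eq:maxval}) and elliptic regularity, is the one requiring genuine care.
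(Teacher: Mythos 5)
Your proposal is correct and follows essentially the same strategy as the paper: blow up $u_{\varepsilon_k}$ around each of the two maximum points, use the lower bound (\ref{eq:maxval}) together with uniform elliptic estimates to rule out vanishing of the rescaled limits, show each limit profile lies on ${\mathcal N}_{\infty}$ and hence carries energy at least $m_{\infty}$, and conclude that two peaks separated on a scale much larger than $\varepsilon$ force $I_{\varepsilon_k}(u_{\varepsilon_k})$ close to $2m_{\infty}>m_{\infty}+\delta$, a contradiction. The only (harmless) difference is the final bookkeeping: you localize the $L^{p}$ mass via the second identity of Remark \ref{rem:nehari} plus the $O(\varepsilon^{2})$ bound on the $\psi$-term, and only need $C^{0,\alpha}_{\mathrm{loc}}$ bounds and the variational characterization of $m_{\infty}$, whereas the paper localizes the $\|\cdot\|_{a}$-energy of the limit profile $U$ using the first identity of Remark \ref{rem:nehari} (whose $\psi$-terms are nonnegative for $p\ge 4$) together with the $C^{2}_{\mathrm{loc}}$ convergence.
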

\begin{proof}
By contradiction, let $\left\{ \varepsilon_{j}\right\} _{j}$ a vanishing
sequence such that $P_{\varepsilon_{j}}^{1}\rightarrow P^{1}\in M$
and $P_{\varepsilon_{j}}^{2}\rightarrow P^{2}\in M$ and $P^{1}\neq P^{2}$.
We define $Q_{\varepsilon_{j}}^{i}\in\mathbb{R}^{n}$ such that
\[
P_{\varepsilon_{j}}^{i}=\exp_{P^{i}}(Q_{\varepsilon_{j}}^{i})\ \ i=1,2.
\]

we define 
\[
0<v^{1}(z):=u_{\varepsilon_{j}}\left(\exp_{P^{1}}(Q_{\varepsilon_{j}}^{1}+\varepsilon_{j}z)\right)=u_{\varepsilon_{j}}\left(\exp_{P^{1}}(y)\right)=u_{\varepsilon_{j}}(x)
\]
where $x\in B_{g}(P^{1},r)$ and $z\in\mathbb{R}^{n}$ such that $|Q_{\varepsilon_{j}}^{1}+\varepsilon_{j}z|<r$,
$r$ being the injectivity radius of $M$. We notice that, by definition,
$Q_{\varepsilon_{j}}^{1}\rightarrow0$ as $\varepsilon_{j}\rightarrow0$,
so in the following we simply assume $|z|<R/\varepsilon_{j}$ for
the sake of simplicity.

By (\ref{eq:maxval}) we have that $v^{1}(0)=u_{\varepsilon_{j}}(P_{\varepsilon_{j}}^{1})\ge a-\omega^{2}>0$,
moreover
\[
c\|v_{j}^{1}\|_{H^{1}(B(0,r/\varepsilon_{j}))}^{2}\le\|u_{\varepsilon_{j}}\|_{\varepsilon_{j}}^{2}\le\frac{2p}{p-2}I_{\varepsilon_{j}}(u_{\varepsilon_{j}})\le2m_{\infty}.
\]
We define
\[
\tilde{v}_{j}^{1}(z)=v_{j}^{1}(z)\chi_{r}(|Q_{\varepsilon_{j}}^{1}+\varepsilon_{j}z|)=u_{\varepsilon_{j}}(x)\chi_{r}(\exp_{P_{1}}^{-1}(x))\in H^{1}(\mathbb{R}^{3}).
\]
We have that 
\[
\|\tilde{v}_{j}^{1}\|_{H^{1}(\mathbb{R}^{3})}\le c\|v_{j}^{1}\|_{H^{1}(B(0,r/\varepsilon_{j}))}^{2}\le C
\]
thus there exists $\tilde{v}^{1}\in H^{1}(\mathbb{R}^{3})$ such that
$\tilde{v}_{j}^{1}\rightharpoonup\tilde{v}^{1}$ weakly in $H^{1}(\mathbb{R}^{3})$
and strongly in $L_{\text{loc}}^{t}(\mathbb{R}^{3})$ for $2\le t<6$,
and, as a consequence, $v_{j}^{1}\rightarrow\tilde{v}^{1}$ strongly
in $L_{\text{loc}}^{t}(\mathbb{R}^{3})$ for $2\le t<6$, in fact,
fixed $T$, for $\varepsilon_{j}$ sufficiently small $B(0,T)\subset\left\{ |z|<r/\varepsilon_{j}\right\} $.

Given $\varphi\in C_{0}^{\infty}(\mathbb{R}^{3})$, for $\varepsilon_{j}$
small we have that $\text{supp}\varphi\subset B\left(Q_{\varepsilon_{j}}^{1},\frac{r}{2\varepsilon_{j}}\right)$,
thus, since $u_{\varepsilon}$ is a solution of (\ref{eq:kgms}),

\begin{multline}
0=\int_{\text{supp}\varphi}\left[\sum_{il}g_{P^{1}}^{il}(Q_{\varepsilon_{j}}^{1}+\varepsilon_{j}z)\partial_{z_{i}}\tilde{v}_{j}^{1}(z)\partial_{z_{l}}\varphi(z)+(a-\omega^{2})\tilde{v}_{j}^{1}(z)\varphi(z)\right]|g_{P^{1}}((Q_{\varepsilon_{j}}^{1}+\varepsilon_{j}z))|^{1/2}dz\\
-\int_{\text{supp}\varphi}(\tilde{v}_{j}^{1}(z))^{p-1}\varphi(z)|g_{P^{1}}((Q_{\varepsilon_{j}}^{1}+\varepsilon_{j}z))|^{1/2}dz\\
+q\omega^{2}\int_{\text{supp}\varphi}\left(2-q\tilde{\psi}_{j}(z)\right)\tilde{\psi}_{j}(z)\tilde{v}_{j}^{1}(z)\varphi(z)|g_{P^{1}}((Q_{\varepsilon_{j}}^{1}+\varepsilon_{j}z))|^{1/2}dz\label{eq:Q1}
\end{multline}
where $\psi\left(u_{\varepsilon_{j}}\right)(x):=\psi_{j}(x)=\psi_{j}(\exp_{P^{1}}(Q_{\varepsilon_{j}}^{1}+\varepsilon_{j}z)):=\tilde{\psi}_{j}(z)$.
Arguing as in the proof of Proposition \ref{prop:baricentro}, we
can prove that $\tilde{\psi}_{j}\rightarrow0$ in $L_{\text{loc}}^{t}(\mathbb{R}^{3})$
and that $\chi_{r}\tilde{\psi}_{j}\rightharpoonup0$ in $H^{1}(\mathbb{R}^{3})$.
By this, and by (\ref{eq:Q1}) we argue that, for all $\varphi\in C_{0}^{\infty}(\mathbb{R}^{3})$
\[
0=\int_{\text{supp}\varphi}\left[\nabla\tilde{v}^{1}(z)\nabla\varphi(z)+(a-\omega^{2})\tilde{v}^{1}(z)\varphi(z)-(\tilde{v}^{1}(z))^{p-1}\varphi(z)\right]dz,
\]
that is $\tilde{v}^{1}$ weakly solves 
\[
-\Delta\tilde{v}^{1}+(a-\omega^{2})\tilde{v}^{1}=(\tilde{v}^{1})^{p-1}\text{ on }\mathbb{R}^{3}.
\]
We can prove that $v_{j}^{1}\rightarrow\tilde{v}^{1}$ in $C_{\text{loc}}^{2}(\mathbb{R}^{3})$.
We know that, for $z\in B(0,r/2)$,

\begin{multline*}
\sum_{il}\partial_{z_{l}}\left(g_{P^{1}}^{il}(Q_{\varepsilon_{j}}^{1}+\varepsilon_{j}z)|g_{P^{1}}((Q_{\varepsilon_{j}}^{1}+\varepsilon_{j}z))|^{1/2}\partial_{z_{i}}v_{j}^{1}(z)\right)\\
=|g_{P^{1}}((Q_{\varepsilon_{j}}^{1}+\varepsilon_{j}z))|^{1/2}\left\{ -(a-\omega^{2})v_{j}^{1}(z)+(v_{j}^{1}(z))^{p-1}-q\omega^{2}\left(2-q\tilde{\psi}_{j}(z)\right)\tilde{\psi}_{j}(z)v_{j}^{1}(z)\right\} :=f(z);
\end{multline*}
it is easy to see that $f\in L^{\frac{6}{p-1}}$, thus we have
\[
\|v_{j}^{1}\|_{H^{2,\frac{6}{p-1}}(B(0,r/2))}\le\|v_{j}^{1}\|_{L^{\frac{6}{p-1}}(B(0,r/2))}+\|f\|_{L^{\frac{6}{p-1}}(B(0,r/2))}\le C
\]
and we can use a bootstrap argument to prove that indeed $f\in L^{s}$,
and $v_{j}^{1}\in H^{2,s}(B(0,r/2))$ for $s$ sufficiently large,
and, by Sobolev embedding, that $v_{j}^{1}\in C^{0,\theta}(B(0,r/2))$
for some $0<\theta<1$. The same argument may be used to prove that
$\tilde{\psi}_{j}$ is continuous. Then, by the Schauder estimates
(\cite{GT}, page 93), that is 

\[
\|v_{j}^{1}\|_{C^{2,\alpha}(B(0,\tilde{r}))}\le\|v_{j}^{1}\|_{C^{0,\theta}(B(0,r/2))}+\|f\|_{C^{0,\theta}(B(0,r/2))}\le C,
\]
we conclude that $v_{j}^{1}\rightarrow\tilde{v}^{1}$ in $C_{\text{loc}}^{2}(B(0,T))$
for all $T>0$ and, since $\tilde{v}^{1}(0)\ge a-\omega^{2}$, $\tilde{v}^{1}\not\equiv0$.
Thus $\tilde{v}^{1}=U$ positive and radially symmetric. We can repeat
the same argument for $P_{\varepsilon_{j}}^{2}$. Now, choose $\bar{R}$
such that
\[
\int_{B(0,\bar{R})}|\nabla U|^{2}+(a-\omega^{2})U^{2}>\frac{2p}{p-2}\cdot\frac{m_{\infty}+\delta}{2}.
\]
For $\varepsilon_{j}$ sufficiently small, we have that $\varepsilon_{j}\bar{R}\le\frac{d_{g}(P^{1},P^{2})}{2}$,
thus
\begin{eqnarray}
I_{\varepsilon_{j}}(u_{\varepsilon_{j}}) & \ge & \left(\frac{1}{2}-\frac{1}{p}\right)\|u_{\varepsilon_{j}}\|_{\varepsilon_{j}}^{2}\nonumber \\
 & \ge & \left(\frac{1}{2}-\frac{1}{p}\right)\frac{1}{\varepsilon_{j}^{3}}\int_{B_{g}(P^{1},\varepsilon_{j}\bar{R})\cup B_{g}(P^{2},\varepsilon_{j}\bar{R})}\varepsilon^{2}|\nabla_{g}u_{\varepsilon_{j}}|^{2}+(a-\omega^{2})u_{\varepsilon_{j}}^{2}\nonumber \\
 & \rightarrow & 2\left(\frac{1}{2}-\frac{1}{p}\right)\int_{B(0,\bar{R})}|\nabla U|^{2}+(a-\omega^{2})U^{2}>m_{\infty}+\delta\label{eq:2minfty}
\end{eqnarray}
which leads us to a contradiction.\end{proof}
\begin{lem}
\label{lem:unimax}Let $u_{\varepsilon}$ be a solution of (\ref{eq:kgms})
such that $I_{\varepsilon}(u_{\varepsilon})\le m_{\infty}+\delta<2m_{\infty}$.
Then, when $\varepsilon$ is sufficiently small, $u_{\varepsilon}$
has a unique maximum point.\end{lem}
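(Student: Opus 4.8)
The plan is to argue by contradiction, building directly on Lemma \ref{lem:2max}. Suppose that along a sequence $\varepsilon_j\to0$ the solution $u_{\varepsilon_j}$ has two distinct maximum points $P^1_j\neq P^2_j$. By Lemma \ref{lem:2max} we have $d_g(P^1_j,P^2_j)\to0$, so up to a subsequence $P^1_j,P^2_j\to P\in M$. I would then rescale around $P^1_j$, setting $v_j(z)=u_{\varepsilon_j}(\exp_{P^1_j}(\varepsilon_j z))$ for $|z|<r/\varepsilon_j$, and run the very same argument used in Lemma \ref{lem:2max}: the $H^1$ bound coming from $I_{\varepsilon_j}(u_{\varepsilon_j})\le m_\infty+\delta$, the vanishing in $L^p_{\mathrm{loc}}$ of the rescaled $\psi(u_{\varepsilon_j})$, the elliptic bootstrap and the Schauder estimates give $v_j\to U$ in $C^2_{\mathrm{loc}}(\mathbb{R}^3)$, where $U$ is the positive radial solution of $-\Delta U+(a-\omega^2)U=U^{p-1}$.

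Next I introduce the rescaled position of the second maximum, $z_j:=\varepsilon_j^{-1}\exp_{P^1_j}^{-1}(P^2_j)\neq0$; since both $P^1_j$ and $P^2_j$ are maxima of $u_{\varepsilon_j}$, both $0$ and $z_j$ are critical points of $v_j$. I would distinguish two cases according to the behaviour of $|z_j|$. If $|z_j|\to\infty$ along a subsequence, then for every fixed $\bar R$ the geodesic balls $B_g(P^i_j,\varepsilon_j\bar R)$, $i=1,2$, are disjoint for $j$ large, and the same energy estimate as in (\ref{eq:2minfty}) — using $I_{\varepsilon_j}(u_{\varepsilon_j})\ge(\tfrac12-\tfrac1p)\|u_{\varepsilon_j}\|_{\varepsilon_j}^2$ and restricting the integral to those two disjoint balls — gives $\liminf_j I_{\varepsilon_j}(u_{\varepsilon_j})\ge 2(\tfrac12-\tfrac1p)\int_{B(0,\bar R)}|\nabla U|^2+(a-\omega^2)U^2$; choosing $\bar R$ so that this integral exceeds $\frac{p}{p-2}(m_\infty+\delta)$, which is possible because $\int_{\mathbb{R}^3}|\nabla U|^2+(a-\omega^2)U^2=\frac{2p}{p-2}m_\infty$ and $\delta<m_\infty$, contradicts $I_{\varepsilon_j}(u_{\varepsilon_j})\le m_\infty+\delta$. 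Hence $z_j$ stays bounded; up to a subsequence $z_j\to z_0$, and $\nabla U(z_0)=0$ by $C^1_{\mathrm{loc}}$ convergence. Since $U$ is radial and strictly decreasing in $|z|$, its only critical point is the origin, so $z_0=0$.

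Finally I would exploit the non-degeneracy of the maximum of $U$ at the origin. By the strong maximum principle applied to $U(0)-U$ one gets the strict inequality $U(0)^{p-2}>a-\omega^2$ (this is (\ref{eq:maxval}) for the limit equation, now with strict sign), hence $\Delta U(0)=(a-\omega^2)U(0)-U(0)^{p-1}<0$; since $U$ is radial, $\nabla^2 U(0)=\tfrac13\Delta U(0)\,\mathrm{Id}$ is negative definite. By the $C^2_{\mathrm{loc}}$ convergence $v_j\to U$ there are $\rho>0$ and $j_0$ such that $\nabla^2 v_j(z)$ is negative definite for all $z\in B(0,\rho)$ and $j\ge j_0$; then $\nabla v_j$ is injective on $B(0,\rho)$, since $\nabla v_j(a)=\nabla v_j(b)$ with $a\neq b$ would force $0=\langle\nabla v_j(b)-\nabla v_j(a),b-a\rangle=\int_0^1\langle \nabla^2 v_j(a+t(b-a))(b-a),b-a\rangle\,dt<0$. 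But for $j$ large both $0$ and $z_j$ lie in $B(0,\rho)$ and both annihilate $\nabla v_j$, forcing $z_j=0$, i.e. $P^1_j=P^2_j$, a contradiction. Therefore, for $\varepsilon$ small, $u_\varepsilon$ has a unique maximum point. The delicate point is not any individual estimate but organizing the scale dichotomy for $z_j$ together with the $C^2$-convergence to a limit with non-degenerate maximum; the two-bubble energy lower bound and the $C^2$ estimates are already contained in the proof of Lemma \ref{lem:2max}.
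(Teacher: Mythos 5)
Your argument is correct and follows essentially the same route as the paper: a dichotomy on the rescaled distance $\varepsilon_j^{-1}d_g(P^1_j,P^2_j)$, ruling out the unbounded case by the two-bubble energy estimate of (\ref{eq:2minfty}) and the bounded case by $C^2_{\mathrm{loc}}$ convergence of the rescaled solutions to $U$. The only difference is presentational: you spell out the contradiction in the bounded case (strict inequality $U(0)^{p-2}>a-\omega^2$, negative definite Hessian, injectivity of $\nabla v_j$ on a small ball), which the paper's Step~1 leaves implicit after asserting the $C^2_{\mathrm{loc}}$ convergence.
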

\begin{proof}
Suppose that there exists a sequence $\varepsilon_{j}\rightarrow0$
such that $u_{\varepsilon_{j}}$ has at least two maximum points $P_{\varepsilon_{j}}^{1}$
and $P_{\varepsilon_{j}}^{2}$. By Lemma \ref{lem:2max} we know that
$d_{g}(P_{\varepsilon_{j}}^{1},P_{\varepsilon_{j}}^{2})\rightarrow0$. 

\textbf{Step 1.} It holds 
\begin{equation}
\lim_{j\rightarrow\infty}\frac{1}{\varepsilon_{j}}d_{g}(P_{\varepsilon_{j}}^{1},P_{\varepsilon_{j}}^{2})=+\infty\label{eq:limmax}
\end{equation}
Suppose, by contradiction, that $d_{g}(P_{\varepsilon_{j}}^{1},P_{\varepsilon_{j}}^{2})\le c\varepsilon_{j}$
for some $c>0$. Consider
\[
w_{\varepsilon_{j}}=u_{\varepsilon_{j}}(\exp_{P_{\varepsilon_{j}}^{1}}(\varepsilon_{j}z))\text{ with }|z|\le c.
\]
For any $j$, $w_{\varepsilon_{j}}$ has two maximum points in $B(0,c)$.
Moreover, we can argue, as in Lemma \ref{lem:2max} that $w_{\varepsilon_{j}}\rightarrow U$
in $C_{\text{loc}}^{2}(\mathbb{R}^{3})$ and this is a contradiction.

\textbf{Step 2.} We show that (\ref{eq:limmax}) leads to a contradiction. 

In light of (\ref{eq:limmax}) we have that, fixed $\rho>0$, then
$B_{g}(P_{\varepsilon_{j}}^{1},\rho\varepsilon_{j})\cap B_{g}(P_{\varepsilon_{j}}^{2},\rho\varepsilon_{j})=\emptyset$
for $j$ large. Then we proceed as in Lemma \ref{lem:2max} and we
get the contradiction.\end{proof}
\begin{lem}
\label{lem:stime}Write $u_{\varepsilon}=W_{P_{\varepsilon},\varepsilon}+\Psi_{\varepsilon}$
where $W_{P_{\varepsilon},\varepsilon}$ is defined in (\ref{eq:defweps}).
It holds that $\|\Psi_{\varepsilon}\|_{L^{\infty}(M)}\rightarrow0$
and for any $\rho>0$ $\|u_{\varepsilon}-W_{P_{\varepsilon},\varepsilon}\|_{C^{2}(B_{g}(P_{\varepsilon},\varepsilon\rho))}\rightarrow0$
as $\varepsilon\rightarrow0$.\end{lem}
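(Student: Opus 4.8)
The plan is to deduce both statements from a single local analysis at the peak, combined with the uniqueness of the maximum point established in Lemma~\ref{lem:unimax}. Set $v_{\varepsilon}(z):=u_{\varepsilon}(\exp_{P_{\varepsilon}}(\varepsilon z))$ for $|z|<r/\varepsilon$ and $\tilde{\psi}_{\varepsilon}(z):=\psi(u_{\varepsilon})(\exp_{P_{\varepsilon}}(\varepsilon z))$, where $P_{\varepsilon}$ is the unique maximum point of $u_{\varepsilon}$. First I would reproduce, essentially verbatim, the argument in the proof of Lemma~\ref{lem:2max}: the energy bound $I_{\varepsilon}(u_{\varepsilon})\le m_{\infty}+\delta$ yields $\|v_{\varepsilon}\|_{H^{1}(B(0,R))}\le C$ for every $R$, the computation leading to \eqref{eq:psitildek} gives $\tilde{\psi}_{\varepsilon}\to0$ in $L^{t}_{\mathrm{loc}}(\mathbb{R}^{3})$ and $\chi_{r/\varepsilon}\tilde{\psi}_{\varepsilon}\rightharpoonup0$ in $H^{1}(\mathbb{R}^{3})$, and then a bootstrap on the rescaled equation together with Schauder estimates shows $v_{\varepsilon}\to\tilde{v}$ in $C^{2}_{\mathrm{loc}}(\mathbb{R}^{3})$, where $\tilde{v}\ge0$ solves $-\Delta\tilde{v}+(a-\omega^{2})\tilde{v}=\tilde{v}^{p-1}$. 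Since $0$ is a maximum point of each $v_{\varepsilon}$, \eqref{eq:maxval} forces $\tilde{v}(0)\ge a-\omega^{2}>0$, so $\tilde{v}\not\equiv0$, hence, being a nontrivial nonnegative solution with a maximum at the origin, $\tilde{v}=U$; as the limit is always $U$, the whole family converges. For $\rho$ fixed and $\varepsilon$ so small that $\chi_{r}\equiv1$ on $B(0,\rho)$, the function $u_{\varepsilon}-W_{P_{\varepsilon},\varepsilon}$ read in the rescaled variable coincides with $v_{\varepsilon}-U$ on $B(0,\rho)$; therefore $\|u_{\varepsilon}-W_{P_{\varepsilon},\varepsilon}\|_{C^{2}(B_{g}(P_{\varepsilon},\varepsilon\rho))}\to0$, which is the second assertion, and in particular $\|\Psi_{\varepsilon}\|_{L^{\infty}(B_{g}(P_{\varepsilon},\varepsilon\rho))}\to0$.

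Next I would pass from the peak region to the whole manifold. On $M\smallsetminus B_{g}(P_{\varepsilon},\varepsilon\rho)$ we have $0\le W_{P_{\varepsilon},\varepsilon}\le U(\rho)$, because $U$ is radial and decreasing and $0\le\chi_{r}\le1$. To bound $u_{\varepsilon}$ there, let $y_{\varepsilon}$ realize the maximum of $u_{\varepsilon}$ over the closed set $M\smallsetminus B_{g}(P_{\varepsilon},\varepsilon\rho)$. If $d_{g}(y_{\varepsilon},P_{\varepsilon})>\varepsilon\rho$ then $y_{\varepsilon}$ would be an interior point of that set, hence a local maximum point of $u_{\varepsilon}$ on $M$ different from $P_{\varepsilon}$, contradicting Lemma~\ref{lem:unimax} for $\varepsilon$ small; thus $d_{g}(y_{\varepsilon},P_{\varepsilon})=\varepsilon\rho$, and writing $y_{\varepsilon}=\exp_{P_{\varepsilon}}(\varepsilon z_{\varepsilon})$ with $|z_{\varepsilon}|=\rho$ the $C^{2}_{\mathrm{loc}}$ convergence of the first step gives $\limsup_{\varepsilon\to0}\|u_{\varepsilon}\|_{L^{\infty}(M\smallsetminus B_{g}(P_{\varepsilon},\varepsilon\rho))}\le U(\rho)$. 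Consequently $\limsup_{\varepsilon\to0}\|\Psi_{\varepsilon}\|_{L^{\infty}(M\smallsetminus B_{g}(P_{\varepsilon},\varepsilon\rho))}\le 2U(\rho)$, while on $B_{g}(P_{\varepsilon},\varepsilon\rho)$ the first step already gives $\|\Psi_{\varepsilon}\|_{L^{\infty}}\to0$; hence $\limsup_{\varepsilon\to0}\|\Psi_{\varepsilon}\|_{L^{\infty}(M)}\le 2U(\rho)$ for every $\rho$, and letting $\rho\to\infty$ (so that $U(\rho)\to0$) we conclude $\|\Psi_{\varepsilon}\|_{L^{\infty}(M)}\to0$.

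The only technically delicate point is the $C^{2}_{\mathrm{loc}}$ convergence $v_{\varepsilon}\to U$: one must iterate elliptic $L^{q}$ estimates on the rescaled equation (this requires several steps when $p$ is close to $6$, since one application of $L^{q}$-regularity only improves the integrability exponent by a finite amount) until $v_{\varepsilon}$ is locally bounded and Hölder, while at the same time checking that the rescaled potential $\tilde{\psi}_{\varepsilon}$ is locally Hölder and tends to $0$, so that the coupling term $q\omega^{2}(2-q\tilde{\psi}_{\varepsilon})\tilde{\psi}_{\varepsilon}v_{\varepsilon}$ is both admissible in the Schauder estimate and negligible in the limit equation. As all of this has already been carried out in the proof of Lemma~\ref{lem:2max}, in writing the proof I would simply refer to that argument; the remaining ingredients — the pointwise bound on $W_{P_{\varepsilon},\varepsilon}$ away from the peak and the use of the uniqueness of the maximum point to bound $u_{\varepsilon}$ there — are elementary.
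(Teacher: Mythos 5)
Your proposal is correct and follows essentially the same route as the paper: the $C^{2}_{\mathrm{loc}}$ convergence of the rescaled solution to $U$ (borrowed from Lemma~\ref{lem:2max}) handles the ball $B_{g}(P_{\varepsilon},\varepsilon\rho)$, and the uniqueness of the maximum point from Lemma~\ref{lem:unimax} forces the maximum of $u_{\varepsilon}$ on the complement to occur on $\partial B_{g}(P_{\varepsilon},\varepsilon\rho)$, where it is controlled by $U(\rho)+o(1)$, exactly as in the paper's proof. Your version merely spells out a couple of points the paper leaves implicit (the bound on $W_{P_{\varepsilon},\varepsilon}$ away from the peak and the final limit $\rho\rightarrow\infty$ using the decay of $U$).
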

\begin{proof}
By the $C^{2}$ convergence proved in Lemma \ref{lem:2max} we have
that, given $\rho>0$, 
\[
\|u_{\varepsilon}-W_{P_{\varepsilon},\varepsilon}\|_{C^{2}(B_{g}(P_{\varepsilon},\varepsilon\rho))}=\|u_{\varepsilon}(\exp_{P_{\varepsilon}}(\varepsilon z))-U(z)\|_{C^{2}(B(0,\rho))}\rightarrow0
\]
 as $\varepsilon\rightarrow0$. Moreover, since $u_{\varepsilon}$
has a unique maximum point by Lemma \ref{lem:unimax}, we have that,
for any $\rho>0$, 
\[
\max_{x\in M\smallsetminus B_{g}(P_{\varepsilon},\varepsilon\rho)}u_{\varepsilon}(x)=\max_{x\in\partial B_{g}(P_{\varepsilon},\varepsilon\rho)}u_{\varepsilon}(x)=\max_{|z|=\rho}U(z)+\sigma_{}(\varepsilon)\le ce^{-\alpha\rho}+\sigma_{1}(\varepsilon)
\]
 for some constant $c,\alpha>0$ and for some $\sigma_{1}(\varepsilon)\rightarrow0$
for $\varepsilon\rightarrow0$. This proves the claim.\end{proof}

\end{document}